\def\amsbb{\use@mathgroup \M@U \symAMSb}
\theoremstyle{definition}
\newtheorem{defn}{Definition}[section]
\newtheorem{thrm}{Theorem}[section]
\newtheorem{cor}{Corollary}[section]
\newtheorem{prop}{Proposition}[section]
\newtheorem{lem}{Lemma}[section]
\newtheorem{ass}{Assumption}[section]
\newtheorem{ex}{Example}[section]
\newtheorem{remark}{Remark}[section]
\newtheorem{property}{Property}[section]
	\newcommand{\pp}{\amsbb{P}}
			\NewDocumentCommand\qq{o}
				{%
					\amsbb{Q}\IfValueT{#1}{_{#1}}
				}
	\newcommand{\rr}{\amsbb{R}}
	\newcommand{\nn}{\amsbb{N}}
	\newcommand{\aaaa}{\mathscr{A}}
	\newcommand{\cccc}{\mathscr{C}}	
	\newcommand{\mmmm}{\mathscr{M}}	
	\newcommand{\qqqq}{\mathscr{Q}}	
	\newcommand{\rrrr}{\mathscr{R}}	
	\newcommand{\xxxxx}{\mathscr{X}}
	\newcommand{\yyyyy}{\mathscr{Y}}
\NewDocumentCommand\NN{ooo}{
				{
				\mathcal{NN}^{\sigma}_{
					\IfNoValueT{#1}{\infty}\IfValueT{#1}{{#1}}
					,\IfNoValueT{#3}{d}\IfValueT{#3}{{#3}}
					,
				\IfNoValueT{#2}{d}\IfValueT{#2}{{#2}}
					}
				}
		}
\NewDocumentCommand\ES{oo}{\operatorname{ES}
							\IfValueT{#2}{_{#2}}\IfValueF{#2}{_{a}}
							\IfValueT{#1}{\left({#1}\right)}
							}
\NewDocumentCommand\ledot{}{_{\cdot}}
\NewDocumentCommand\utility{}{
	{{
		\mathcal{U}%
	}}
}
\NewDocumentCommand\riskutility{o}{
	\IfValueF{#1}{{\rho^{\utility}}}
	\IfValueT{#1}{{{\tilde{\rho}}^{\utility}}}
}
\NewDocumentCommand\px{o}{
					{{\pi}^{\utility,\lambda}_{\pp,\rho}\left(
						\IfValueF{#1}{{X}}\IfValueT{#1}{{#1}}
						\right)}
						}
\NewDocumentCommand{\dlprod}{oo}{
	\langle
	\IfValueT{#1}{#1}
	,
	\IfValueT{#2}{#2}
	\rangle
}
\NewDocumentCommand{\dom}{mo}{
	\IfValueT{#2}{\operatorname{int}\left(}
	\operatorname{dom}\left({#1}\right)
	\IfValueT{#2}{\left)}
}
		\NewDocumentCommand\epi{m}{
		\operatorname{epi}\left(
		{#1}
		\right)
	}
		\NewDocumentCommand\eemes{moo} %
		{%
			\amsbb{E}\IfValueT{#2}{_{#2}}\IfValueT{#3}{^{#3}}
			\left[#1
			\right]%
		}
		\NewDocumentCommand\ee{mooo} %
		{%
			\amsbb{E}\IfValueT{#3}{_{#3}}\IfValueT{#4}{^{#4}}
			\left[#1
				\IfValueT{#2}{\middle|{#2}}
			\right]%
		}
			\NewDocumentCommand\eee{moo} %
		{%
			\mathscr{E}\IfValueF{#3}{_{\qqqq}}\IfValueT{#3}{_{#3}}
			\left(#1
			\IfValueT{#2}{\middle|{#2}}
			\right)%
		}
		\NewDocumentCommand\eeRAp{m}{
			\rrrr\left(
			{#1}
			\right)
			}
		\NewDocumentCommand\eeRA{moooo}
		{%
			\IfValueT{#4}{{#4}}\IfValueF{#4}{\rrrr}
			\IfValueT{#5}{_{#5}}%
			\IfValueT{#3}{^{#3}}\IfValueF{#3}{^{\lambda}}
			\left(#1
			\IfValueT{#2}{\middle|{#2}}
			\right)%
		}
						\NewDocumentCommand\eeRArn{mooo}
				{%
					\IfValueT{#4}{{#4}}\IfValueF{#4}{\rho}_{\qq}
					\IfValueT{#3}{^{#3}}\IfValueF{#3}{^{\utility,\lambda}}
					\left(#1
					\IfValueT{#2}{\middle|{#2}}
					\right)%
				}
	\NewDocumentCommand\eenergy{mooo}
	{%
			\mathscr{E}\IfValueT{#3}{^{#3}}\IfValueF{#3}{^{\utility,\lambda}}
		\IfValueT{#4}{_{#4}}\IfValueF{#4}{_{\pp,\rho}}
		\left[#1
		\IfValueT{#2}{\middle|{#2;\Phi}}
		\right]%
	}
\NewDocumentCommand\PH{o}{
	{\operatorname{PH}
		\IfValueT{#1}{\left({#1}\right)}
	}
}
\NewDocumentCommand\SparseSpace{oooo}{
		{
		\mathcal{S}^{2,
		\IfValueT{#4}{{#4}}
		\IfValueF{#4}{{d}}
		}
		\IfValueT{#1}{_{\qq,#1}}
		\IfValueF{#1}{_{\qq,\epsilon}}
		\left(
		\IfValueT{#2}{{#2}}
		\IfValueF{#2}{{\mathscr{F}}}
			,
		\IfValueT{#3}{{\rrd[{#3}]}}
		\IfValueF{#3}{{\rrd[D]}}
		\right)
		}
	}
		\NewDocumentCommand\SparseSpacesigma{ooo}{
			{
				\mathcal{S}^{2,d}
				\IfValueT{#2}{_{\qq,#2}}
				\IfValueF{#2}{_{\qq,\epsilon}}
				\left(
				\IfValueT{#1}{{#1}}
				\IfValueF{#1}{{\mathscr{G}}}
				,
				\IfValueT{#3}{{\rrd[{#3}]}}
				\IfValueF{#3}{{\rrd[D]}}
				\right)
			}
		}
			\NewDocumentCommand\arginf{mo}
			{
				\underset{#1}{
			\IfValueT{#2}{\operatorname{argmin}}\IfValueF{#2}{\operatorname{arginf}}
			}	\,
			}
			\NewDocumentCommand\argsup{mo}
			{
				\underset{#1}{
					\IfValueT{#2}{\operatorname{argmax}}\IfValueF{#2}{\operatorname{argsup}}
				}	\,
			}
			\NewDocumentCommand\esssup{mo}
			{
				\underset{#1}{
					\IfValueT{#2}{\operatorname{essmax}}\IfValueF{#2}{\operatorname{esssup}}
				}	\,
			}
\NewDocumentCommand\Glim{oo}{\Gamma\mbox{-}\lim\limits_
	\IfValueT{#1}{{{#1} \uparrow {#2}}}\IfValueF{#1}{{n \uparrow \infty}}
}
\NewDocumentCommand\Klim{oo}{\operatorname{K}\mbox{-}\lim\limits_
	\IfValueT{#1}{{{#1} \uparrow {#2}}}\IfValueF{#1}{{n \uparrow \infty}}
}
			\NewDocumentCommand\mmm{oo}{
				\mathscr{M}^2_{0,loc}\left(
				\IfValueT{#1}{#1}\IfValueF{#1}{\pp};
				\IfValueT{#2}{#2}\IfValueF{#2}{\mathscr{F}_{\cdot}}
				\right)
			}
\NewDocumentCommand\Prox{m}{\operatorname{Prox}_{{#1}}}
			\NewDocumentCommand\aaarho{o}
			{
				\mathscr{A}
				\IfValueT{#1}{_{#1}}
				\IfValueF{#1}{_{\rho}}
			}
			\newcommand{\reflecetedrho}{\mathbin{\reflectbox{$\rho$}}}
			\NewDocumentCommand\cormes{ooo}
		{
			\IfValueF{#3}{{\rho}}
			\IfValueT{#3}{{\reflecetedrho}}
			\IfValueT{#1}{^{#1}}
			\IfValueF{#1}{^{\lambda}}
			\IfValueT{#2}{
			\left(
			{#2}
			\right)
			}
		}
				\NewDocumentCommand\ffff{oo}
					{%
						\mathscr{F}
						\IfValueT{#1}{_{#1}}\IfValueF{#1}{}%
						\IfValueT{#2}{^{#2}}%
					}
				\NewDocumentCommand\gggg{o}
					{%
						\mathscr{G}
						\IfValueT{#1}{_{#1}}
					}
				\NewDocumentCommand\hhhh{o}
					{%
						\mathscr{H}
						\IfValueT{#1}{_{#1}}
					}
				\NewDocumentCommand\hhh{o}
				{%
					\mathcal{H}
				}
	\newcommand{\rrbar}{{\overline{\rr}}}
		\NewDocumentCommand\rrext{}{
			{(-\infty,\infty]}
		}	
		\NewDocumentCommand\rrd{o}
		{%
			\rr\IfValueT{#1}{^{#1}}\IfValueF{#1}{^{d}}
		}
		\NewDocumentCommand\Lp{mooo}
				{%
					\operatorname{L}\IfValueF{#4}{_{\qq}}\IfValueT{#4}{_{#4}}%
					\IfValueT{#3}{^{#3}}\IfValueF{#3}{^{2}}
					\left(#1
						;\IfValueF{#2}{\hhh}\IfValueT{#2}{\rrd[{#2}]}
					\right)%
				}
						\NewDocumentCommand\Lpsub{mooo}
				{%
					\operatorname{L}\IfValueF{#4}{_{\pp}}\IfValueT{#4}{_{#4}}
					\IfValueT{#3}{^{#3}}\IfValueF{#3}{^{2}}
					\left(#1
					;\IfValueF{#2}{\hhh}\IfValueT{#2}{{#2}}
					\right)%
				}
		\NewDocumentCommand\QQ{o}
			{%
				\mathcal{Q}\IfValueT{#1}{_{#1}}\IfValueF{#1}{_{\rho}}
			}
		\NewDocumentCommand\rrdbar{o}
		{%
			\IfValueT{#1}{[}\IfValueF{#1}{(}-\infty,\infty]
		}
\title{\Large \bf  Partial Uncertainty and Applications to Risk-Averse Valuation} %
\thanks{Department of Mathematics, ETH Z\"{u}rich, HG G 32.3, R\"{a}mistrasse 101, 8092 Z\"{u}rich.  email: anastasis.kratsios@math.ethz.ch}
\date{%
	\today \\ %
}
\begin{document}
\author{Anastasis Kratsios
}
\title{\Large \bf Partial Uncertainty and Applications to Risk-Averse Valuation} %
\lhead{A. Kratsios
}
\chead{\small  Partial Uncertainty}
\rhead{\small \today}%

\date{%
	\today %
}

\maketitle

\begin{abstract}
This paper introduces an intermediary between conditional expectation and conditional sublinear expectation, called $\rrrr$-conditioning.  The $\rrrr$-conditioning of a random-vector in $L^2$ is defined as the best $L^2$-estimate, given a $\sigma$-subalgebra and a degree of model uncertainty.  When the random vector represents the payoff of derivative security in a complete financial market, its $\rrrr$-conditioning with respect to the risk-neutral measure is interpreted as its risk-averse value.  
The optimization problem defining the optimization $\rrrr$-conditioning is shown to be well-posed.  We show that the $\rrrr$-conditioning operators can be used to approximate a large class of sublinear expectations to arbitrary precision.  We then introduce a novel numerical algorithm for computing the $\rrrr$-conditioning.  This algorithm is shown to be strongly convergent.

Implementations are used to compare the risk-averse value of a Vanilla option to its traditional risk-neutral value, within the Black-Scholes-Merton framework.  
Concrete connections to robust finance, sensitivity analysis, and high-dimensional estimation are all treated in this paper.  
\end{abstract}

	\noindent
{\itshape Keywords:} $\rrrr$-Conditioning, Non-Linear Conditional Expectation, Risk-Measures, \\
Risk-Averse Valuation, High-Dimensional Probability, Sparse Conditional Expectation, Sparisty.

\noindent
\let\thefootnote\relax\footnotetext{This research was supported by the ETH Z\"{u}rich Foundation.}

\noindent
{\bf Mathematics Subject Classification (2010):} 91G80, 91G60, 46N10, 49J53.  
\section{Introduction}
Let $(\Omega,\ffff,\{\ffff[t]\}_{t\geq 0},\pp)$ be a filtered probability space, and let $X_t$ be an $\ffff[t]$-adapted semi-martingale with values in $\rr$.  When considering financial applications, we interpret $X_t$ as a price process and assume that the \cite[No Free Lunch with Vanishing Risk (NFLVR)]{delbaen1994general} formulation of market completeness holds.  Recall that this is the same as the existence of a unique equivalent local martingale measure (ELMM) for the discounted price process $e^{-rt}X_t$.  We denote the ELMM by $\qq$ and assume that $r\geq 0$.

In classical risk-neutral pricing theory, a financial derivative on $X_t$ with maturity $T>0$, and a Borel-measurable payoff function $f:\rrd\rightarrow \rr$ is priced according to the risk-neutral pricing formula
\begin{equation}
V(T,%
f)\triangleq e^{-rT}\ee{f(X_T)}[
\ffff[0]%
][\qq]
\label{RN_1}
.
\end{equation}
Let us assume that $f(X_t)$ and $X_t$ are square-integrable under $\qq$, that is, $f(X_t),X_t \in\Lp{\ffff[]}[]$.  Then~\eqref{RN_1} can be expressed as the solution to the following quadratic optimization problem
\begin{equation}
V(T,%
f)=\arginf{Z \in L^2(\ffff[0])} \eemes{\|f(X_T)-Z\|^2}[\qq]
\label{RN}
\tag{RN}
.
\end{equation}

If the model does not accurately reflect reality, then the estimate of the derivative security's payoff may be quite poor, since many factors driving financial derivative's the price may be ignored by the model, and consequentially they are overlooked in the pricing problem~\eqref{RN}.  The robust finance literature proposes a solution to this issue.  For example, in \cite{coquet2002filtration,peng2010nonlinear,soner2011martingale,cohen2011sublinear}, a family of plausible alternative probability measures to $\pp$, denoted by $\qqqq$, is used to quantify the best estimate under uncertainty.  Under appropriate assumptions on $X_t$ and on $\qqqq$, ~\eqref{RN} generalizes to either
\begin{equation*}
\esssup{Q \in \qqqq} \arginf{Z \in \mathcal{X}}\eemes{\|f(X_T)-Z\|^2}[Q] - B(Q)
,
\end{equation*}
or to
\begin{equation}
\arginf{Z \in \mathcal{X}} \sup_{Q \in \qqqq}\eemes{\|f(X_T)-Z\|^2}[Q] - B(Q)
\label{intro_eq_RF}
,
\end{equation}  
where $B$ is a function, capturing the modeler's bias towards certain measures in $\qqqq$, and $\mathcal{X}$ is a suitable subset of the set of $(\ffff[0],\mathcal{B}(\rr))$-measurable functions.
The two formulations are equivalent under appropriate conditions on $\qqqq$ and $B$.

For the moment, let us think of uncertainty as risk-aversion; a perspective which is justified in later portions of this paper.  If we do so, then problem~\eqref{RN} and problem~\eqref{intro_eq_RF} would sit on the opposite sides of the risk-aversion spectrum.  An investor pricing, according to ~\eqref{RN}, places complete confidence in their model, which reflects them having a minimal level of risk-aversion.  On the other hand, if the investor prices according to~\eqref{intro_eq_RF}, then they would be placing essentially no confidence in their model, and so they would be expressing an extreme amount of risk-aversion into their price.  

In reality, however, most market participants tend to hold an intermediate degree of risk-aversion.  The incorporation of risk-aversion in mathematical finance dates back to the advent of modern portfolio theory in \cite{markowitz1952portfolio}, where the author characterized portfolios offering maximal returns for a pre-determined risk-aversion level.  For the author of \cite{markowitz1952portfolio}, risk-aversion is quantified by the variance of a portfolio's returns.  Markowitz's approach has since been modernized in \cite{kull2014portfolio,gilli2002global,ciliberti2007feasibility} with the risk-aversion constraint being reformulated as a constraint on a portfolio's risk, as quantified either by the value-at-risk or by the expected shortfall of the portfolio's returns.  

The incorporation of risk-aversion, into the estimation procedure, as expressed by risk-measure constraints, has since been successfully undertaken in several other branches of mathematical finance.  For example, \cite{jiao2017hedging} built on aspects of the hedging literature, founded in  \cite{schweizer1988hedging,schweizer1991option,schweizer1995variance,delbaen2002exponential,schweizer2010mean,jeanblanc2012mean}, by describing a procedure for finding the most inexpensive portfolio meeting a set of stochastic benchmarks which are required to be bounded by a set of risk measures.

We build on the success of these methods, by viewing the loss function in~\eqref{intro_eq_RF}, not as the objective defining the optimization problem itself, but instead as a constraint to the original risk-neutral pricing problem~\eqref{RN}.  Explicitly, we begin our analysis by refining~\eqref{RN} as
\begin{equation}
\begin{aligned}
&\arginf{Z \in \Lp{\ffff[0]}[]}[1]     \eemes{
\left(f(X_T)-Z\right)^2
}[\qq]
\\
&\mbox{subject to } 
\left(
\sup_{Q \in \qqqq}\eemes{
\left(f(X_T)-Z\right)
^2
}[Q]
- B(Q)
\right)
\leq M
.
\end{aligned}
\label{RA}
\tag{RA}
\end{equation}
We will see that~\eqref{RA} can be computed using efficient convex analytic algorithms; however, these algorithms do not apply to~\eqref{intro_eq_RF}.  

An added benefit of problem~\eqref{RA} over~\eqref{intro_eq_RF} is that the objective measure $\pp$ and the risk-neutral measure $\qq$ are both used.  %
In this way, Problem~\eqref{RA} incorporates many of the desirable qualities of the Problem~\eqref{intro_eq_RF} without most of its intractability.  

Another major difference between~\eqref{RA} and~\eqref{intro_eq_RF} is that the solution operator is typically a highly non-linear mapping.  Therefore the sublinear expectation theory of \cite{coquet2002filtration,cohen2011sublinear,cohenqs} and the convex risk-measure literature of \cite{gao2018c,cheridito2008dual,KainaCRMs,filipovic2007convex,delbaen1994general,mcneil2005quantitative} cannot be used directly in the study of problem~\eqref{RA}.  Therefore, we establish a new generalized conditional expectation theory, falling somewhere in between that of the classical conditional expectation theory and the sublinear expectation theory.

\subsection{Additional Features and The General Problem}%
Though problem~\eqref{RA} addresses many of the issues of the former two estimation problems, it is incomparable with three of the focuses of modern mathematical finance.  The first of these is the crudeness of the quadratic utility function.  One immediate issue with this utility function is that it is unable to distinguish between gains and losses from misestimation.  Authors such as \cite{biagini2008unified,cheridito2008dual,gao2018c} have been developing frameworks that are capable of working with a general utility function.  We also consider these features in the general formulation of~\eqref{RA}.  

Secondly, contemporary financial data-sets are typically very high-dimensional.  Authors such as \cite{breymann2003dependence,dias2004dynamic,donoho2000high,tibshirani1996regression} have all made efforts to incorporate such features into their modeling approach.  We require that high-dimensional data can be encoded into it.  To incorporate this feature into our problem, we assume that all processes and random-elements take values in a separable Hilbert space $\hhh$, which we typically intuit to be infinite-dimensional.  Therefore, the optimization of~\eqref{RA} is instead performed over the Bochner-Lebesgue space $\Lp{\ffff[]}$, which is itself a Hilbert space (see \cite{evans1990weak} for details on infinite-dimensional integration).  

Thirdly, following the financial crisis, the regulatory requirements of the Basel accords \cite{balin2008basel,basel2004international,basel2013fundamental} have placed restrictions on how market participants can transact.  We incorporate this additional constraint into problem~\eqref{RA} by restricting the optimization over $\Lp{\ffff[]}$ to a suitable subset $\Phi$.  We interpret the elements of $\Phi$ as estimators, which posses additional desirable features.  An example of such features is, meeting other regulatory requirements or admitting a sparse representation.

Before stating the general form of~\eqref{RA} studied in this paper, which incorporates all these features we recall that in \cite{KainaCRMs}, it is shown that, if a suitable choice of alternative probability measures $\qqqq$ and a bias function $B$ are made, then the map from $\Lp{\ffff[]}$ to $\rrext$ defined by
\begin{equation}
X\mapsto \sup_{Q \in \qqqq}\eemes{
Z
}[Q]
- B(Q)
,
\label{intro_rob_rep}
\end{equation}
is equivalent to specifying a proper, lower semi-continuous (lsc), convex risk-measure on $\Lp{\ffff[]}[]$.  We denote this risk-measure by $\rho$.  The representation of $\rho$ given by~\eqref{intro_rob_rep}, which is called a robust representation, makes up part of a large literature, relating risk-measures to robust finance, see \cite{delbaen2002coherent,cheridito2008dual,acciaio2011dynamic} for example.  

The formulation of our problem is more general and does not refer to time, or the payoff function $f$.  Instead, it considers random vectors in $\Lp{\ffff}$ and targets of the form $f(X_T)$ are only specified when dealing with explicit financial situations.  The main objective of this paper is to study the following generalization of Problems~\eqref{RN} and~\eqref{RA}, 
\begin{equation}
\begin{aligned}
&\arginf{Z \in \Lp{\gggg}}[1]     \eemes{
\|Z-X\|^2
}[\qq]\\
&\mbox{subject to } \rho(\utility(X-Z))\leq M,\\
&\mbox{subject to } Z \in \Phi,
\end{aligned}
\label{PG}
\tag{PG}
\end{equation}
where %
$\gggg$ is a sub-$\sigma$-algebra of $\ffff$, $\utility$ is a suitable utility-like function%
, and $\ee{\cdot}$ denotes the expectation with respect to the measure $\pp$%
.  All these quantities will be described rigorously in Section $2$.

\subsection{Objectives of the Paper}
The central objectives of this paper are first to give suitably general conditions under which~\eqref{PG} is well-posed, and to describe a procedure which can (strongly) approximate it to arbitrary precision.  The secondary goals are to study the general properties of the solution operator to~\eqref{PG}, as well as connections to robust finance, and other areas of contemporary mathematical finance.

\subsection{Organization of Paper}\label{ss_Organization_Paper}
Section~\ref{s_Bgrnd} briefly overviews the necessary background in risk-measures and some of the relevant convex analysis tools used in the rest of this paper.  Section~\ref{s_RACE} establishes the well-posedness of~\eqref{PG} as well as the basic risk-reduction properties of the solution to~\eqref{RA}.  Section~\ref{s_sol_Uncertainty} reviews and uses the theory of $\Gamma$-convergence to show that, in many cases, the solutions of~\ref{RA} can be used to asymptotically solve~\ref{intro_eq_RF} to arbitrary precision.

In Section~\ref{s_compute}, we introduce a novel numerical scheme for computing~\eqref{PG}, and we establish its strong convergence.  
Section~\ref{s_rel_oLit}, makes connections between the solution to~\eqref{PG} and robust finance.  Specifically, a robust representation is derived, we provide connections to regular extensions of convex risk-measures on $\Lp{\ffff[]}[][p]$ spaces, as well as connections to sensitivity analysis, and a link to $\gggg$-consistent sublinear expectations.  We explore links to high-dimensional probability theory by showing that the set $\Phi$ can be used to obtain sparse estimators.

An implementation comparing the risk-averse value of a Vanilla call option to their risk-neutral value, within the framework of~\cite{black1973pricing}, is considered in Section~\ref{s_example_RA_Vanilla}.

\section{Background}\label{s_Bgrnd}
The subjectivity of risk-aversion in the mathematical finance literature, dates all the way back to~\cite{markowitz1952portfolio}, where different investor-dependent risk-appetites were incorporated into the optimal portfolios.  Before formalizing a perspective on risk, in a way which is appropriate with problem~\eqref{PG}, we take a moment to review some of the relevant risk-measure literature.
\subsection{Risk-Measures}\label{s_RMs}
A \textit{(possibly non-finite)} proper, convex risk-measure on $\Lp{\ffff[]}[]$ is defined to be a function $\rho:\Lp{\ffff[]}\rightarrow \rrext$ which is monotone, cash-invariant, and convex; that is for any $X,Y \in \Lp{\ffff[]}$, $k \in \rr$, and $\alpha \in [0,1]$, $\rho$ satisfies:
\begin{enumerate}[(i)]
	\item If $X\leq Y$ $\pp$-a.s. then $\rho(X)\leq \rho(Y)$,
	\item $\rho(X+ k)= \rho(X)-k$,
	\item $\rho(\alpha X + (1-\alpha)Y)\leq \alpha \rho(X) + (1-\alpha)\rho(Y)$,
	\item There exists $Z \in \Lp{\ffff[]}[]$ such that $\rho(Z)\in \rr$.
\end{enumerate}
\begin{remark}[Convention: Upper Addition]\label{rem_upp_add}
	For the remainder of this paper, follow the convention of \cite{rockafellar2009variational}, that $\infty + r=\infty$ for $r \in \rrbar$; where $\rrbar \triangleq [-\infty,\infty] $.  Some authors, such as \cite{stromberg1994study}, call this upper-addition to contrast the alternative convention that $-\infty +\infty = -\infty$, which we will not follow in this paper.  
\end{remark}

As mentioned in~\eqref{intro_rob_rep}, a convex risk-measure admits the following robust representations due to \cite{biagini2006continuity},
\begin{equation}
\rho(X) = \sup_{Q \in \qqqq} \eemes{-X}[Q] - \rho^{\star}\left(\frac{dQ}{d\pp}\right)
\label{eq_robust_representation_Lp_risk_measures}
,
\end{equation}
where $\rho^{\star}$ is the Fenchel-Moreau (convex) conjugate of $\rho$ on $\Lp{\ffff[]}[]$, $\qqqq$ is a non-empty subset of 
$\left\{
Q \in \mathscr{P}(\Omega,\ffff[]) : Q \ll \pp \mbox{ and } \frac{dQ}{d\pp} \in \Lp{\ffff[]}[]
\right\}$, and $\mathscr{P}(\Omega,\ffff[])$ is the set of probability measure on $(\Omega,\ffff[])$.

The assumption of lower semi-continuity is necessary if the representation~\eqref{eq_robust_representation_Lp_risk_measures} is to hold.  In \citep[Lemma 2.3]{filipovic2007convex}, it is shown that the lower semi-continuity of $\rho$ is equivalent to the (strong) closedness of its acceptance set $\aaaa_{\rho}$, defined by
\begin{align}
\label{eq_defn_acceptance_set}
\aaaa_{\rho}\triangleq &\left\{
Z \in \Lp{\ffff[]}[] : \rho(Z)\leq 0
\right\}
.
\end{align}

In general however, the (strong) continuity of $\rho$ is only guaranteed, by the \citep[Extended Namioka Theorem]{biagini2006continuity}, on the interior of its domain\footnote{
The domain of a convex function $f:\Lp{\ffff[]}[]\rightarrow \rrext$, denoted $\operatorname{dom}(f)$, is defined to be the set
$
\operatorname{dom}\left(\utility\right)\triangleq \left\{
Z \in \Lp{\ffff}: \utility(Z)\neq \infty
\right\}
.
$
}.  In particular, as concluded in \citep{KainaCRMs}, a finite-valued risk-measure is continuous on all of $\Lp{\ffff[]}[]$.  

The algorithmic portion of this paper will require the convex risk-measure defining~\eqref{PG} to be of higher regularity; specifically, we will require that $\rho$ be G\^{a}teaux differentiable on the interior of its domain.  From \cite{cheridito2008dual}, it can be seen that if the subdifferential of $\rho$ at $X \in \Lp{\ffff[]}[]$, defined by
$$
\partial \rho(X) \triangleq \left\{
Z \in \Lp{\ffff[]} :
(\forall Y \in \Lp{\ffff[]}[])\;
\rho(X +Y ) - \rho(X)\geq \ee{YZ} 
\right\}
,
$$
is single-valued, then unique element in $\partial \rho(X)$ is the G\^{a}teaux derivative of $\rho$ at $X$.  We denote the G\^{a}teaux derivative of $\rho$ at $X$ by $\nabla \rho(X)$, and recall that the G\^{a}teaux derivative can be equivalently defined by a weak calculus, if
$$
\ee{Y\nabla\rho(X)} = \lim\limits_{\epsilon \downarrow 0} 
\frac{
	\rho(X+ \epsilon Y) - \rho(X)
}{\epsilon}
,
$$
holds for all $Y \in \Lp{\ffff[]}[]$.  We note that by \cite[Proposition 17.39]{ConvexMonoCMS}, when $\rho$ is G\^{a}teaux differentiable, the selection taking $X$ to its G\^{a}teaux derivative is not only well-defined, but it is also strong-to-weak continuous on $\operatorname{int}(\operatorname{dom}(\rho))$.  

The implementation portion of this paper will prefer a certain analytic tractability of the G\^{a}teaux derivative.  In \cite{cheridito2009risk} a class of risk-measures with a particularly tangible G\^{a}teaux derivative were introduced.  The construction proceeds as follows.  Consider a monotonically decreasing, convex map from $V:\Lp{\ffff[]}[]$ to $\rrext$, for which the following set is single-valued
\begin{equation}
\arginf{k \in \rr}V(k-X)-k
\label{eq_inf_exist}
.
\end{equation}
Denote the (unique) selection, taking an element $X \in \Lp{\ffff[]}[]$ into the set~\eqref{eq_inf_exist} by $V_X$.  The cash-additive hull of $\rho$ is the largest real-valued convex risk-measure map on $\Lp{\ffff[]}[]$ dominated by $V$.  It is typically denoted by $\rho_V$ and is defined through
\begin{equation}
\begin{aligned}
\rho_V(X)\triangleq & \left[
	V\left(V_X - X
	\right)-V_X
\right] - V\left(V_0\right)
\end{aligned}
\label{eq_cash_additive}
.
\end{equation}

Our interest with cash-additive hulls stems from \citep[Proposition 7.1]{cheridito2008dual}.  Therein, it was shown that if $V$ is G\^{a}teaux differentiable at $V_X-X$, then $\rho_V$ is G\^{a}teaux differentiable at $X$.  Moreover, its G\^{a}teaux derivative admits the following convenient analytic form
\begin{equation}
\nabla\rho_V(X)= -\nabla V\left(V_X-X\right)
\label{eq_cah_simple}
.
\end{equation}
The next example will be central to the implementations considered herein.  
\begin{ex}[Quadratic Risk-Measure]\label{lem_Qudratic_risk}
	Consider the case where \\$V(X)\triangleq \frac1{2}\ee{\left\|X\right\|^2}$.  We will denote its cash-additive hull by $\rho_2$.  Using standard calculus to compute the minimizer $V_X$ of~\eqref{eq_inf_exist}, we find that
	\begin{align}
	\rho_2(Z)&= \ee{
		\left(
		\ee{Z}+\frac1{2} -Z
		\right)^2 
	} - \left[\ee{Z}+\frac1{2}\right] -\frac1{4}
	\\
	V_Z&= %
	\ee{Z} + \frac1{2} 
	.
	\end{align}
	
	In \citep[Section 8.2]{cheridito2008dual} $\rho_2$ was shown to be G\^{a}teaux differentiable on $\Lp{\ffff[]}[]$.  Therefore,~\eqref{eq_cah_simple} implies that its G\^{a}teaux derivative is given by
	\begin{equation}
	\nabla \rho_2(Z) %
	= Z -\left(\ee{Z} +\frac1{2}\right)
	.
	\label{eq_simplific}
	\end{equation}
	
	Furthermore,~\citep[equation 8.5]{cheridito2008dual} and~\eqref{eq_robust_representation_Lp_risk_measures} imply that $\rho_2$ can be understood through the following robust representation
	\begin{equation}
	\begin{aligned}
	\rho_2(Z)&= \sup_{\qq \in \qqqq}\ee{-Z} - \ee{\left(\frac{d\qq}{d\pp}\right)^2}
	,
	\\
	\qqqq &\triangleq \left\{
	\qq\ll \pp : \ee{\frac{d\qq}{d\pp}}=1, \frac{d\qq}{d\pp}\geq 0, \frac{d\qq}{d\pp}\in \Lp{\ffff[]}[]
	\right\}
	.
	\end{aligned}
	\end{equation}
\end{ex}

In the next section, we review some of the relevant background on the infimal convolution operator by making contrasts with functional analysis and measure theory.  
\subsection{The Infimal Convolution Operation}\label{rem_inf_conv}
The results subsection are formulated in $\Lp{\ffff[]}$, though they may be set in a more general context.

Any conditional expectation operator in $\Lp{\ffff[]}$ is a metric projection onto the the linear subspace $\Lp{\gggg}$.  In general, projections can be defined on any non-empty, closed, convex subset $C$ of $\Lp{\ffff[]}$ by
\begin{equation}
\Pi_C(X)\triangleq \arginf{Z \in C}[1] \ee{\|X-Z\|^2},
\label{eq_proj}
\end{equation}
standard results in Hilbert space analysis show that the map $X\mapsto \Pi_C(X)$ is indeed well-posed.  

Using the indicator function of the set $C$, defined by
$$
\iota_C(Z)\triangleq \begin{cases}
0 & : Z \in C\\
\infty & : Z \not\in C
\end{cases}
,
$$
the projection operator of~\eqref{eq_proj} may be rewritten as
\begin{equation}
\Pi_C(X)\triangleq \arginf{Z \in \Lp{\ffff[]}}[1] \frac1{2}\ee{\|X-Z\|^2} + \iota_{C}(Z).
\label{eq_proj_alt}
\end{equation}

It was noticed in \cite{moreau1965proximite}, that replacing the indicator function $\iota_C$ by any proper, convex, and lsc function $f$ on a Hilbert space (which in our context is $\Lp{\ffff[]})$, many of the desirable \textit{"projection-like"} properties of $\Pi_C$ are retained.  This lead the author to consider the following generalization of a metric projection%
\begin{equation}
\operatorname{Prox}_{f}(X)\triangleq
\arginf{Z \in \Lp{\ffff[]}}[1] \frac1{2}\ee{\|X-Z\|^2} + f(Z)
\label{eq_Prox_defn}
.
\end{equation}
It was proven in~\cite{moreau1965proximite} that the multi-function $\operatorname{Prox}_{f}$ is in-fact single-valued and $1$-Lipschitz.  The author shows that for every $\eta >0$, the function $f^{\eta}$, called the Moreau-Yoshida envelope of $f$ with parameter $\eta$, defined by 
\begin{equation}
\begin{aligned}
{f}^{\eta}(X)\triangleq &
\inf_{Z \in \Lp{\ffff[]}} \frac1{2}\ee{\|X-Z\|^2} + f(Z)
,
\end{aligned}
\label{eq_Moreau_defn}
\end{equation}
is lsc, convex, real-valued, locally Lipschitz, and is in-fact Fr\'{e}chet differentiable.  Many applications of this map have been seen, either implicitly or explicitly, in statistical learning (see \cite{tibshirani1996regression,zou2006sparse} for example).  

\begin{ex}[{\citep[Example 12.21, Corollary 12.23]{ConvexMonoCMS}}]\label{ex_MYenv}
	The Moreau-Yoshida envelope of $\iota_{C}$ can be computed to be
	$$
	\iota_{C}^{\eta}(X) = \frac1{2\eta}d_C^2(X)%
	,
	$$
	where $d_C(X)$ represents the shortest distance between the random-vector $X$ and the set $C$.  Moreover, the Moreau-Yoshida envelope $\iota_C^{\frac1{2}}$ can be used to compute the projection $\Pi_C(X)$ through
	\begin{equation}
	\Pi_C(X)=X-\frac1{2}\nabla d_C^2(X)
	,
	\label{eq_motiv_gen_relat_Prox_MY_env}
	\end{equation}
	where $\nabla d_C^2$ denoted the Fr\'{e}chet derivative of $d_C^2$.   
\end{ex}
The relationship~\eqref{eq_motiv_gen_relat_Prox_MY_env} is not coincidental and extends to proximal mapping of any proper, lsc, convex function $f$.  In general, $\Prox{f}$ can be computed using $f^{\eta}$ through
\begin{equation}
\begin{aligned}
\nabla{f}^{\eta}(X) = & \frac1{\eta}\left(X%
-\operatorname{Prox}_{f}(X)\right)
.
\end{aligned}
\label{eq_Moreau_defn_relation_prox}
\end{equation}

Replacing the functional $X\mapsto \frac1{2}\ee{\|X\|^2}$ by any proper, convex, lsc function $g$ in the definition of the Moreau-Yoshida envelope, we obtain the infimal convolution of $f$ and $g$, defined by
$$
f\square g (X)\triangleq \inf_{Z \in \Lp{\ffff[]}} g(X-Z) + f(Z)
.
$$
Geometrically, the infimal convolution is  the largest extended real-valued function whose epigraph contains the sum of epigraphs\footnote{The epigraph of a function, is the set of all values lying on or above that function's graph.  It encodes many of the properties of proper convex functions.  } of $f$ and
$g$.  Analogously to the single-valuedness of the proximal mapping operator, the infimal convolution is said to be \textit{exact} at $X$, written $f\boxdot g$, if there exists a unique element $X^{\star}\in \Lp{\ffff[]}$ satisfying
$$
f \square g (X) = g(X-X^{\star}) + f(X^{\star})
.
$$
However, unlike $\operatorname{Prox}_f$, the infimal convolution is not guaranteed to be exact.

\section{{$\rrrr$-}Conditioning}\label{s_RACE}
We are now in place to provide a formalization of an investor's perspective on risk or partial uncertainty towards their model.  
\subsection{Risk-Perspective}\label{ss_rp}
\begin{defn}[Risk-Perspective]\label{defn_perspective}
	A risk-perspective $\rrrr$, is a quintuple\\ $(\gggg,\rho,\utility,\Phi,M)$ comprised of a:
	\begin{enumerate}[(i)]
		\item \textbf{$\sigma$-Subalgebra:} $\sigma$-sub-algebra $\gggg$ of $\ffff[]$,
		\item \textbf{Risk-Measure:} Proper, convex risk-measure $\rho$, on $\Lp{\ffff[]}$, with a non-empty and (strongly) closed acceptance set $\aaaa_{\rho}\triangleq \left\{
		Z \in \Lp{\ffff[]}[]: \rho(Z)\leq 0
		\right\}$,
		\item \textbf{Compatible Utility Operator:} A function $\utility:\Lp{\ffff[]}\rightarrow %
		\Lp{\gggg}$, such that $\riskutility\triangleq \rho\circ \utility$ is proper, convex, and (strongly) lower semi-continuous, and satisfies
		\begin{equation*}
		-\infty<\inf_{Z \in \Lp{\gggg}} \riskutility(Z)
		,
		\end{equation*}
		\item \textbf{Feature Set:} A non-empty, convex, and (strongly) closed subset $\Phi$ of $\Lp{\ffff[]}$, such that
		\begin{equation*}
		\Lp{\gggg[]}\cap \Phi\cap \aaaa_{\rho} \cap \operatorname{dom}\left(\riskutility\right)\neq \emptyset
		,
		\end{equation*}  
		where $\operatorname{dom}\left(\riskutility\right)\triangleq \left\{
		Z \in \Lp{\ffff[]} : \riskutility(Z)<\infty
		\right\}$. 
		\item \textbf{Risk-Aversion Level:} An extended real number $M\in (0,\infty]$.  
	\end{enumerate}
\end{defn}
To make things concrete, let us consider a few examples of compatible utility operators and feature sets before moving on.  
\begin{ex}[Compatible Utility Operator: %
	Quadratic Loss]\label{ex_dist_target}
	The map \\
	$Z\mapsto -\|%
	Z\|^2$ is smooth, concave, and non-increasing.  Since $\rho$ is convex an non-increasing then $\riskutility$ is convex on its domain.  Moreover, if $\rho$ is continuous, then the smoothness of $\|\cdot\|^2_{\hhh}$ implies that $\riskutility$ is continuous on its domain.  
\end{ex}
More generally, we have the following example.  
\begin{ex}[Compatible Utility Operator: Non-Euclidean Utility]\label{ex_dist_target_gen}
	Fix $X\in \Lp{\ffff[]}$ and let $f:\hhh\rightarrow [0,\infty)$ be a decreasing, smooth, convex function satisfying
	$$
	f(x)=0 \Leftrightarrow x=0.
	$$
	
	The map    $Z\mapsto f(X-Z)$ defines a compatible utility operator.  To see this, note that since $\rho$ is convex an non-increasing then $\riskutility$ is convex on its domain.  Moreover, if $\rho$ is continuous, then the smoothness of $f$ implies that $\riskutility$ is continuous on its domain.  
\end{ex}
\begin{ex}[Compatible Utility Operator: Thresheld Aggregation]\label{ex_affine}
	Suppose $\hhh=\rrd$ and let $L\in \rr$.  In \cite{billio2012econometric,acharya2017measuring,aminiy2013systemic}, measures of systemic risk are described through aggregation.  This is accomplishes through the use of an aggregation function $g$ from $\rrd$ to $\rr$ such that $\rho\circ g$ is itself a convex (resp. coherent) risk-measure.  Define $\utility_L$ to be
	$$
	\utility_L(Z)\triangleq \max\left\{
	g(Z),L
	\right\}.
	$$
	If $g$ is itself lsc, then $\utility_L$ must also be lsc.  
	
	The monotonicity of $\rho$ implies that the minimum value attainable by $\rho\circ \utility_L$ is $-L$.  Therefore, if $\rho\circ \utility_L$ is itself convex, then $\utility_L$ is a compatible utility function.  
\end{ex}
\begin{ex}[Feature Set: Classical Conditional Expectation]\label{ex_admis}
	The set $\Lp{\ffff[]}$ satisfies the definition of an feature set.  Moreover, if $M=\infty$, then~\eqref{PG} is precisely the classical vector-valued definition of conditional expectation.  If furthermore $\hhh=\rr$,~\eqref{PG} reduces to~\eqref{RN}.  
\end{ex}
\begin{ex}[Feature Set: Acceptance Set of A Risk-Measure with Aggregation]\label{ex_accept_RM}
	Set $\hhh=\rrd$, $m\geq 0$.  Let $g:\rrd\rightarrow \rr$ be an continuous function, $R:\Lp{\ffff[]}[]\rightarrow \rrext$ be a convex, lsc, risk measure, and suppose that the composition $R\circ g$ is convex.  Define the shifted aggregated acceptance set $\aaaa_{R}^{g,m}$ by
	\begin{equation}
	\aaaa_{R}^{g,m}\triangleq \left\{
	X \in \Lp{\ffff[]}[] 
	:
	R\left(g(X)\right)\leq m
	\right\} = (R\circ g)^{-1}\left[(-\infty,m]\right]
	.
	\label{eq_acceptance_shifted}
	\end{equation}
	Since $R$ is lsc, and $\aaaa_{R}^{g,m}$ is a level set, then it is closed.  Likewise, the convexity of $R$ and that of the half-line $(-\infty,m]$ imply that $\aaaa_{R}^m$ is itself convex.  Hence $\aaaa_{R}^{g,m}$ is an feature set.  
	
	In the case where $\rrd=\rr$ and $g(x)=x$, Problem~\eqref{RA}, may therefore be rewritten as
	\begin{equation}
	\begin{aligned}
	&\arginf{Z \in \Lp{\ffff[0]}[]}[1]     \eemes{
		\left(f(X_T)-Z\right)^2
	}[\qq]
	\\
	\mbox{subject to } &
	\rho\left(
	\utility\left(f(X_T)-Z\right)
	\right)
	\leq M
	\\
	& R\left(
	Z
	\right)
	\leq m
	.
	\end{aligned}
	\label{eq_constrianed}
	\end{equation}
	A solution to Problem~\eqref{eq_constrianed}, if it exists, seeks to find an estimator of the price $f(X_T)$, which is itself both low-risk and has a low risk of making an estimation error, as quantified by $\utility$.  
\end{ex}

\begin{ex}[Feature Set: Thresheld Variance]\label{ex_Prespecified_Variance_Set}
	Let $\hhh=\rrd$.  In the early portfolio optimization literature of \cite{markowitz1952portfolio}, risk is quantified by variance.  Given a maximum variance level $\Sigma>0$, the maximum returns portfolio with variance $\Sigma$, is a random vector in the set
	$$
	\Phi_{\Sigma}^2\triangleq \left\{
	Z \in \Lp{\ffff[]} :
	\sqrt{\ee{\|Z\|^2}}\leq \Sigma
	\right\}
	.
	$$  
	
	Note that $\Phi_{\Sigma}^2$ is a ball in Bochner-Lebesgue space $\Lp{\ffff[]}[d]$ about $0$.  Therefore, the continuity and linearity of the norm on that space ensure that $\Phi_{\Sigma}^2$ is a non-empty closed convex subset of $\Lp{\ffff[]}[d]$.
	Hence $\Phi_{\Sigma}^2$ is an feature set.  
\end{ex}
\begin{ex}[Feature Set: Sparse Estimators]\label{ex_admis_sparse}
	Let $\Sigma>0$.  Define the subset $\Phi^1_{\Sigma}$ of $\Lp{\ffff[]}[d]$ by
	$$
	\Phi^1_{\Sigma}\triangleq \left\{
	X \in \Lp{\ffff[]}[d] :\,
	\operatorname{ess\,supp}(X)
	\subseteq
	\overline{\operatorname{Ball}_1(0;\Sigma)}%
	\right\}
	$$
	where $\operatorname{Ball}_1(0;\Sigma)$ denotes the closed $\ell^1$-ball in $\rrd[d]$ and $\operatorname{ess\,\sup}$ denotes the essential support of a $D$-dimensional random-vector $X$ is defined by
	$$
	\operatorname{ess\,supp}(X) := \bigcap \left\{
	K \subseteq \rrd[D] : \pp\left(X \in K\right)=1  \mbox{ and } K \mbox{ is closed}
	\right\}
	.
	$$
	We will show that $\Phi_{\Sigma}^1$ is a feature set.  
	
	We use the following construction twice in this example.  \hfill\\
	\textit{
		Let $\Gamma$ be a countable subset of $\Phi^1_{\Sigma}$.  For any $X\in \Lp{\ffff[]}[d]$ define its modification $X^{\Gamma}$ by
		$$
		X^{\Gamma}(\omega)\triangleq \begin{cases}
		X(\omega) & : (\forall Y \in \Gamma) \; Y(\omega) \in \overline{\operatorname{Ball}_1(0;\Sigma)}\\
		0 &: else
		\end{cases}
		.
		$$
		Since $\Gamma$ is countable, then the set $\bigcup_{Y \in \Gamma}\{\omega \in \Omega: Y(\omega) \not\in K\}$ is a countable union of probability $0$ subsets of $\Omega$; hence is itself of probability $0$.  Therefore the $X^{\Gamma}$ is indeed a well-defined modification of $X$.%
	}
	The modification $X^{\Gamma}$ will allow us to argue $\omega$-wise.  
	
	First, we establish the convexity of $\Phi^1_{\Sigma}$.  Let $X,Y \in \Phi^1_{\Sigma}$, $\lambda \in [0,1]$, and set $\Gamma\triangleq \left\{X,Y\right\}$.  
	then the convexity of $\overline{\operatorname{Ball}_1(0;\Sigma)}$ implies that
	$$
	\lambda{X}^{\Gamma}(\omega) +(1-\lambda){Y}^{\Gamma}(\omega) \in \overline{\operatorname{Ball}_1(0;\Sigma)};
	$$
	hence $\lambda X + (1-\lambda)Y \in \overline{\operatorname{Ball}_1(0;\Sigma)}$ $\pp$-a.s. 
	
	We now establish the closedness of $\Phi^1_{\Sigma}$ in $\Lp{\ffff[]}[d]$.  Let $\left\{X_n\right\}_{n \in \nn}$ be a sequence in $\Phi^1_{\Sigma}$ having a limit in $\Lp{\ffff[]}[d]$; we denote this limit by $X$.  Since convergence in $\Lp{\ffff[]}[d]$ implies convergence in probability to $X$.  Since $\left\{X_{n}\right\}_{k \in \nn}$ to $X$ in probability, there exists a subsequence $\left\{X_{n_k}\right\}_{k \in \nn}$ converging to $X$~$\pp$-a.s.  
	
	Since the set $\Gamma \triangleq \{X_n\}_{n \in \nn}\cup \{X\}$ is countable, the modification $X^{\Gamma}$ as well as the modified subsequence $\{X_{n_k}^{\Gamma}\}_{n \in \nn}$ are both well-defined.  The subsequence takes values in the subset $\mathcal{X}$ of $K$ defined by
	$$
	\mathcal{X}\triangleq \bigcup_{k \in \mathbb{N},\omega\in\Omega} \tilde{X}_{n_k}(\omega).  
	$$
	Moreover, $X^{\Gamma}$ must take values in the $\rrd$-closure of $\mathcal{X}$.  However, since $\operatorname{Ball}_1(0;\Sigma)$ is a closed subset of $\rrd$, then $\overline{\mathcal{X}}\subseteq \operatorname{Ball}_1(0;\Sigma)$.  Hence $\operatorname{supp}\left(X^{\Gamma}\right)\subseteq \operatorname{Ball}_1(0;\Sigma)$, which implies that $\operatorname{ess\,supp}(X)\subseteq \operatorname{Ball}_1(0;\Sigma)$.  Therefore $X \in \Phi^1_{\Sigma}$, thus $\Phi^1_{\Sigma}$ is closed in $\Lp{\ffff[]}[d]$.  Hence, $\Phi^1_{\Sigma}$ is a well-defined feature set.  
\end{ex}
\subsection{$\rrrr$-Conditioning}
Next, we study the solution operator to~\eqref{PG}, under the following assumption on the compatibility between the geometry of $\Phi$ and that of $\riskutility$.  
\begin{ass}\label{ass_reg_cone}\hfill
\begin{enumerate}[(i)]
\item There exists a convex cone $S\subseteq \Lp{\ffff[]}$ containing the optimizers of problems~\eqref{PG} and~\eqref{thrm_existsnce_uniqueness_eq_defining_optimization_problem}, below.  
\item There exists $M^{\star}\in (0,\infty)$, such that for every $M>
M^{\star}
$, the constraint set $\Phi$ is such that there exists $\hat{Z}\in \Phi\cap \Lp{\gggg}$ such that
\item \begin{enumerate}[(a)]
	\item $\operatorname{cone}\left(\Phi-\hat{Z}\right)$ is norm-closed,
	\item $%
	\hat{Z}%
	\in -\operatorname{int}\left(\left\{
	Y \in \Lp{\ffff[]}: \riskutility(Y)\leq M
	\right\}
	\right)$.  
\end{enumerate}
\item If $Z \in \operatorname{dom}(\riskutility)$, then $\partial\riskutility(Z)\neq\emptyset$.  
\end{enumerate}
\end{ass}
\begin{thrm}[Accuracy{/}Risk-Utility Trade-off Formulation]\label{thrm_Lagrangian_Formulation}
	Let $\rrrr$ be a risk-perspective and suppose that Assumption~\ref{ass_reg_cone} holds.  
	Let $\lambda\in (0,1)$, suppose that the set
		\begin{equation}
	\arginf{Z \in \Lp{\gggg}\cap \Phi }
	(1-\lambda)\ee{
		\left\|
		X-Z
		\right\|^2
	}
	+\lambda
	\riskutility(X-Z)
	,
	\label{thrm_existsnce_uniqueness_eq_defining_optimization_problem}
	\end{equation}
	is non-empty, and let $\eeRA{X}$ denote any member of that set.  
	Then~\eqref{PG} admits a solution, for the risk-aversion level $M_{\lambda}$ defined implicitly by
	\begin{equation}
M_{\lambda}\triangleq \frac{(1-\lambda)}{2\lambda}\cdot \riskutility\left(
X-
\eeRA{X}
\right)
\label{key}
.
	\end{equation}
\end{thrm}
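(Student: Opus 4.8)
The plan is to exhibit the point $\eeRA{X}$ --- which by hypothesis already solves the scalarized trade-off problem~\eqref{thrm_existsnce_uniqueness_eq_defining_optimization_problem} --- as a minimizer of the constrained problem~\eqref{PG} at the level $M_{\lambda}$. Throughout, abbreviate $f_0(Z)\triangleq\ee{\|X-Z\|^2}$ and $g(Z)\triangleq\riskutility(X-Z)$, both convex on the closed convex set $\Lp{\gggg}\cap\Phi$, so that~\eqref{thrm_existsnce_uniqueness_eq_defining_optimization_problem} minimizes $(1-\lambda)f_0+\lambda g$ while~\eqref{PG} minimizes $f_0$ subject to $g\leq M$ over the \emph{same} feasible set. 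Setting $Z^{\star}\triangleq\eeRA{X}$, membership in $\Lp{\gggg}\cap\Phi$ is automatic, so only the risk-utility budget and global optimality remain to be verified.

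First I would record the stationarity of $Z^{\star}$ for the scalarized objective. Since $f_0$ is G\^{a}teaux differentiable with $\nabla f_0(Z^{\star})=2(Z^{\star}-X)$, and since Assumption~\ref{ass_reg_cone}(iii) guarantees $\partial\riskutility(X-Z^{\star})\neq\emptyset$, optimality of $Z^{\star}$ over $\Lp{\gggg}\cap\Phi$ yields an inclusion of the form $0\in 2(1-\lambda)(Z^{\star}-X)-\lambda\,\partial\riskutility(X-Z^{\star})+N(Z^{\star})$, where $N(\cdot)$ is the normal cone to $\Lp{\gggg}\cap\Phi$. The subdifferential sum rule used here is legitimized precisely by the constraint qualification in Assumption~\ref{ass_reg_cone}(ii): the norm-closedness of $\operatorname{cone}(\Phi-\hat{Z})$ in clause (a), together with the interiority condition $\hat{Z}\in-\operatorname{int}(\{Y:\riskutility(Y)\leq M\})$ in clause (b), supplies the Slater-type regularity needed both for strong duality and for splitting the subdifferential of the sum. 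Reading this inclusion as the KKT system for~\eqref{PG}, the natural multiplier is $\mu=\tfrac{\lambda}{1-\lambda}>0$, and the inclusion becomes exactly the first-order condition for the constrained problem at an \emph{active} constraint; because $\mu>0$, complementary slackness forces the budget to bind, pinning the admissible level to the quantity recorded in~\eqref{key}. Clause (b) simultaneously guarantees that the feasible set of~\eqref{PG} at level $M_{\lambda}$ is nonempty (it contains the strictly feasible $\hat{Z}$) and that the binding value is finite.

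Global optimality then closes by an elementary convex scalarization (Pareto) comparison against the binding level $g(Z^{\star})$. For any competitor $Z\in\Lp{\gggg}\cap\Phi$ satisfying $g(Z)\leq g(Z^{\star})$, the defining minimality of $Z^{\star}$ gives $(1-\lambda)f_0(Z)+\lambda g(Z)\geq(1-\lambda)f_0(Z^{\star})+\lambda g(Z^{\star})$, whence $(1-\lambda)\big(f_0(Z)-f_0(Z^{\star})\big)\geq\lambda\big(g(Z^{\star})-g(Z)\big)\geq 0$; since $1-\lambda>0$ this forces $f_0(Z)\geq f_0(Z^{\star})$. Thus no feasible point of~\eqref{PG} beats $Z^{\star}$, and $\eeRA{X}$ is a solution of~\eqref{PG} at the stated risk-aversion level.

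The main obstacle I anticipate is not the Pareto comparison, which is routine once the constraint binds, but the two infinite-dimensional convex-analytic inputs feeding it: establishing the sum rule / strong duality for the \emph{composite} constraint $\riskutility=\rho\circ\utility$ over the feature set $\Phi$, where one must genuinely invoke the closed-cone and interiority clauses of Assumption~\ref{ass_reg_cone} rather than a naive Slater condition, and then correctly propagating the trade-off weights $(1-\lambda,\lambda)$ and the quadratic factor of two through the stationarity relation to recover the precise constant $\tfrac{(1-\lambda)}{2\lambda}=\tfrac{1}{2\mu}$ appearing in~\eqref{key}. Care is also warranted because the objective of~\eqref{PG} is taken under $\qq$ while the penalty term is computed under $\pp$; the gradient identity for $f_0$ and the identification of $\mu$ must therefore be carried out in the measure actually governing the quadratic term, and this measure change is the most likely source of the normalization factor in $M_{\lambda}$.
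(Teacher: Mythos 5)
Your overall route is the same as the paper's: the paper likewise reparameterizes the scalarized problem, invokes Assumption~\ref{ass_reg_cone} to verify Jeyakumar's generalized Slater condition (SCQ1), reads off a KKT system for~\eqref{PG} consisting of a subdifferential inclusion with a normal-cone term plus a slackness condition, applies Fermat's rule to show $\eeRA{X}$ satisfies the inclusion, and concludes. Your third paragraph is the one genuine addition, and it is a good one: the Pareto comparison shows directly --- with no constraint qualification, no subdifferential sum rule, indeed not even convexity --- that a minimizer of the scalarized objective solves the constrained problem at the binding level, whereas the paper obtains global optimality only by citing the if-and-only-if KKT characterization. Taken alone, that paragraph is a shorter and more self-contained proof of solvability of~\eqref{PG} than the paper's.

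The genuine gap is the constant, and your attempt to account for it does not work. Both your KKT step and your Pareto comparison pin the admissible level at $\riskutility\left(X-\eeRA{X}\right)$: standard complementary slackness with multiplier $\mu=\lambda/(1-\lambda)>0$ reads $\mu\left(\riskutility(X-\eeRA{X})-M\right)=0$, forcing $M=\riskutility(X-\eeRA{X})$; and your Pareto argument certifies optimality of $\eeRA{X}$ only when the budget equals this attained value (for larger $M$ a feasible competitor may trade a worse risk-utility value for a smaller quadratic error, and for smaller $M$ the point $\eeRA{X}$ is itself infeasible). So what you prove is the theorem with $M_{\lambda}=\riskutility(X-\eeRA{X})$, not with the prefactor $\tfrac{1-\lambda}{2\lambda}$ of~\eqref{key}; no KKT mechanism produces a factor $\tfrac{1}{2\mu}$ multiplying the attained constraint value. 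In the paper, that prefactor comes from bookkeeping rather than from the multiplier: the scalarized objective is rescaled so the quadratic term carries coefficient $\tfrac12$ (the reparameterization $\tilde{\lambda}$ of Remark~\ref{rem_reparemterization_of_lambda}), and the slackness condition is then written as $\tilde{\lambda}\,\riskutility(X-\eeRA{X})-M_{\lambda}=0$ rather than as $\tilde{\lambda}\left(\riskutility(X-\eeRA{X})-M_{\lambda}\right)=0$ --- a convention (arguably an internal inconsistency of the paper) that your, more standard, formulation will never reproduce. Finally, your closing speculation that the $\pp$-versus-$\qq$ discrepancy is ``the most likely source of the normalization factor'' is off the mark: the paper's proof never performs a change of measure at all; it treats the quadratic terms of~\eqref{thrm_existsnce_uniqueness_eq_defining_optimization_problem} and~\eqref{PG} under one and the same expectation, so no factor can arise there.
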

\begin{remark}
The parameter $\lambda \in [0,1)$, controls the confidence in the investor's model.  If $\lambda\approx 0$, then the investor is highly certain that model is correct.  Conversely, if $\lambda\approx 1$ then they have low confidence in their choice of model, and therefore would be incorporating a high-level of risk-aversion in their estimates.  
\end{remark}
\begin{remark}\label{rem_IC_rep}
If it exists, specifying a map taking $X$ to a minimizer of $f\square g$, where
\begin{equation}
f(Y)\triangleq (1-\lambda)\ee{
	\left\|
	Y
	\right\|^2
}
+\lambda
\riskutility(Y)
;\qquad
g(Y)\triangleq \iota_{\Lp{\gggg}\cap\Phi}(Y).
\label{eq_inf_conv_descr}
\end{equation}
is equivalent to specifying a selection, taking $X$ to an element of the set~\eqref{thrm_existsnce_uniqueness_eq_defining_optimization_problem}.  
\end{remark}
\begin{proof}[Proof of Theorem~\ref{thrm_Lagrangian_Formulation}]
	Since taking the $\operatorname{arginf}$ is invariant under multiplication by a positive constant and addition of real numbers, then~\eqref{thrm_existsnce_uniqueness_eq_defining_optimization_problem} may be re-parameterized as
	\begin{align}
	&
	\arginf{Z \in \Lp{\ffff[]} } 
	\frac1{2}\ee{
		\left\|
		X-Z
		\right\|^2
	}
	+\tilde{\lambda}
	\left[\riskutility(X-Z)- M_{\lambda}\right]
	+\iota_{\Phi\cap \Lp{\gggg}}(Z)
	;
	\label{thrm_existsnce_uniqueness_eq_defining_optimization_problem_alt}
	\end{align}
	note that the finiteness of $M_{\lambda}$ is guaranteed by Definition~\ref{defn_perspective}(iii)%
	.

	Since $\Lp{\gggg}$ is linear, it is convex; hence $\Lp{\gggg}\cap \Phi$ is convex.  By assumption $\Phi$ is closed and since $\Lp{\gggg}$ is closed then $\Lp{\gggg}\cap\Phi$ is also closed.  
	
	This together with Assumption~\ref{ass_reg_cone} implies that the \citep[SCQ1 condition]{jeyakumar1992generalizations} holds.  Since there is no additional affine constraint in the optimization problem of Equation~\ref{PG}, then the \citep[A Generalized Slater's Condition; Corollary 5.1]{jeyakumar1992generalizations} %
	holds.  Thus $\eeRA{X}$ is optimal for the problem of Equation~\ref{PG} if and only if
	\begin{equation}
	\begin{aligned}
	0\in &\partial \frac1{2}\ee{
		\left\|
		X-Z
		\right\|^2
	}
	+ \tilde{\lambda} \partial\riskutility(X-Z)
	+ N_{\Phi\cap \Lp{\gggg}}(Z)
	\\
	= &
	\partial \left(\frac1{2}\ee{
		\left\|
		X-Z
		\right\|^2
	}
	+ \tilde{\lambda} 
	\riskutility(X-Z)-M_{\lambda}
	+ \iota_{\Phi\cap \Lp{\gggg}}(Z)
	\right) 
	.
	\end{aligned}
	\label{eq_KKT_gen_slater}
	\end{equation}
	and $\tilde{\lambda} 
	\riskutility(X-Z)-M_{\lambda}
	=0$.  
	
	It follows from \citep[Fermat's rule; Theorem 16.3]{ConvexMonoCMS} that $\eeRA{X}$ indeed satisfies Equation~\eqref{eq_KKT_gen_slater}, and the definition of $M_{\lambda}$ implies that 
	$$
	\tilde{\lambda}
	\riskutility(X-\eeRA{X})-M_{\lambda}
	=0
	.
	$$  
	Hence the two problems are equivalent, whenever~\eqref{thrm_existsnce_uniqueness_eq_defining_optimization_problem} admits a solution.  
\end{proof}
\begin{ex}[Additional Risk Constraint on Estimator]\label{ex_Lagrangian_mRC}
	Consider the feature set $\Phi=\aaaa_{R}^{g,m}$ of Example~\eqref{ex_accept_RM}.  Under the assumptions of Theorem~\ref{thrm_Lagrangian_Formulation}, if it exists, $\rrrr(X)$ is of the form
	\begin{equation}
	\begin{aligned}
	\rrrr(X) \in\arginf{Z \in \Lp{\gggg}}&
	(1-\lambda)\ee{
		\left\|
		X-Z
		\right\|^2
	}
	+\lambda
	\riskutility(X-Z)\\
	\mbox{subject to: }& R(g(Z))\leq m
	.
	\end{aligned}
	\label{eq_super_powers}
	\end{equation}
	Suppose that there exits $\eta>0$ such that
	\begin{equation}
	\begin{aligned}
	\arginf{Z \in \Lp{\gggg}\cap \aaaa_{R}^{g,m}}&
	(1-\lambda)\ee{
		\left\|
		X-Z
		\right\|^2
	}
	+\lambda
	\riskutility(X-Z)
	\\
	=
	\arginf{Z \in \Lp{\gggg}}&
	(1-\lambda)\ee{
		\left\|
		X-Z
		\right\|^2
	}
	+\lambda
	\riskutility(X-Z) +\eta \left(R\left(g(Z)\right)-m\right)\\
	=
	\arginf{Z \in \Lp{\gggg}}&
	(1-\lambda)\ee{
		\left\|
		X-Z
		\right\|^2
	}
	+\lambda
	\riskutility(X-Z) +\eta R\left(g(Z)\right)
	.
	\end{aligned}
	\label{eq_super_powers_alt_Lagrangian}
	\end{equation}
	
	In the special case where $\lambda=0$ (or equivalently where $M=\infty$), the $\rrrr$-conditioning operator reduces to a proximal mapping operator
	$$
	\rrrr^0(\cdot)=\Prox{\eta R\circ g}(\cdot).
	$$
\end{ex}
The result of Theorem~\eqref{thrm_Lagrangian_Formulation} was contingent on~\eqref{thrm_existsnce_uniqueness_eq_defining_optimization_problem} admitting a solution.  The next result confirms that this is always the case when $\lambda \in [0,1)$.  Moreover, that the solution is unique and arises from a well-posed problem.  
\begin{thrm}[%
	Well-Posedness]\label{thrm_existsnce_uniqueness}
	Let $X$ be in $\Lp{\ffff}$ %
	 and let $\rrrr$ be a risk-perspective, and let $\lambda \in [0,1)$.  Then 
	\begin{enumerate}[(i)]
\item (Existence and Uniqueness:) The set~\eqref{thrm_existsnce_uniqueness_eq_defining_optimization_problem} contains exactly one element,
\item (Continuity wrt $X$:) 
\begin{enumerate}[(a)]
\item If $\rho$ is finite-valued and $\utility$ is continuous, then the map 
\begin{equation}
X\mapsto \inf_{Z \in \Lp{\gggg}\cap \Phi }
(1-\lambda)\ee{
	\left\|
	X-Z
	\right\|^2
}
+\lambda
\riskutility(X-Z)
\label{eq_optim_val_defn}
,
\end{equation}
is continuous.  Moreover, if $\riskutility$ is Lipschitz, then~\eqref{eq_optim_val_defn} is Lipschitz.  
\item Furthermore, if the map $\riskutility$ is Fr\'{e}chet Differentiable on $\Lp{\ffff[]}$ and the map
$$
X \mapsto (\nabla \riskutility)\left(X - \rrrr(X)\right),
$$
is $k$-Lipschitz, where $k\in (0,\infty)$, then the map $X\mapsto \rrrr\left(X\right)$ is Lipschitz with constant $\left(1+\frac{\lambda k}{2(1-\lambda)}\right)$.  
\end{enumerate}
	\end{enumerate}
\end{thrm}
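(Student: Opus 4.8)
The plan is to handle the three assertions in turn, exploiting throughout that for $\lambda\in[0,1)$ the objective
\[
J_X(Z)\triangleq (1-\lambda)\ee{\|X-Z\|^2}+\lambda\riskutility(X-Z)
\]
is strongly convex on the Hilbert space $\Lp{\ffff}$: the quadratic term $Z\mapsto(1-\lambda)\ee{\|X-Z\|^2}$ is $2(1-\lambda)$-strongly convex (its Hessian is $2(1-\lambda)I$), and $Z\mapsto\riskutility(X-Z)$ is convex as the composition of the convex $\riskutility$ with an affine map.

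For (i), I would first record that $J_X+\iota_{\Lp{\gggg}\cap\Phi}$ is proper: the compatibility condition in Definition~\ref{defn_perspective}(iv) supplies a point of $\Lp{\gggg}\cap\Phi\cap\dom{\riskutility}$ at which the objective is finite, while the lower bound in Definition~\ref{defn_perspective}(iii) together with $\ee{\|X-Z\|^2}\geq 0$ keeps it above $-\infty$. It is lower semicontinuous (continuous quadratic term, lsc $\riskutility$ precomposed with $Z\mapsto X-Z$, and the indicator of the closed convex set $\Lp{\gggg}\cap\Phi$), and strong convexity makes it coercive. Existence then follows from the direct method (weak lower semicontinuity of convex functions plus weak compactness of sublevel sets in the reflexive space $\Lp{\ffff}$), and strong convexity forces uniqueness; hence \eqref{thrm_existsnce_uniqueness_eq_defining_optimization_problem} is a singleton and, in particular, Theorem~\ref{thrm_Lagrangian_Formulation} applies unconditionally for $\lambda\in[0,1)$.

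For (ii)(a), the cleanest route is to note that $(X,Z)\mapsto (1-\lambda)\ee{\|X-Z\|^2}+\lambda\riskutility(X-Z)+\iota_{\Lp{\gggg}\cap\Phi}(Z)$ is jointly convex, so its partial infimum over $Z$, namely the value function \eqref{eq_optim_val_defn}, is convex in $X$. When $\rho$ is finite-valued and $\utility$ is continuous, $\riskutility$ is finite and continuous, and evaluating $J_X$ at the fixed feasible point of Definition~\ref{defn_perspective}(iv) shows the value function is finite everywhere; a finite convex function on a Hilbert space is continuous (indeed locally Lipschitz) by the extended Namioka theorem. For the Lipschitz refinement I would use the envelope estimate: feeding the optimizer $\eeRA{X_1}$ into the problem for $X_2$ bounds the increment of the value function by the increment of the quadratic term—locally Lipschitz with a constant governed by $\|X_1\|,\|X_2\|$—plus $\lambda$ times the increment of $\riskutility$, the latter controlled by its Lipschitz constant; swapping the roles of $X_1,X_2$ gives the two-sided bound.

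The heart of the theorem is (ii)(b), which I would prove by a monotonicity argument on the first-order optimality conditions. Writing $Z_i\triangleq\eeRA{X_i}$ for the unique minimizers from (i), the Generalized Slater's condition already invoked in the proof of Theorem~\ref{thrm_Lagrangian_Formulation} (via Assumption~\ref{ass_reg_cone}) yields the variational inequality
\[
\dlprod[2(1-\lambda)(X_i-Z_i)+\lambda\nabla\riskutility(X_i-Z_i)][Y-Z_i]\leq 0\qquad (Y\in\Lp{\gggg}\cap\Phi),
\]
where Fr\'echet differentiability of $\riskutility$ lets me replace the subdifferential by the gradient. Testing the case $i=1$ with $Y=Z_2$ and $i=2$ with $Y=Z_1$ and adding, then writing $\Delta X=X_1-X_2$, $\Delta Z=Z_1-Z_2$ and expanding, gives
\[
2(1-\lambda)\|\Delta Z\|^2\leq 2(1-\lambda)\dlprod[\Delta X][\Delta Z]+\lambda\dlprod[\nabla\riskutility(X_1-Z_1)-\nabla\riskutility(X_2-Z_2)][\Delta Z].
\]
Applying Cauchy--Schwarz, the $k$-Lipschitz hypothesis on $X\mapsto(\nabla\riskutility)(X-\eeRA{X})$, and dividing by $\|\Delta Z\|$ yields $\|\Delta Z\|\leq\bigl(1+\tfrac{\lambda k}{2(1-\lambda)}\bigr)\|\Delta X\|$, exactly the asserted constant. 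The main obstacle is the rigorous justification of this variational inequality: verifying that the normal-cone optimality condition and the subdifferential sum rule hold, which is where Assumption~\ref{ass_reg_cone} and the Slater-type qualification do the real work, and checking that the common cone $S$ of Assumption~\ref{ass_reg_cone}(i) contains the optimizers so that $\nabla\riskutility$ is evaluated consistently along $X\mapsto X-\eeRA{X}$.
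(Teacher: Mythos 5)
Your proposal is correct in substance and reaches all three conclusions, but by a genuinely different route than the paper, and in one place by a cleaner one. For (i), you use strong convexity of $J_X$ plus the direct method; the paper instead casts~\eqref{thrm_existsnce_uniqueness_eq_defining_optimization_problem} as an infimal convolution $f\square g$ with $f(Y)=\frac1{2}\ee{\|Y\|^2}+\tilde{\lambda}\left[\riskutility(Y)-M_{\lambda}\right]$ and $g=\iota_{\Lp{\gggg}\cap\Phi}$, and invokes exactness of that convolution (super-coercivity plus strict convexity, via \citep[Proposition 12.14(i), Corollary 11.16(i)]{ConvexMonoCMS}). The two arguments have the same content --- strong convexity delivers both the super-coercivity and the strict convexity the paper needs --- but yours is more elementary and self-contained. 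For (ii)(b) the divergence is sharper: the paper applies Fermat's rule as if the problem were unconstrained, obtaining the identity $\rrrr(X)=X+\tfrac{\lambda}{2(1-\lambda)}(\nabla\riskutility)(X-\rrrr(X))$ and then the triangle inequality, which silently discards the normal cone of $\Lp{\gggg}\cap\Phi$. Your variational-inequality argument keeps the constraint, and the cross-testing trick ($i=1$ against $Y=Z_2$, $i=2$ against $Y=Z_1$) recovers exactly the constant $1+\frac{\lambda k}{2(1-\lambda)}$; it is the more rigorous of the two proofs. Your closing worry is also unfounded: since $J_X$ is convex and Fr\'{e}chet differentiable, the variational inequality characterizing minimizers over the convex set $\Lp{\gggg}\cap\Phi$ is elementary (one-sided directional derivatives along segments in the feasible set), and needs neither Assumption~\ref{ass_reg_cone} nor any Slater-type qualification or subdifferential sum rule.

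The one step you should repair is the continuity claim in (ii)(a). As stated, ``a finite convex function on a Hilbert space is continuous by the extended Namioka theorem'' is false: in infinite dimensions finite convex functions can be discontinuous (unbounded linear functionals), and Namioka-type theorems require monotonicity, which the value function~\eqref{eq_optim_val_defn} does not enjoy in $X$. The paper avoids this by first getting continuity of the finite-valued $\rho$ from \citep[Corollary 2.3]{KainaCRMs} and then citing Moreau's result that the infimal convolution of a continuous convex function bounded below with a convex function bounded below is continuous. Your own argument, however, already contains the fix: with $Z_0$ the fixed feasible point of Definition~\ref{defn_perspective}(iv), the bound $v(X)\leq J_X(Z_0)$ exhibits $v$ as locally bounded above by a continuous function of $X$, the bound $v(X)\geq \lambda\inf_{Z}\riskutility(Z)>-\infty$ shows $v$ is proper, and a proper convex function locally bounded above at a point is continuous (indeed locally Lipschitz) there; no Namioka-type theorem is needed. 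Finally, on the ``Moreover'' clause, your honest conclusion --- local Lipschitzness with constants depending on $\|X_1\|,\|X_2\|$ --- is what the envelope argument actually yields, and the paper's claim of global Lipschitzness is not supported by its own proof either: taking $\riskutility\equiv 0$ makes the value function a multiple of a squared distance, which is not globally Lipschitz.
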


\begin{proof}[Proof of Theorem~\ref{thrm_existsnce_uniqueness}]
	Recall that, since $\hhh$ is Hilbert, then so is $\Lp{\ffff[]}$.  Furthermore, note that $\Lp{\gggg}$ is a Hilbert subspace of $\Lp{\ffff}$.  Moreover, note that the $\operatorname{arginf}$ operation is invariant under multiplication by positive constants and addition by constants; whence~\eqref{thrm_existsnce_uniqueness_eq_defining_optimization_problem} may be equivalently expressed as
	\begin{equation}
\arginf{Z \in \Lp{\ffff[]} } 
\frac1{2}\ee{
	\left\|
	X-Z
	\right\|^2
}
+\tilde{\lambda}
\left[\riskutility(X-Z)- M_{\lambda}\right]
+\iota_{\Phi\cap \Lp{\gggg}}(Z)
\label{thrm_existsnce_uniqueness_eq_defining_optimization_problem_altt}
.
	\end{equation} 

Since $\Lp{\gggg} $ is a closed linear subspace of $\Lp{\ffff}$, it is a closed convex subset of $\Lp{\ffff}$.  Therefore 
$\iota_{\Lp{\gggg}\cap\Phi}$ is convex and strongly-l.s.c.\ on $\Lp{\ffff}$.%

Since the set $\aaarho$ is non-empty and closed, \citep[Lemma 2.3]{filipovic2007convex} implies that $\rho$ is strongly-l.s.c.  
Define the functions $f$ and $g$ by
$$
\begin{aligned}
f(X)\triangleq & \frac1{2}\ee{\|X\|^2} + \left[\riskutility(X)-M_{\lambda}\right],\\
g(X)\triangleq & \iota_{\Lp{\gggg}[]\cap \Phi}(X),\\
h(X)\triangleq & f\square g (X),
\end{aligned}
$$
where $\square$ is the infimal convolution operator.  Notice that $h$ is minimized by the map taking $X$ to an element (if it exists) of the set~\eqref{thrm_existsnce_uniqueness_eq_defining_optimization_problem_altt}.  

Since $\riskutility$ is bounded below by $M\triangleq \inf_{Z \in \Lp{\ffff[]}}\riskutility(Z)$ and this quantity is finite, by definition of $\utility$, then
$$
\lim\limits_{\sqrt{\ee{\|X\|^2}}\mapsto \infty} \frac{f(X)}{\sqrt{\ee{\|X\|^2}}} \geq \frac1{2}\sqrt{\ee{\|X\|^2}} + \frac{M}{\sqrt{\ee{\|X\|^2}}} = \infty,
$$
therefore $f$ is super-coercive.  Moreover, since $\riskutility$ is convex, strongly lsc, and $\frac1{2}\ee{\|X\|^2}$ is strictly convex and strongly continuous, then $f$ is super-coercive, strictly convex, and strongly lsc.  Therefore, \citep[Proposition 12.14 (i)]{ConvexMonoCMS} implies that $f\boxdot g=h$; that is the infimal convolution $f\square g$ is exact.  This gives the existence of an element of~\eqref{thrm_existsnce_uniqueness_eq_defining_optimization_problem_altt}.  

The assumption that
$$
\emptyset\neq 
\Lp{\gggg[]}\cap \Phi\cap \aaaa_{\rho} \cap \operatorname{dom}\left(\riskutility\right)
=
\operatorname{dom}(f)\cap \operatorname{dom}(g)
$$
implies that \citep[Coroally 11.16 (i)]{ConvexMonoCMS} is applicable; whence $f+g$ is coercive and \textit{uniquely} attains its minimizers at every single point in $\Lp{\ffff[]}$.  In particular, this implies that $h(X)$ is minimized for every $X \in \Lp{\ffff[]}$.  This gives the uniqueness of an element of~\eqref{thrm_existsnce_uniqueness_eq_defining_optimization_problem_altt}.  Hence, (i)-holds.  

If $\rho$ is finite-valued, then \citep[Corollary 2.3]{KainaCRMs} implies that $\rho$ is continuous on $\Lp{\ffff[]}$.  Therefore, $f$ is continuous on $\Lp{\ffff[]}$, convex, and bounded below by $M$.  Since $g$ is also convex and bounded below (by $0$), then the results of \citep[pp. 37-38]{moreau1966fonctionnelles}, implies that $h$ is itself continuous.  Again appealing to the equivalence between problems~\eqref{thrm_existsnce_uniqueness_eq_defining_optimization_problem_altt} and~\eqref{thrm_existsnce_uniqueness_eq_defining_optimization_problem} yields (ii)-(a).  

Let $f,g$ and $h$ be as in~\eqref{eq_inf_conv_descr}.  Suppose that $\riskutility$ is Fr\'{e}chet differentiable on $\Lp{\ffff[]}$ and and suppose that the map $x\mapsto (\nabla \riskutility)(X-\rrrr(X))$ is $k$-Lipschitz.  Since the infimal convolution $f\square g$ is exact, and $\iota_{\Lp{\gggg}\cap \Phi}$ is lsc, proper, and convex, then \cite[Theorem 1.a]{stromberg1994study} implies that
\begin{equation}
\nabla \left(f\square g\right) = \nabla g = \lambda \left(\nabla \riskutility\right) + \left(\nabla (1-\lambda)\ee{\|\cdot\|}^2\right)
\label{eq_strombergwinninghappytimes_1}
.
\end{equation}
Since $\rrrr(X)\in \Lp{\gggg}\cap \Phi$ then by Fermat's rule \cite[Theorem 16.3]{ConvexMonoCMS}, 
\begin{equation}
\begin{aligned}
\rrrr\left(X\right)\triangleq \arginf{Z \in \Lp{\ffff[]}}[1] g(Z)+ f(X-Z)=& 
\arginf{Z \in \Lp{\gggg}\cap \Phi}[1] f(X-Z)\\
=&\left\{
Y \in \Lp{\ffff[]} : 0 \in \partial f(X-Y)
\right\}\\
=&\left\{
 Y \in \Lp{\ffff[]} : 0 \in \nabla f(X-Y)
\right\}.
\end{aligned}
\label{eq_fermatification_1}
\end{equation}

Combining~\eqref{eq_strombergwinninghappytimes_1} and~\eqref{eq_fermatification_1} we infer that
$$
\begin{aligned}
0=&\nabla h(X) \\
= &\lambda \nabla\riskutility (X-\rrrr(X)) + (1-\lambda)\nabla\ee{\left\|X-\rrrr(X)\right\|^2}\\
=& \lambda \nabla\riskutility (X-\rrrr(X)) + (1-\lambda)2(X-\rrrr(X))
\\
\therefore R(X) =& \frac{\lambda}{2(1-\lambda)} (\nabla\riskutility) (X-\rrrr(X)) +X
.
\end{aligned}
$$

Whence, the triangle inequality and the assumption that $X \mapsto (\nabla\riskutility)(X-\rrrr(X))$ is $k$-Lipschitz imply that
$$
\begin{aligned}
\sqrt{\ee{\|\rrrr(X)-\rrrr(Y)\|^2}} \leq & \sqrt{\ee{\|X- Y\|^2}} 
\\
+& \frac{\lambda}{2(1-\lambda)}\sqrt{
	\ee{
		\left\|
		(\nabla\riskutility)(X-\rrrr(X)) - 
		(\nabla\riskutility)(Y-\rrrr(Y))
		\right\|
	}
}\\
\leq &
\sqrt{\ee{\|X- Y\|^2}} + \frac{\lambda}{2(1-\lambda)} k \sqrt{\ee{\|X- Y\|^2}}
.
\end{aligned}
$$
Therefore (ii)-(b) holds.  
\end{proof}
\begin{remark}[Some Implicit Notation]\label{rem_not}
	Let $\rrrr=(\gggg,\rho,\utility,\Phi,M)$ be a risk-perspective.  From now on, we will denote by $\rrrr^{\lambda}$ the risk-perspective $(\gggg,\rho,\utility,\Phi,M_{\lambda})$ where $M_{\lambda}$ is defined as in~\eqref{key}.  
\end{remark}
Theorems~\ref{thrm_Lagrangian_Formulation} and~\ref{thrm_existsnce_uniqueness} establish the well-posedness of the solution operator of problem~\eqref{PG}.  %
We may therefore make the following definition.  
	\begin{defn}[$\rrrr$-Conditioning]\label{defn_RA_Expectation}
	Let $\rrrr$ be a risk-perspective satisfying Assumption~\ref{ass_reg_cone}, the map defined by
\begin{equation}
	\begin{aligned}
		\operatorname{dom}\left(\utility\right)
		 &\rightarrow \Lp{\gggg}\cap \Phi\\
		X & \mapsto \arginf{Z \in \Lp{\gggg}\cap \Phi}
		(1-\lambda)\ee{
			\|X-Z\|^2
		}
		+\lambda
		\riskutility(X-Z)%
		,
	\end{aligned}
\label{defn_RA_Expectation_eq_defining}
\end{equation}
is called the $\rrrr$-conditioning onto $\gggg$, and will be denoted by $\eeRA{X}$ (or by $\rrrr(X)$ whenever the context is clear).  
	\end{defn}
If we wish to make explicit reference to the parameter $\lambda$ or $\gggg$ defining the risk-perspective $\rrrr$, we will denote the $\rrrr$-conditioning of $X$ by $\eeRA{X}[\gggg]$.  
\begin{remark}[Alternative Parameterization]\label{rem_reparemterization_of_lambda}
Since the $\operatorname{argmin}$ multifunction is invariant with respect to multiplication of a positive scalar, then multiplying~\eqref{defn_RA_Expectation_eq_defining} by $\frac{2}{1-\lambda}$ implies that
\begin{equation}
\begin{aligned}
\rrrr(X)=& \arginf{Z \in \Lp{\gggg}\cap \Phi}
\frac1{2}\ee{
	\|X-Z\|^2
}
+\tilde{\lambda}
\riskutility(X-Z)%
,
\end{aligned}
\label{defn_RA_Expectation_eq_defining_alt}
\end{equation}
where $\tilde{\lambda}\triangleq \frac{2\lambda}{(1-\lambda)}$.  

Note, further that if $\rrrr(X)$ is given in the form~\eqref{defn_RA_Expectation_eq_defining_alt}, then $\lambda$ may be recovered and found to be $\lambda =\frac{\tilde{\lambda}}{\tilde{\lambda}+2}.$  
\end{remark}
\subsection{Risk-Averse Valuation}\label{s_RA_Pricing}
Using Theorems~\ref{thrm_Lagrangian_Formulation} and~\ref{thrm_existsnce_uniqueness}, the risk-averse valuation Problem~\ref{RA} is well-defined, and uniquely characterized by $\rrrr$.%

\begin{defn}[Risk-Averse Value]\label{defn_ra_val}
	Let $\rrrr$ be a risk-perspective with $\gggg=\ffff[0]$, $f:\hhh\rightarrow \rr$ a Borel-measurable payoff function, and assume that~\ref{ass_reg_cone} is satisfied.  	
	The $\rrrr$-risk-averse value, denoted $V^{\rrrr}$, of the derivative with payoff $f(X_T)$ is defined to be
	\begin{equation}
	\begin{aligned}
	V^{\rrrr}(T,\lambda)\triangleq &\eeRA{f(X_T)}%
	.
	\end{aligned}
	\end{equation}
\end{defn}
\begin{defn}[Mispricing Risk]\label{defn_RL_RAP}
	Let $Z \in \Lp{\gggg}$ and $\rrrr$ be a risk-perspective.  The quantity 
	$$
	\epsilon^{f}_T(Z)\triangleq \riskutility(f(X_T)-Z),
	$$
	will be called the $\rrrr$-mispricing risk.  
\end{defn}
The mispricing risk represents the risk of estimating the payoff $f(X_T)$ by the optimal estimator $\eeRA{X}$, as quantified by $\utility$.  The next result bounds the mispricing risk of $\eeRA{f(X_T)}$, above by the $\rrrr$-mispricing risk of $\ee{f(X_T)}[\ffff[0]][\qq]$.  
\begin{thrm}[Mispricing Risk Reduction]\label{thrm_risk_reduction}
	Fix a risk-perspective $\rrrr$ such that $\Phi=\Lp{\ffff[]}$.  
	\begin{enumerate}[(i)]
		\item For every $\lambda \in [0,1)$, the mispricing risk-aversion levels $M_{\lambda}$ is bounded by%
		\begin{equation*}
		-\infty 
		<
		\inf_{Z \in \Lp{\gggg}%
		} 
		\epsilon^{f}_T(Z)
		\leq 
		\epsilon^{f}_T\left(
		V^{\rrrr}(T,\lambda)
		\right)
		\leq
		\epsilon^{f}_T(\ee{f(X_T)}[\gggg][\qq])
		,
		\end{equation*}
		\item The map $\lambda \mapsto \epsilon^{f}_T\left(
		V^{\rrrr}(T,\lambda)
		\right)$ is monotonically decreasing,
		\item The mispricing risk-aversion level, asymptotically achieves minimal mispricing risk, in the sense that
		\begin{equation}
		\lim\limits_{\lambda \uparrow \infty} 
		\epsilon^{f}_T\left(
		V^{\rrrr}(T,\lambda)
		\right)
		\downarrow
		\inf_{Z \in \Lp{\gggg}\cap \Phi} 
		\epsilon^{f}_T\left(
		Z
		\right)
		\label{eq_consequence_main_thrm_2}
		.
		\end{equation}
	\end{enumerate}
\end{thrm}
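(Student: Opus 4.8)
The plan is to reduce all three parts to a single monotonicity property of the optimal mispricing risk in the trade-off weight. First I would abbreviate $Y:=f(X_T)$, write $D(Z):=\eemes{\|Y-Z\|^2}[\qq]$ for the fidelity term, and recall that $\epsilon^{f}_T(Z)=\riskutility(Y-Z)$. Since $\Phi=\Lp{\ffff[]}$, the feasible set $\Lp{\gggg}\cap\Phi$ is just $\Lp{\gggg}$, and Theorem~\ref{thrm_existsnce_uniqueness} supplies, for each $\lambda\in[0,1)$, a unique minimizer $V^{\rrrr}(T,\lambda)$ of $Z\mapsto (1-\lambda)D(Z)+\lambda\epsilon^{f}_T(Z)$ over $\Lp{\gggg}$. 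At $\lambda=0$ the objective is pure fidelity, so $V^{\rrrr}(T,0)=\ee{f(X_T)}[\gggg][\qq]$ is the risk-neutral conditional expectation, which identifies the right-hand comparison point in part~(i). Because the $\operatorname{arginf}$ is unchanged under multiplication by the positive constant $\tfrac{1}{1-\lambda}$, I would reparameterize through $t:=\tfrac{\lambda}{1-\lambda}$, a strictly increasing bijection of $[0,1)$ onto $[0,\infty)$, so that $V^{\rrrr}(T,\lambda)$ is exactly the minimizer of $D+t\,\epsilon^{f}_T$.

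For part~(ii), I would take $t_1<t_2$ with respective minimizers $Z_1,Z_2$ and write down the optimality of each against the other competitor: $D(Z_1)+t_1\epsilon^{f}_T(Z_1)\le D(Z_2)+t_1\epsilon^{f}_T(Z_2)$ and $D(Z_2)+t_2\epsilon^{f}_T(Z_2)\le D(Z_1)+t_2\epsilon^{f}_T(Z_1)$. Adding these and cancelling the $D$-terms yields $(t_1-t_2)(\epsilon^{f}_T(Z_1)-\epsilon^{f}_T(Z_2))\le 0$, so $t_1<t_2$ forces $\epsilon^{f}_T(Z_1)\ge\epsilon^{f}_T(Z_2)$; since $t$ is increasing in $\lambda$, the map $\lambda\mapsto\epsilon^{f}_T(V^{\rrrr}(T,\lambda))$ is non-increasing. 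It is worth noting that this step uses only the optimality of the two minimizers, not convexity; convexity enters only through the existence and uniqueness provided by Theorem~\ref{thrm_existsnce_uniqueness}.

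Part~(i) then follows quickly. The leftmost strict inequality uses the lower bound on $\riskutility$ required in Definition~\ref{defn_perspective}(iii) (and invoked in the proof of Theorem~\ref{thrm_existsnce_uniqueness}): since $\epsilon^{f}_T(Z)=\riskutility(Y-Z)$ is bounded below by the finite constant $\inf_{W\in\Lp{\ffff[]}}\riskutility(W)$, the infimum over $\Lp{\gggg}$ is finite. The middle inequality is immediate because $V^{\rrrr}(T,\lambda)\in\Lp{\gggg}$ is feasible, and the rightmost inequality is part~(ii) applied at $\lambda\ge 0$ together with $V^{\rrrr}(T,0)=\ee{f(X_T)}[\gggg][\qq]$. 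For part~(iii), read as the limit $\lambda\uparrow 1$ (equivalently $t\uparrow\infty$), monotonicity plus the finite lower bound show that $\epsilon^{f}_T(V^{\rrrr}(T,\lambda))$ decreases to some $L\ge\inf_{Z\in\Lp{\gggg}}\epsilon^{f}_T(Z)=:\epsilon^{\star}$. To get $L\le\epsilon^{\star}$, I would fix $\delta>0$, choose $Z^{\star}\in\Lp{\gggg}$ with $\epsilon^{f}_T(Z^{\star})\le\epsilon^{\star}+\delta$, and use optimality together with $D\ge 0$ (and the finiteness of $D$ on $\Lp{\gggg}$) to obtain $\epsilon^{f}_T(V^{\rrrr}(T,\lambda))\le\tfrac{1-\lambda}{\lambda}D(Z^{\star})+\epsilon^{f}_T(Z^{\star})$; sending $\lambda\uparrow 1$ and then $\delta\downarrow 0$ forces $L\le\epsilon^{\star}$.

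The main difficulty here is not analytic — the monotonicity engine is elementary and the convergence is a one-line squeeze — but a matter of bookkeeping the measures consistently. The fidelity term must be read under $\qq$, so that the $\lambda=0$ solution is exactly the risk-neutral conditional expectation $\ee{f(X_T)}[\gggg][\qq]$ appearing in part~(i), and the limit in part~(iii) must be interpreted as $\lambda\uparrow 1$ rather than the literal $\lambda\uparrow\infty$ written in the statement, since $\lambda$ ranges only over $[0,1)$.
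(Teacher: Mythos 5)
Your proof is correct, and it reaches the theorem by a genuinely different route than the paper. The paper routes everything through Lemma~\ref{lem_Asym_prop_RACE}: it recognizes $\eeRA{X}$ as the proximal point $\Prox{\tilde{\lambda}f^X}(X)$ of $f^X(\cdot)\triangleq\iota_{\Lp{\gggg}\cap\Phi}(\cdot)+\riskutility(X-\cdot)$, and then cites the standard asymptotics of the proximal parameter, \citep[Proposition 12.33]{ConvexMonoCMS}, to obtain at once the monotonicity of $\lambda\mapsto\riskutility\left(X-\eeRA{X}\right)$ and its decrease to $\inf_{Z\in\Lp{\gggg}\cap\Phi}\riskutility(X-Z)$; the theorem is then read off by composing with the increasing reparameterization $\lambda\mapsto\tilde{\lambda}$ of~\eqref{key}. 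You instead prove both facts by hand: monotonicity via the exchange argument (add the two optimality inequalities at weights $t_1<t_2$ and cancel the fidelity terms), and the limit via a squeeze against a near-optimal competitor $Z^{\star}$, using $\epsilon^{f}_T\left(V^{\rrrr}(T,\lambda)\right)\leq\tfrac{1-\lambda}{\lambda}\eemes{\|f(X_T)-Z^{\star}\|^2}[\qq]+\epsilon^{f}_T(Z^{\star})$. What each buys: your argument is self-contained, needs only the existence of minimizers (no convexity beyond what Theorem~\ref{thrm_existsnce_uniqueness} already provides, and no proximal machinery), and it makes explicit two points the paper's two-line proof leaves implicit --- that the limit must be read as $\lambda\uparrow 1$, equivalently $\tilde{\lambda}\uparrow\infty$, and that $V^{\rrrr}(T,0)=\ee{f(X_T)}[\gggg][\qq]$, which is exactly what converts part (ii) into the right-hand inequality of part (i); the paper's route is shorter given the cited result and situates the operator inside proximal theory, which it reuses in Section~\ref{s_compute}. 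One caveat shared by both arguments: each implicitly assumes that some $Z\in\Lp{\gggg}$ has finite mispricing risk (you need it to select $Z^{\star}$; the paper needs it for $f^X$ to be proper so that \citep[Proposition 12.33]{ConvexMonoCMS} applies), a non-degeneracy already built into the paper's standing assumptions.
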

The proof of Theorem~\ref{thrm_risk_reduction} will rely on the following technical Lemma.  
\begin{lem}%
	\label{lem_Asym_prop_RACE}
	The operator $\eeRA{X}$ has the following properties
	\begin{enumerate}[(i)]
		\item For any $X \in \Lp{\gggg}\cap \Phi$, 
		$$
		\lim\limits_{\lambda \uparrow \infty} \riskutility\left(\eeRA{X}\right) \downarrow \inf_{Z \in \Lp{\gggg}\cap \Phi} \riskutility(Z),
		$$
		\item The map $\lambda\mapsto  \riskutility\left(\eeRA{X}\right)$ is monotonically decreasing.
	\end{enumerate}
\end{lem}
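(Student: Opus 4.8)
The plan is to pass to the reparameterization of Remark~\ref{rem_reparemterization_of_lambda}, under which $\eeRA{X}$ is, by~\eqref{defn_RA_Expectation_eq_defining_alt} and Theorem~\ref{thrm_existsnce_uniqueness}(i), the \emph{unique} minimizer over $\Lp{\gggg}\cap\Phi$ of the accuracy term $\tfrac1{2}\ee{\|X-Z\|^2}$ regularized by $\llambda$ times the risk-utility of the conditioning, where $\llambda=\tfrac{2\lambda}{1-\lambda}$. Since $\lambda\mapsto\llambda$ is a strictly increasing bijection of $[0,1)$ onto $[0,\infty)$, the limit written ``$\lambda\uparrow\infty$'' in~(i) is to be read as $\llambda\uparrow\infty$ (equivalently $\lambda\uparrow 1$). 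Set $m\triangleq\inf_{Z\in\Lp{\gggg}\cap\Phi}\riskutility(Z)$; this is finite by Definition~\ref{defn_perspective}(iii). I would prove~(ii) first, and then obtain~(i) by combining~(ii) with a competitor estimate.

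For~(ii), fix $0\le\lambda_1<\lambda_2<1$, write $\llambda_1<\llambda_2$ for the associated weights, and let $Z_1,Z_2$ denote the corresponding conditionings $\eeRA{X}$. The core step is the two-inequality comparison: testing the optimality of $Z_1$ against the competitor $Z_2$, and then the optimality of $Z_2$ against $Z_1$, yields
\begin{align*}
\tfrac1{2}\ee{\|X-Z_1\|^2}+\llambda_1\,\riskutility(Z_1)&\le \tfrac1{2}\ee{\|X-Z_2\|^2}+\llambda_1\,\riskutility(Z_2),\\
\tfrac1{2}\ee{\|X-Z_2\|^2}+\llambda_2\,\riskutility(Z_2)&\le \tfrac1{2}\ee{\|X-Z_1\|^2}+\llambda_2\,\riskutility(Z_1).
\end{align*}
Adding the two inequalities makes the quadratic terms cancel and leaves $(\llambda_2-\llambda_1)\big(\riskutility(Z_2)-\riskutility(Z_1)\big)\le 0$. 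As $\llambda_2>\llambda_1$, this gives $\riskutility(Z_2)\le\riskutility(Z_1)$, i.e. $\lambda\mapsto\riskutility(\eeRA{X})$ is non-increasing, which is~(ii).

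For~(i), by~(ii) the map $\lambda\mapsto\riskutility(\eeRA{X})$ is non-increasing and bounded below by $m$, hence converges to some $\ell\ge m$ as $\llambda\uparrow\infty$. To show $\ell=m$ I would use a competitor estimate: given $\epsilon>0$ choose $Z_\epsilon\in\Lp{\gggg}\cap\Phi$ with $\riskutility(Z_\epsilon)<m+\epsilon$. Testing the optimality of $\eeRA{X}$ against $Z_\epsilon$ and discarding the nonnegative term $\tfrac1{2}\ee{\|X-\eeRA{X}\|^2}$ gives
$$
\riskutility(\eeRA{X})\le m+\epsilon+\frac{1}{2\llambda}\ee{\|X-Z_\epsilon\|^2}.
$$
Letting $\llambda\uparrow\infty$ yields $\ell\le m+\epsilon$, and since $\epsilon>0$ was arbitrary, $\ell\le m$; combined with $\ell\ge m$ this gives $\ell=m$. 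Together with the monotonicity from~(ii), this is exactly the decreasing convergence asserted in~(i).

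The delicate point — and the step I expect to be the main obstacle — is the well-definedness underpinning the competitor estimate in the regime $\llambda\uparrow\infty$: one must ensure that the infimum $m$ is genuinely approached along feasible competitors and that no gap opens up in the limit. This is handled by the finiteness of $m$ (Definition~\ref{defn_perspective}(iii)), the strong lower semicontinuity of $\riskutility$ together with the closedness and convexity of $\Lp{\gggg}\cap\Phi$ (which furnish the near-optimal competitors $Z_\epsilon$), and the uniform control of the accuracy term $\tfrac1{2}\ee{\|X-Z_\epsilon\|^2}$, which is independent of $\llambda$ and is therefore annihilated after division by $\llambda$. The hypothesis $X\in\Lp{\gggg}\cap\Phi$ in~(i) guarantees the feasibility of these competitors and that $m$ is the correct limiting value.
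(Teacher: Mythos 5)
Your proof is correct, and it reaches the lemma by a genuinely different route than the paper. The paper's own proof is a two-line reduction to proximal-point theory: writing $f^X(\cdot)\triangleq \iota_{\Lp{\gggg}\cap\Phi}(\cdot)+\llambda\,\riskutility(X-\cdot)$, it identifies $\eeRA{X}=\Prox{\llambda f^X}(X)$ and then quotes \citep[Proposition 12.33]{ConvexMonoCMS} for both the monotonicity in the proximal parameter and the decrease of the value to the infimum. You instead inline the elementary variational argument that underlies that cited proposition: the two-inequality cross-testing of the minimizers $Z_1,Z_2$ against each other for (ii), and the competitor estimate $\riskutility(\cdot)\leq m+\epsilon+\tfrac{1}{2\llambda}\ee{\|X-Z_\epsilon\|^2}$ for (i). Your route is self-contained; it uses no convexity of $\riskutility$ for (ii) (only existence of minimizers and finiteness of the optimal values, which Definition~\ref{defn_perspective}(iii)--(iv) and Theorem~\ref{thrm_existsnce_uniqueness} supply, so the $\infty-\infty$ danger in adding the two inequalities is avoided); and it makes explicit the reading of the otherwise ill-typed limit ``$\lambda\uparrow\infty$'' as $\llambda=\tfrac{2\lambda}{1-\lambda}\uparrow\infty$, i.e.\ $\lambda\uparrow 1$, which the paper leaves implicit in the prox parameter. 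What the paper's route buys in exchange is brevity and the prox identification itself, which is re-used elsewhere (e.g.\ the Moreau-envelope arguments in Propositions~\ref{prop_rob_neutral} and~\ref{prop_approx_ext}).

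One notational slip, inherited from the statement rather than introduced by you: your optimality inequalities carry $\riskutility(Z_i)$, whereas the objective defining $\eeRA{X}$ (Definition~\ref{defn_RA_Expectation}, Remark~\ref{rem_reparemterization_of_lambda}) penalizes $\riskutility(X-Z)$; correspondingly, the paper's proof and every later invocation of the lemma (Theorems~\ref{thrm_risk_reduction} and~\ref{thrm_Approximation_Theorem}, cf.~\eqref{eq_help_thank_Lemma}) concern the decrease of $\riskutility\left(X-\eeRA{X}\right)$ to $\inf_{Z\in\Lp{\gggg}\cap\Phi}\riskutility(X-Z)$. Your argument is insensitive to this substitution: replacing $\riskutility(\cdot)$ by $\riskutility(X-\cdot)$ throughout (still proper, lsc, and bounded below by Definition~\ref{defn_perspective}(iii)) leaves the cross-testing cancellation and the competitor bound intact, so this is a mismatch of notation with the defining problem, not a gap in the proof.
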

\begin{proof}[Proof of Lemma~\ref{lem_Asym_prop_RACE}]
	Consider the proper, lsc, convex 
	$$
	f^X(\cdot)\triangleq 	\iota_{\Lp{\gggg}\cap\Phi}(\cdot) + \frac{2\lambda}{1-\lambda} \riskutility(X-\cdot).
	$$
	Then for any $X \in \Lp{\ffff[]}$, the functions $\eeRA{\cdot}$ and $\operatorname{Prox}_{\tilde{\lambda}f^X}(\cdot)$ agree at $X$; that is
	$$
	\eeRA{X}=\operatorname{Prox}_{\tilde{\lambda}f^X}(X)
	. 
	$$ 
	Note that $X \in \operatorname{dom}(\riskutility+\iota_{\Lp{\gggg}\cap\Phi})$ if and only if $X \in \operatorname{dom}(\riskutility)$ and $\iota_{\Lp{\gggg}\cap\Phi}(X)<\infty$; that is $X \in \operatorname{dom}(\riskutility+\iota_{\Lp{\gggg}\cap\Phi})$ if and only if $X$ is $\gggg$-measurable and in $\Phi$.  
	Therefore, the result follow from \citep[Proposition 12.33]{ConvexMonoCMS}.  
\end{proof}
\begin{proof}[Proof of Theorem~\ref{thrm_risk_reduction}]
	Consider the map
	$$
	\left[
	\lambda\mapsto 
	\epsilon^{f}_T\left(
	V^{\rrrr}(T,\lambda)
	\right)
	\right] \triangleq 
	\left[
	\lambda\mapsto 
	\riskutility\left(f(X_T)-\eeRA{f(X_T)}\right)
	\right]
	.
	$$
	The result, then follows from Lemma~\ref{lem_Asym_prop_RACE} and the monotonicity of the map $\lambda\mapsto \tilde{\lambda}$, of~\eqref{key}.  
\end{proof}
\section{Asymptotic Solution to{~\ref{intro_eq_RF}}}\label{s_sol_Uncertainty}

We now return to problem~\ref{intro_eq_RF}.  We will provide a solution to the problem, under the assumption that $\mathcal{X}=\Lp{\ffff[]}$ and that the penalty term is equivalent to the minimal penalty of a proper, lsc, convex risk-measure on $\Lp{\ffff[]}$.  That is, we are interested in solving for the optimizer of the following penalized sub-linear expectation problem%
\begin{equation}
\arginf{Z \in \Lp{\ffff[]}[D]} \sup_{Q \in \qqqq}\eemes{\|f(X_T)-Z\|^2}[Q] %
- \rho^{\star}\left(
\frac{dQ}{d\pp}
\right)
\label{eq_PSLE}
\tag{PSLE}
.
\end{equation}  

Our approach relies on the $\rrrr$-condition operators, as well as the theory of $\Gamma$-convergence.  To this end, we take a moment to review the latter, from a geometric perspective.  
\subsection{$\Gamma$-Convergence}\label{ss_Gamma_Conv}
Let $\rrbar$ denote the set $[-\infty,\infty]$ and let $(X,d)$ be a complete metric space.  

Many mathematical objects are described by the optimization of a specific loss-function $\ell$.  However, it is not uncommon for this loss-function to be challenging to compute either analytically or numerically.  Many authors, such as \cite{gonon2019linearized}, overcome this issue by approximating the problem by a more tractable problem and arguing that the solutions to the problems are also close.

Pioneered in \cite{ADeGiorgiOriginal}, the theory of $\Gamma$-convergence describes the precise conditions required for the optimizers of a sequence of loss-functions $\{\ell_n\}_{n\in\nn}$ to converge to the optimizer of the loss-function $\ell$.  %
Formally, $\Gamma$-convergence is a tool describing precisely when the following problem
\begin{equation}
\lim\limits_{n \uparrow \infty}\arginf{x \in X} \ell_n(x)
\in
\arginf{x \in X} \ell(x)
,
\label{eq_gamma_prob}
\tag{{$\Gamma$}}
\end{equation}
admits a solution.  The solution of Problem~\eqref{eq_gamma_prob} given in \cite{ADeGiorgiOriginal}, stems from the following result of Weierstrass (note the connection with Definition~\ref{defn_perspective}(iii)).
\begin{thrm}[Weierstrass' Theorem;{ \citep[Theorem 2.2]{GammaMEgapackSoNiceandConvenient}}]\label{thrm_weistrass_Theorem}
	Let $(X,d)$ be a complete metric space and $\ell:X\rightarrow \rr$ be a lower semi-continuous function which is mildly coercive, that is there exists a sequentially compact subset $K$ of $X$ such that
	\begin{equation}
	\inf_{x \in K} \ell(x)
	= 
	\inf_{x \in X} \ell(x)
	\label{eq_defn_coercive}
	.
	\end{equation}
	Then $\ell$ admits a minimizer on $X$ if, in addition
	\begin{equation}
	-\infty <\inf_{x \in X}\ell(x)
	\label{defn_weistrass_property}
	.
	\end{equation}
\end{thrm}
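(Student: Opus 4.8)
The plan is to apply the direct method of the calculus of variations, using the three hypotheses (lower semi-continuity, mild coercivity, and boundedness below) in exactly the roles they were designed to play. First I would set $m \triangleq \inf_{x \in X}\ell(x)$. Since $\ell$ is real-valued, $m<\infty$ automatically, while hypothesis~\eqref{defn_weistrass_property} guarantees $m>-\infty$; hence $m \in \rr$ is a genuine real number, and it makes sense to speak of sequences approaching it.

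Next I would produce a minimizing sequence \emph{living inside the compact set}. By mild coercivity~\eqref{eq_defn_coercive} we have $\inf_{x \in K}\ell(x) = m$, so I may choose a sequence $\{x_n\}_{n \in \nn} \subseteq K$ with $\ell(x_n) \to m$. This is the step where the mild coercivity assumption does its essential work: without it, a minimizing sequence over all of $X$ could escape every compact set, and no limiting argument would be available. I would then invoke the sequential compactness of $K$ to extract a subsequence $\{x_{n_k}\}_{k \in \nn}$ converging to some limit $x^{\star}\in K$; membership of the limit in $K$ (hence in $X$) is automatic from sequential compactness, so no separate closedness argument is needed.

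Finally, lower semi-continuity closes the argument. Since $x_{n_k}\to x^{\star}$, the defining inequality of lsc yields $\ell(x^{\star}) \le \liminf_{k \to \infty}\ell(x_{n_k}) = m$, where the last equality holds because $\{\ell(x_{n_k})\}_{k\in\nn}$ is a subsequence of a real sequence converging to $m$. On the other hand, $\ell(x^{\star}) \ge m$ by the very definition of the infimum. Combining the two inequalities gives $\ell(x^{\star}) = m$, so $x^{\star}$ is the desired minimizer.

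I do not expect a serious obstacle here, as this is the classical Weierstrass/direct-method argument. The single point requiring genuine care is the insistence that the minimizing sequence be drawn from $K$ rather than from all of $X$; this is precisely the content of the mild coercivity identity~\eqref{eq_defn_coercive} and is what allows sequential compactness to be brought to bear. Everything else is routine once $m$ is confirmed to be a finite real number.
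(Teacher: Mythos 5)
Your proof is correct: it is the classical direct-method argument (the infimum is finite by hypothesis~\eqref{defn_weistrass_property}, mild coercivity~\eqref{eq_defn_coercive} lets the minimizing sequence be drawn inside $K$, sequential compactness extracts a convergent subsequence with limit $x^{\star}\in K$, and lower semi-continuity plus the definition of the infimum force $\ell(x^{\star})=\inf_{x\in X}\ell(x)$). Note that the paper does not prove this statement itself but imports it from \citep[Theorem 2.2]{GammaMEgapackSoNiceandConvenient}, so your argument is exactly the standard one that the cited reference supplies; as a minor observation, you never need the completeness of $(X,d)$, which is indeed superfluous for this result.
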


We geometrically unpack Theorem~\eqref{thrm_weistrass_Theorem}.  Let us first beginning with the definition of an epigraph, which, in the words of \cite{timsully}, can "be interpreted as the set of points lying on or above [a function's graph]."  
\begin{defn}[Epigraph]\label{defn_epi}
	The epigraph of a function $\ell:X\rightarrow \rrbar$, is the subset of $X \times \rrbar$ defined by
	$$
	\epi{\ell}\triangleq \left\{
	(x,y) \in X \times \rrbar : y\geq F(x)
	\right\}
	.
	$$
\end{defn}

Minimization is understood as travelling along the epigraph until the lowest point is hit.  The assumption of lower semi-continuity (lsc-ity), in Theorem~\ref{thrm_weistrass_Theorem}, states that if a solution exists, then the procedure of moving along the epigraph does, in fact, arrive at it.  In other words, the epigraph is closed.
\begin{lem}[Epigraphical Implications of lsc-ity;{\cite{timsully}}]\label{lem_epi}
	A function $\ell:X \rightarrow \rrbar$ is lsc if and only if $\epi{\ell}$ closed in $X \times \rrbar$.  Moreover, if $X$ is compact and $\ell$ is lsc, then there exists $\hat{x}\in X$ satisfying minimizing $\ell$.  
\end{lem}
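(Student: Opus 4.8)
The plan is to prove the two assertions separately, using throughout that since $(X,d)$ is metric, lower semicontinuity, closedness, and compactness all admit sequential characterizations; the product $X \times \rrbar$ (with $\rrbar = [-\infty,\infty]$ carrying any metric inducing its order topology) is again metric, so sequences suffice there as well. The working characterization of lower semicontinuity I would use is that $\ell$ is lsc if and only if $\liminf_{n} \ell(x_n) \geq \ell(x)$ whenever $x_n \to x$ in $X$.

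For the forward implication of the equivalence, I would assume $\ell$ is lsc and take an arbitrary sequence $(x_n,y_n) \in \epi{\ell}$ converging to $(x,y)$ in $X \times \rrbar$. From $y_n \geq \ell(x_n)$ together with $x_n \to x$, lower semicontinuity yields $y = \lim_n y_n = \liminf_n y_n \geq \liminf_n \ell(x_n) \geq \ell(x)$, so $(x,y) \in \epi{\ell}$ and the epigraph is closed. For the converse I would argue by contradiction: if $\ell$ failed to be lsc at some $x$, there would exist $x_n \to x$ with $\liminf_n \ell(x_n) < \ell(x)$; passing to a subsequence along which $\ell(x_{n_k})$ converges in $\rrbar$ to some $L < \ell(x)$, the points $(x_{n_k},\ell(x_{n_k}))$ lie in $\epi{\ell}$ and converge to $(x,L)$, so closedness would force $L \geq \ell(x)$, a contradiction.

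For the existence statement I would run the direct method. Set $m \triangleq \inf_{x \in X}\ell(x) \in \rrbar$ and pick a minimizing sequence $\{x_n\}_{n\in\nn}$ with $\ell(x_n) \to m$. Compactness of $X$ extracts a subsequence $x_{n_k} \to \hat{x}$, and lower semicontinuity gives $\ell(\hat{x}) \leq \liminf_k \ell(x_{n_k}) = m$; since $m$ is the infimum, $\ell(\hat{x}) \geq m$ as well, whence $\ell(\hat{x}) = m$ and $\hat{x}$ minimizes $\ell$. This is precisely Weierstrass' Theorem~\ref{thrm_weistrass_Theorem} with the mildly coercive set taken to be $K = X$, except that I handle the extended-real-valued case directly so that no separate finiteness hypothesis on $m$ is required.

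The steps are routine rather than deep; the only points demanding genuine care are the extended-real bookkeeping on $\rrbar$ — ensuring $\liminf$, limits, and the inequality $y_n \geq \ell(x_n)$ all pass correctly to the limit when the values $\pm\infty$ are permitted, and that every sequence in the compact space $\rrbar$ has a subsequence converging in $\rrbar$ — and the passages to subsequences in the converse of the equivalence and in the direct method, which is exactly where the metric (hence sequential) structure of $X$ and of $X \times \rrbar$ is doing the work.
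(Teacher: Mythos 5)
Your proof is correct. The paper does not prove this lemma at all---it is quoted with a citation to an external reference---so there is no internal argument to compare against; your proof (the sequential characterization of lower semicontinuity in a metric space for the equivalence, plus the direct method with sequential compactness for existence) is the standard one, and you correctly handle the extended-real-valued bookkeeping, including the case $\inf_{x\in X}\ell(x)=-\infty$, which is exactly why the lemma needs no finiteness hypothesis unlike Theorem~\ref{thrm_weistrass_Theorem}.
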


Geometrically speaking, Lemma~\ref{lem_epi} hints at the fact that if the minimizers of $\ell_n$ are to converge to a minimizer of $\ell$, then according their epigraphs must converge.  In order to make this convergence meaningful, we turn to the Pompeiu-Hausdorff space over $X \times \rrbar$, denoted by $\PH{(X,d)}$.  The points in this metric space are closed subsets of $X\times \rrbar$ and the space itself is topologized through the metric $d_{\PH}$.  The metric $d_{\PH}$ is based on the familiar formula for the distance between a closed subset $B$ of $(X,d)$ and a point therein
\begin{equation}
d_{\PH}(\{x\},B)\triangleq
\inf_{b \in B} d(a,B)
\label{eq_defn_PH_dist_sngl}
;
\end{equation}
more generally, for any two closed subsets of $(X,d)$, their distance is defined by
\begin{equation}
d_{\PH}(A,B)\triangleq
\max\{\,\sup_{a \in A} \inf_{b \in B} d(a,b),\, \sup_{b \in B} \inf_{a \in A} d(a,b)\,\}
\label{eq_defn_PH_dist}
.
\end{equation}
Lemma~\ref{lem_epi} can therefore be seen as a way of topologizing the set of lsc functions on $X$, with a rather tame topological space.

If $(X,d)$ is compact and $\{\ell\}_{n \in\nn}$ forms a regular enough sequence, it turns out that the point-wise convergence, in $\PH{(X,d)}$, of $\{\epi{\ell_n}\}_{n \in \nn}$ to $\epi{\ell}$ is exactly what is needed for \eqref{eq_gamma_prob} to hold.  However, in general, point-wise convergence is too stringent to provide a characterization of the problem.  $\Gamma$-convergence, is thus a weakened version of point-wise convergence in $\PH{(X,d)}$, but remains equivalent to point-wise if $(X,d)$ is compact.  

Our description of it, will exploit the Kuratowski limit, see for more details \cite{kuratowski2014topology}.  Let $\{A_n\}_{n\in\nn}$ be a sequence of subsets of $X$.  If both the following sets exist, 
\begin{equation}
\begin{aligned}
& \left\{
x \in X : \limsup\limits_{n \mapsto \infty} d(x,A_n) =0
\right\},\\
& \left\{
x \in X : \liminf\limits_{n \mapsto \infty} d(x,A_n) =0
\right\},\\
\end{aligned}\label{eq_Klim}
,
\end{equation}
are well-defined and %
are equal%
, then the Kuratowski limit, denoted by $
\Klim A_n$ is defined to be that limiting set.  
\begin{remark}[Non-LSC Functions]\label{rem_not_clos}
	Note that, unlike pointwise convergence in $\PH{(X,d)}$, Kuratowski convergence does not actually require the sets $\left\{A_n\right\}_{n \in \nn}$ to be in $\PH{(X,d)}$.  Therefore, if $\{\ell_n\}_{n \in \nn}$ is a sequence of functions, which are not lsc on $(X,d)$, then the Kuratowski limit of their epigraphs can still be meaningfully described, while the pointwise limit in $\PH{(X,d)}$ is meaningless.  
\end{remark}

\begin{ex}[Constant Epigraphical Limits]\label{ex_LSC_relax}
	Let $A$ be a subset of $X$ and consider the constant sequence $\{A_n\}_{n \in \nn}$, where $A_n\triangleq A$ for every $n \in \nn$.  In \citep[Example 4.12]{dal2012introduction} it is argued that $\Klim A_n$ is the closure of $A$ in $X$.  
\end{ex}

\begin{defn}[$\Gamma$-convergence]\label{defn_Gamma_conv_epi}
	A sequence of functions $\{\ell_n\}_{n \in \nn}$ from $(X,d)$ to $\rrbar$, is said to $\Gamma$-converge to a function $\ell:X\rightarrow \rrbar$ if $\Klim\epi{\ell_n}$ and
	$$
	\Klim\epi{\ell_n} = \epi{\ell}
	.
	$$
	In that case, $\ell$ is said to be the $\Gamma$-limit of $\{\ell_n\}_{n \in \nn}$ and is denoted by $\Glim \ell_n$.  
\end{defn}

\begin{ex}[Lsc Relaxation and Constant $\Gamma$-Limits]\label{ex_lsc_relax}
	Building on Example~\ref{ex_LSC_relax}, if $\ell$ is a function from a complete metric space $(\tilde{X},d)$ to $\rrbar$, $A=\epi{\ell}$, and $X=\tilde{X}\times \rrbar$, then $\Klim\epi{\ell}$ is the smallest closed set in $\tilde{X}\times \rrbar$ containing $\epi{\ell}$.  Therefore, if $\ell$ is lsc, then Lemma~\ref{lem_epi} implies that $\Klim\epi{\ell}=\epi{\ell}$.  
	
	However, in general $\ell$ need not be lsc.  In this case, it is a direct consequence of \citep[Remark 4.5 and Example 4.12]{dal2012introduction} that
	\begin{equation}
	\Klim\epi{\ell}=\epi{\ell^{lsc}},
	\label{eq_gamma_limt_const_expl}
	\end{equation}
	where $\ell^{lsc}$ denotes the largest lsc function dominated by $\ell$, aptly called the lsc-relaxation of $\ell$.  Therefore,~\eqref{eq_gamma_limt_const_expl} implies that
	\begin{equation}
	\Glim \ell = \ell^{lsc}
	\label{eq_gamma_limt_const_expl_Gamma_version}
	.
	\end{equation}
\end{ex}

The following is an example of when the $\Gamma$-limit of a sequence of functions coincides with its point-wise limit.  
\begin{prop}[{\cite[Proposition 5.4]{dal2012introduction};\citep[Remark 1.40]{braides2002gamma}(ii)}]\label{prop_mono}
	Let $\{\ell_n\}_{n \in \nn}$ be a point-wise monotonically increasing sequence of functions from $(X,d)$ to $\rrbar$, which are uniformly bounded below.  Then the $\Gamma$-limit exists and can be computed via
	$$
	\Glim\ell_n = \lim\limits_{n \uparrow \infty} \ell_n^{lsc} = \left(\lim\limits_{n \uparrow \infty} \ell_n\right)^{lsc}
	.
	$$
\end{prop}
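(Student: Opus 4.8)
The plan is to translate $\Gamma$-convergence into Kuratowski convergence of epigraphs, as in Definition~\ref{defn_Gamma_conv_epi}, and to exploit that monotonicity of $\{\ell_n\}$ becomes nestedness of the epigraphs. Since $\ell_n\leq \ell_{n+1}$ pointwise, we have $\epi{\ell_{n+1}}\subseteq \epi{\ell_n}$, so $\{\epi{\ell_n}\}_{n\in\nn}$ is a decreasing sequence of subsets of $X\times\rrbar$. For any decreasing sequence of sets the distance $d\left((x,t),\epi{\ell_n}\right)$ is nondecreasing in $n$, so its $\liminf$ and $\limsup$ both equal its supremum; hence the two sets in~\eqref{eq_Klim} coincide and
\[
\Klim \epi{\ell_n}=\bigcap_{n\in\nn}\overline{\epi{\ell_n}}.
\]
In particular the Kuratowski limit exists, so by Definition~\ref{defn_Gamma_conv_epi} the $\Gamma$-limit exists, which is the first assertion. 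The uniform lower bound ensures every $\ell_n$ (hence the limit) is $(-\infty,\infty]$-valued, so that no $-\infty$ degeneracy spoils the epigraphs.

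To identify this intersection I would invoke the constant-sequence computation of Examples~\ref{ex_LSC_relax} and~\ref{ex_lsc_relax}, which gives $\overline{\epi{\ell_n}}=\epi{\ell_n^{lsc}}$, together with the elementary identity $\bigcap_{n}\epi{g_n}=\epi{\sup_n g_n}$, valid for any family $\{g_n\}$. Combining these,
\[
\Klim \epi{\ell_n}=\bigcap_{n\in\nn}\epi{\ell_n^{lsc}}=\epi{\,\sup_{n\in\nn}\ell_n^{lsc}\,}.
\]
Since $\ell_n\leq\ell_{n+1}$ forces $\ell_n^{lsc}\leq\ell_{n+1}^{lsc}$ (each $\ell_n^{lsc}$ is an lsc minorant of $\ell_{n+1}$, hence $\leq\ell_{n+1}^{lsc}$), the supremum is a monotone limit, $\sup_n\ell_n^{lsc}=\lim_n\ell_n^{lsc}$, and Definition~\ref{defn_Gamma_conv_epi} yields $\Glim\ell_n=\lim_n\ell_n^{lsc}$. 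This is the first displayed equality; note that $\sup_n\ell_n^{lsc}$ is lsc, being a supremum of lsc functions, consistently with the Kuratowski limit of a decreasing family of closed sets being closed (Lemma~\ref{lem_epi}).

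For the second equality set $\ell_\infty\triangleq\lim_n\ell_n=\sup_n\ell_n$. One inclusion is immediate: each $\ell_n^{lsc}\leq\ell_n\leq\ell_\infty$ is an lsc minorant of $\ell_\infty$, so $\ell_n^{lsc}\leq\ell_\infty^{lsc}$, and taking the supremum gives $\sup_n\ell_n^{lsc}\leq(\lim_n\ell_n)^{lsc}$. The reverse inequality is the step I expect to be the main obstacle. At the level of epigraphs it reads $\bigcap_n\overline{\epi{\ell_n}}\subseteq\overline{\bigcap_n\epi{\ell_n}}$, i.e.\ a point in the closure of \emph{every} $\epi{\ell_n}$ must be approximable by points lying in \emph{all} of them at once. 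This is not a formal consequence of nestedness: for a generic decreasing sequence of sets (say $A_n=(0,1/n)$ in $\rr$) one has $\bigcap_n\overline{A_n}\supsetneq\overline{\bigcap_nA_n}$, so the inclusion can genuinely fail, and it is immediate only in the clean case where each $\ell_n$ is already lsc (there $\ell_n^{lsc}=\ell_n$, the supremum $\ell_\infty$ is lsc, and all three quantities collapse to $\ell_\infty$). The plan is therefore to recover the reverse inequality for general $\ell_n$ by a diagonal recovery-sequence argument that uses both the monotonicity of $\{\ell_n\}$ and the vertical structure of epigraphs, following the cited arguments of~\cite{dal2012introduction,braides2002gamma}; this is the only place where more than formal epigraphical bookkeeping is needed, and is where I would concentrate the effort.
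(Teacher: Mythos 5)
Your epigraphical argument for existence and for the first equality is correct and self-contained: nestedness of the epigraphs, the identity $\Klim \epi{\ell_n}=\bigcap_{n\in\nn}\overline{\epi{\ell_n}}$ for decreasing sets, the relations $\overline{\epi{\ell_n}}=\epi{\ell_n^{lsc}}$ and $\bigcap_{n\in\nn}\epi{g_n}=\epi{\sup_{n\in\nn}g_n}$, and the monotonicity $\ell_n^{lsc}\leq \ell_{n+1}^{lsc}$ together give $\Glim\ell_n=\sup_{n\in\nn}\ell_n^{lsc}=\lim_{n\uparrow\infty}\ell_n^{lsc}$. Note that the paper itself offers no proof of Proposition~\ref{prop_mono} (it is quoted from \cite{dal2012introduction,braides2002gamma}), so this part of your argument stands on its own and is a genuine contribution.

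The problem is the step you deferred. The reverse inequality $\left(\lim_{n\uparrow\infty}\ell_n\right)^{lsc}\leq \lim_{n\uparrow\infty}\ell_n^{lsc}$ is not the ``hard step'' of the proof: it is \emph{false} under the stated hypotheses, so no diagonal recovery-sequence argument can establish it. Your own set-level counterexample lifts verbatim to functions satisfying every hypothesis of the proposition: on $X=\rr$ take $\ell_n\triangleq \iota_{(0,1/n)}$ (the convex-analytic indicator, equal to $0$ on $(0,1/n)$ and $+\infty$ elsewhere). This sequence is pointwise increasing and uniformly bounded below by $0$; one has $\ell_n^{lsc}=\iota_{[0,1/n]}$, hence $\Glim\ell_n=\lim_{n\uparrow\infty}\ell_n^{lsc}=\iota_{\{0\}}$, which vanishes at $x=0$, whereas $\lim_{n\uparrow\infty}\ell_n\equiv+\infty$, so that $\left(\lim_{n\uparrow\infty}\ell_n\right)^{lsc}\equiv+\infty$; the two sides disagree at $0$. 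This is consistent with the sources the proposition cites: for \emph{increasing} sequences, \citep[Proposition 5.4]{dal2012introduction} and \citep[Remark 1.40(ii)]{braides2002gamma} assert only $\Glim\ell_n=\lim_{n}\ell_n^{lsc}$, while the formula $\left(\lim_n\ell_n\right)^{lsc}$ is the $\Gamma$-limit of \emph{decreasing} sequences. So the second equality in Proposition~\ref{prop_mono} is a mis-statement, valid only under additional hypotheses---for instance when every $\ell_n$ is lsc, in which case $\lim_n\ell_n^{lsc}=\sup_n\ell_n$ is itself lsc and equals its own relaxation; that is precisely the situation in which the paper invokes the proposition (proof of Theorem~\ref{thrm_Approximation_Theorem}), so nothing downstream is affected. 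The correct move at your final step is therefore refutation, not construction: prove the first equality as you did, and record your indicator-function example as a counterexample to the second.
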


\begin{thrm}[Properties of {$\Gamma$-}convergence; {\citep[Theorem 2.8]{focardi2012gamma}}]\label{prop_prop_of_Gamma}
	Let $\{\ell_n\}_{n \in \nn}$ be a sequence of functions on $(X,d)$ and suppose that $\Glim \ell_n$ exists.  Then
	\begin{enumerate}[(i)]
		\item (Lower Semicontinuity): $\Glim \ell_n$ is lower semicontinuous on $X$,
		\item (Stability Under Continuous Perturbation): If $g:X \rightarrow \rr$ is continuous, then
		$$
		\Glim (\ell_n + g) = \left(\Glim \ell_n \right) +g
		,
		$$
		\item (Stability Under Relaxation): For every $n \in \nn$ let $\{\tilde{\ell}_n\}_{n \in \nn}$ be a sequence of functions from $X$ to $\rrbar$ satisfying
		$
		\ell_n^{lsc}\leq \tilde{\ell}_n\leq \ell_n.
		$
		Then 
		$$
		\Glim \tilde{\ell}_n = \Glim \ell_n
		.
		$$
	\end{enumerate}
\end{thrm}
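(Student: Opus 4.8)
The plan is to argue entirely at the level of epigraphs, exploiting the identity $\epi{\Glim \ell_n} = \Klim \epi{\ell_n}$ from Definition~\ref{defn_Gamma_conv_epi} together with the description of the Kuratowski limit through the two sets displayed in~\eqref{eq_Klim}. First I would record the two auxiliary facts that drive everything: (a) each of the two sets in~\eqref{eq_Klim} is closed in $X \times \rrbar$, which follows from a routine triangle-inequality estimate on the distance $d(x,A_n)$; and (b) they admit a purely sequential description, the first set in~\eqref{eq_Klim} consisting of the limits of sequences $(a_n)$ with $a_n \in A_n$, and the second of the subsequential limits of such sequences. Granting (a), part (i) is immediate: $\epi{\Glim \ell_n}$ is a Kuratowski limit, hence closed, so $\Glim \ell_n$ is lower semicontinuous by Lemma~\ref{lem_epi}.

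For part (ii) I would use the sequential description (b) and the observation that adding a continuous $g$ acts on epigraphs as a homeomorphism. Writing $T(x,y) \triangleq (x, y+g(x))$, continuity and finiteness of $g$ make $T$ a homeomorphism of $X \times \rrbar$ with inverse $(x,y)\mapsto (x, y-g(x))$, and one checks directly that $\epi{\ell_n + g} = T\left(\epi{\ell_n}\right)$. Since a homeomorphism carries convergent sequences to convergent sequences and preserves subsequential limits, it commutes with both sets in~\eqref{eq_Klim}, hence with the Kuratowski limit. Therefore $\Klim \epi{\ell_n+g} = T\left(\Klim \epi{\ell_n}\right) = T\left(\epi{\Glim \ell_n}\right) = \epi{\left(\Glim \ell_n\right)+g}$, which is exactly the claim.

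For part (iii) the key point is that the Kuratowski limit of a sequence of sets depends only on the distance functions $x \mapsto d(x,A_n)$, and these are insensitive to closure, since $d(x,A_n) = d(x,\overline{A_n})$. From $\ell_n^{lsc} \le \tilde{\ell}_n \le \ell_n$ the epigraphs nest as $\epi{\ell_n} \subseteq \epi{\tilde{\ell}_n} \subseteq \epi{\ell_n^{lsc}}$, and by Example~\ref{ex_lsc_relax} the rightmost set equals $\overline{\epi{\ell_n}}$. Because $B \subseteq C$ forces $d(x,C)\le d(x,B)$, this sandwich gives $d\left(x,\epi{\ell_n}\right) \ge d\left(x,\epi{\tilde{\ell}_n}\right) \ge d\left(x,\overline{\epi{\ell_n}}\right) = d\left(x,\epi{\ell_n}\right)$, so the three distance functions coincide for every $x$. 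Consequently the two sets in~\eqref{eq_Klim} built from $\epi{\tilde{\ell}_n}$ agree with those built from $\epi{\ell_n}$; in particular $\Klim \epi{\tilde{\ell}_n}$ exists and equals $\Klim \epi{\ell_n}$, that is $\Glim \tilde{\ell}_n = \Glim \ell_n$.

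The main obstacle is not any single part but the preliminary bookkeeping on the Kuratowski limit: establishing the closedness (a) and the sequential description (b) from the distance-function definition~\eqref{eq_Klim}, and confirming the closure identity $\overline{\epi{\ell}} = \epi{\ell^{lsc}}$ invoked in (iii). Once these are in hand, each assertion reduces to a short topological manipulation of epigraphs, with the homeomorphism argument for (ii) and the distance-equality squeeze for (iii) being the only nonroutine steps; I expect the delicate point to be the careful handling of the extended value $+\infty$ and the arithmetic on $\rrbar$ when adding the perturbation $g$, which is controlled by the upper-addition convention of Remark~\ref{rem_upp_add} and the finiteness of $g$.
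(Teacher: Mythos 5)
Your proposal is correct, but there is nothing in the paper to compare it against: the paper states this theorem as a quotation from \citep[Theorem 2.8]{focardi2012gamma} and supplies no proof of its own. Relative to the standard proofs in the $\Gamma$-convergence literature (which argue through the sequential liminf/limsup characterization recorded in Proposition~\ref{defn_Gamma_conv}, verifying a lower-bound inequality and exhibiting recovery sequences separately for each claim), your route is genuinely different and arguably better adapted to this paper, since it works directly with the epigraphical Definition~\ref{defn_Gamma_conv_epi} and the Kuratowski sets of~\eqref{eq_Klim}. All three steps check out: the triangle-inequality argument does show that both sets in~\eqref{eq_Klim} are closed, so (i) follows from Lemma~\ref{lem_epi}; the map $T(x,y)=(x,y+g(x))$ is a homeomorphism of $X\times\rrbar$ (continuity and finiteness of $g$ are exactly what is needed at $y=\pm\infty$), it satisfies $\epi{\ell_n+g}=T\left(\epi{\ell_n}\right)$ under the upper-addition convention of Remark~\ref{rem_upp_add}, and since the two sets in~\eqref{eq_Klim} admit the sequential description you state (limits, respectively subsequential limits, of points $a_n\in A_n$ --- a description that requires the $A_n$ to be nonempty, which is automatic here because every epigraph in $X\times\rrbar$ contains the points $(x,\infty)$), a homeomorphism commutes with both sets, giving (ii); and the squeeze $d\left(\cdot,\epi{\ell_n}\right)\geq d\left(\cdot,\epi{\tilde{\ell}_n}\right)\geq d\left(\cdot,\overline{\epi{\ell_n}}\right)=d\left(\cdot,\epi{\ell_n}\right)$, combined with the identity $\overline{\epi{\ell_n}}=\epi{\ell_n^{lsc}}$ from Example~\ref{ex_lsc_relax}, shows that the Kuratowski sets built from $\{\epi{\tilde{\ell}_n}\}_{n\in\nn}$ and from $\{\epi{\ell_n}\}_{n\in\nn}$ coincide, giving (iii). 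What your approach buys is self-containedness within the paper's own framework, at the cost of one piece of bookkeeping that should be made explicit in a full write-up: the metric on $X\times\rrbar$ implicitly used when~\eqref{eq_Klim} is applied to epigraphs (any product metric combining $d$ with a metrization of the compactified line works, and both your closedness and distance arguments are insensitive to the choice) --- a point the paper itself leaves unaddressed.
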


The first of the two critical ingredients in Theorem~\ref{thrm_weistrass_Theorem} was the lsc-ity of $\ell$ and the second was its mild coerciveness.  Analogously to the definition of equi-continuity, in general, when working with a sequence of functions, to be able to apply the analogous machinery to Theorem~\ref{thrm_weistrass_Theorem} we require that there exists a non-empty compact subset of $K$ satisfying
\begin{equation}
\inf_{x \in X}\ell_n(x) = \inf_{x \in X}\ell(x)
;\; (\forall n \in \nn)
\label{eq_equi_cont}
.
\end{equation}
The property described by~\eqref{eq_equi_cont} is called \textit{mild equi-coerciveness}.  A stronger condition, that we will make use of is \textit{equi-coerciveness}, which states that for every $t>0$, there exists a compact subset $K_t$ of $(X,d)$ satisfying
$$
\bigcup_{n \in \nn}\{x \in X: \ell_n(x)\leq t\}\subseteq K_t
.
$$
The next result, is set in the case where $(X,d)$ is induced by the structure of a Banach space.  In that setting, it is shown that equi-coerciveness can be interpreted as a type of uniform growth condition.  
\begin{lem}[Equi-coerciveness in Normed Vector Space]\label{ex_coercive}
	Suppose that $X$ is a linear space and the metric $d$ is induced by a norm $\|\cdot\|$ on $X$, making $(X,\|\cdot\|)$ into a Banach space.  Then $\{\ell_n\}_{n \in \nn}$ is an equi-coercive family on $(X,d)$ if and only if there exist a coercive function $f:X\rightarrow \rrbar$ satisfying the growth condition
	\begin{enumerate}[(i)]
		\item $\inf_{ n \in \nn}\ell_{n}(x)\geq f(x)$, for every $x \in X$,
		\item $\lim\limits_{\|x\|\mapsto \infty}\frac{f(x)}{\|x\|}=\infty$.
	\end{enumerate}
\end{lem}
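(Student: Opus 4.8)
The plan is to prove both implications by nesting the sublevel sets of the $\ell_n$ inside the sublevel sets of a single auxiliary function, using the standard fact (in the sense of Dal Maso) that a map $g:X\to\rrbar$ is coercive exactly when each sublevel set $\{x\in X:g(x)\le t\}$ is relatively compact. For the sufficiency direction I would assume such an $f$ exists and fix $t>0$. If $\ell_n(x)\le t$ for some $n\in\nn$, then condition (i) gives $f(x)\le \inf_{m\in\nn}\ell_m(x)\le \ell_n(x)\le t$, so that $\bigcup_{n\in\nn}\{x\in X:\ell_n(x)\le t\}\subseteq\{x\in X:f(x)\le t\}$. Taking $K_t$ to be the closure of the right-hand side, coercivity of $f$ makes $K_t$ compact, and the displayed containment exhibits $\{\ell_n\}_{n\in\nn}$ as equi-coercive. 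In this direction condition (ii) plays only the role of the concrete, norm-theoretic growth rate underlying the coercivity of $f$; the compactness of $K_t$ is what is actually invoked.

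For the necessity direction I would assume $\{\ell_n\}_{n\in\nn}$ is equi-coercive and set $f:=\inf_{n\in\nn}\ell_n$, which satisfies (i) with equality. Given $t$, the inclusions $\{x\in X:f(x)<t\}\subseteq\bigcup_{n\in\nn}\{x\in X:\ell_n(x)<t\}\subseteq K_t$ together with $\{x\in X:f(x)\le t\}=\bigcap_{s>t}\{x\in X:f(x)<s\}$ show that every sublevel set of $f$ lies inside the compact set $K_{t+1}$ and is therefore relatively compact; hence $f$ is coercive and (i) holds. What remains, and what I expect to be the crux, is to upgrade this qualitative coercivity to the quantitative super-linear growth asserted in (ii).

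The hard part will be precisely this passage from coercivity of $f$ to the uniform rate (ii): coercivity only asserts compactness of each individual $\{f\le t\}$, whereas (ii) controls how slowly the whole family may spread out in norm. The plan is to encode the latter through the radius function $r(t):=\sup\{\|x\|:f(x)\le t\}$, which is finite because each $K_t$ is bounded, and to observe that (ii) is equivalent to $r(t)/t\to 0$ as $t\to\infty$. I would then, if necessary, replace $f$ by the largest coercive minorant whose sublevel sets are the closed convex balanced hulls of the $K_t$ (still compact in a Banach space by Mazur's theorem) and argue that equi-coercivity forces these hulls to grow sublinearly in $t$. Turning the merely qualitative uniform compactness of $\{K_t\}$ into the sublinear estimate $r(t)=o(t)$ is the genuinely delicate step, and it is here that the completeness and the norm structure of $X$ are essential.
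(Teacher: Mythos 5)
Your sufficiency half is correct, and it is essentially a direct proof of the easy half of Dal Maso's Proposition 7.7 --- which is the first of the two citations that constitute the paper's entire proof; the paper then invokes a second citation (Proposition 2.18 of the $\Gamma$-convergence notes) asserting that on a normed space coercivity is equivalent to the super-linear growth condition (ii). Your necessity half correctly sets $f=\inf_{n\in\nn}\ell_n$, verifies (i), and shows every sublevel set of $f$ lies in some compact $K_{t+1}$, i.e. $f$ is coercive in the compact-sublevel-set sense. The step you defer --- upgrading this qualitative coercivity to (ii) --- is exactly what the paper outsources to its second citation, so you have correctly isolated where the real content lies.

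But that gap cannot be closed along the lines you propose, because the implication you need is false under the definition of equi-coercivity given in this section of the paper. Take $X=\rr$ and $\ell_n(x)=|x|$ for every $n\in\nn$: this family is equi-coercive, since $\bigcup_{n\in\nn}\{x\in X:\ell_n(x)\le t\}=[-t,t]$ is compact, yet any minorant satisfying (i) obeys $f(x)\le |x|$, hence $f(x)/|x|\le 1$, and (ii) fails for \emph{every} admissible $f$. In terms of your radius function, equi-coercivity gives $r(t)<\infty$ for each $t$ but imposes no sub-linear constraint ($r(t)=t$ in the example), and replacing the sets $K_t$ by their closed convex balanced hulls only makes $r$ larger, so Mazur's theorem does not help. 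The ``genuinely delicate step'' is therefore not delicate but impossible: compactness of sublevel sets is a qualitative property, while super-linear growth is a strictly stronger quantitative one, already on $\rr$. The paper's proof never confronts this because it is purely citation-based, and the cited Proposition 2.18 uses a notion of coercivity (growth measured against the norm) that is not equivalent to the compact-sublevel-set notion used to define equi-coerciveness two paragraphs earlier. A self-contained, correct version of the statement would either weaken (ii) to $\lim_{\|x\|\to\infty}f(x)=\infty$ (equivalently, boundedness of the sublevel sets of $f$), which your argument with $f=\inf_{n\in\nn}\ell_n$ does establish, or strengthen the definition of equi-coercivity so that a super-linear lower bound is actually available.
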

\begin{proof}
	In \citep[Proposition 7.7]{dal2012introduction}, it is seen that $\{\ell_{n}\}_{n \in \nn}$ is a equicoercive if and only if there exists a coercive function $f:X\rightarrow \rrbar$ dominated by each $\ell_n$.  The function $f$ is dominated by each $\ell_n$ if and only if
	\begin{equation}
	\inf_{ n \in \nn}\ell_n(x)\geq f(x)
	;\;
	(\forall x \in X)
	\label{eq_domi}
	.
	\end{equation}
	
	Moreover, in \citep[Proposition 2.18]{GammaMEgapackSoNiceandConvenient}, it is shown that $f$ is a coercive function on a normed vector space if and only if it satisfies the super-linearity condition
	\begin{equation*}
	\lim\limits_{\|x\|\mapsto \infty}\frac{f(x)}{\|x\|}=\infty
	.
	\end{equation*}
\end{proof}

We are now in place to describe a solution to problem~\eqref{eq_gamma_prob}.  Keeping Example~\ref{ex_lsc_relax} in mind, the next result directly parallels Theorem~\ref{thrm_weistrass_Theorem}.  
\begin{thrm}[{The Fundamental Theorem of {$\Gamma$}-Convergence; \citep[Theorem 2.10]{braides2002gamma},\citep[Theorem 2.1]{focardi2012gamma}}]\label{thrm_FTOG}
	If $\{\ell_n\}_{n \in \nn}$ is a mildly equi-coercive sequence of functions from $X$ to $\rrbar$ for which the $\Gamma$-limit exists in $X$%
	, then
	$$
	\lim\limits_{k\uparrow \infty} \inf_{x \in X}\ell_{k_n}(x) = \inf_{x \in X}\Glim\ell_n(x)
	.
	$$
	If moreover, $\{\ell_n\}_{n \in \nn}$ is equicoercive, then $\lim\limits_{n\uparrow \infty} \arginf{x \in X}\ell_n(x)$ exists in $X$ and
	$$
	\lim\limits_{k\uparrow \infty} \arginf{x \in X}\ell_{k_n}(x) \in \arginf{x \in X}\Glim\ell_n(x)
	.
	$$
\end{thrm}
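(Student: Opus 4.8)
The plan is to reduce the topological (Kuratowski-epigraph) definition of $\Gamma$-convergence used here to its equivalent sequential characterization, and then to run the classical Weierstrass-type argument of Theorem~\ref{thrm_weistrass_Theorem} on the $\Gamma$-limit. Writing $\ell\triangleq\Glim\ell_n$, the first step is to show that the hypothesis $\Klim\epi{\ell_n}=\epi{\ell}$ is equivalent to two pointwise conditions: the \emph{liminf inequality}, namely that for every $x\in X$ and every sequence $x_n\to x$ one has $\ell(x)\leq\liminf_{n}\ell_n(x_n)$; and the \emph{recovery-sequence property}, namely that for every $x\in X$ there exists a sequence $x_n\to x$ with $\limsup_{n}\ell_n(x_n)\leq\ell(x)$. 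The liminf inequality is exactly the inclusion of the Kuratowski upper limit of the epigraphs in $\epi{\ell}$, while the recovery sequence expresses $\epi{\ell}$ inside the Kuratowski lower limit; this epigraphical dictionary I would cite from the already-referenced \cite{dal2012introduction,braides2002gamma}.

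With these two inequalities in hand, the equality of infima splits into two bounds, neither of which is difficult. For the upper bound $\limsup_{n}\inf_{x\in X}\ell_n(x)\leq\inf_{x\in X}\ell(x)$, I would fix an arbitrary $x$, take its recovery sequence $x_n\to x$, and use $\inf_{x\in X}\ell_n\leq\ell_n(x_n)$ to obtain $\limsup_{n}\inf_{x\in X}\ell_n\leq\limsup_{n}\ell_n(x_n)\leq\ell(x)$; taking the infimum over $x$ closes this direction with no coercivity needed. For the reverse bound I would pass to a subsequence realizing $\liminf_{n}\inf_{x\in X}\ell_n$, invoke mild equi-coerciveness to find the compact $K$ with $\inf_{K}\ell_n=\inf_{x\in X}\ell_n$, select near-minimizers $x_n\in K$, and extract $x_{n_j}\to\bar x$ by compactness. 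The liminf inequality then yields $\ell(\bar x)\leq\liminf_{j}\ell_{n_j}(x_{n_j})=\liminf_{n}\inf_{x\in X}\ell_n$, so that $\inf_{x\in X}\ell\leq\liminf_{n}\inf_{x\in X}\ell_n$; combined with the upper bound this gives the asserted limit, and simultaneously exhibits $\bar x$ as a minimizer of $\ell$, exactly as in Theorem~\ref{thrm_weistrass_Theorem}.

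For the second assertion I would take minimizers (or $\tfrac1n$-approximate minimizers) $x_n\in\arginf{x\in X}\ell_n$. Since $\inf_{x\in X}\ell_n\to\inf_{x\in X}\ell$ by the first part and this value is finite, the numbers $\ell_n(x_n)$ are eventually bounded by some $t$, so full equi-coerciveness confines the tail of $(x_n)$ to a single compact $K_t$; hence a subsequence converges to some $\bar x$, and the liminf inequality gives $\ell(\bar x)\leq\liminf_{n}\ell_n(x_n)=\inf_{x\in X}\ell$, placing $\bar x\in\arginf{x\in X}\ell$. The main obstacle I anticipate is the careful passage between the Kuratowski-epigraph definition and the sequential characterization, in particular keeping the upper and lower Kuratowski limits and their inclusions straight, and handling the extended range $\rrbar$ so that the degenerate cases $\ell\equiv+\infty$ and $\inf_{x\in X}\ell=-\infty$ (the latter excluded by coercivity) are treated correctly; reconciling the subsequence indexing written as $\ell_{k_n}$ in the statement is then routine once the two inequalities are established.
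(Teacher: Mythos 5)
The paper does not prove this statement: Theorem~\ref{thrm_FTOG} is imported verbatim, with attribution, from \citep[Theorem 2.10]{braides2002gamma} and \citep[Theorem 2.1]{focardi2012gamma} as part of the background review in Section~\ref{ss_Gamma_Conv}, so there is no internal proof to compare yours against. That said, your argument is correct and is essentially the proof given in those cited sources: translate the Kuratowski-epigraph definition into the liminf/recovery-sequence characterization, get $\limsup_n \inf_X \ell_n \leq \inf_X \Glim \ell_n$ from recovery sequences, and get the reverse inequality plus existence of a minimizer by running the Weierstrass compactness argument inside the compact set furnished by (mild) equi-coerciveness. Two remarks. First, the epigraphical-to-sequential dictionary you open with need not be re-derived: the paper records exactly this equivalence as Proposition~\ref{defn_Gamma_conv}; also, make the subsequence point explicit --- your near-minimizers converge only along a subsequence while the liminf inequality is stated for full sequences, so either pad the subsequence by the constant $\bar{x}$ (using that the liminf of a full sequence is at most that of any subsequence) or invoke the stability of $\Gamma$-limits under passage to subsequences. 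Second, the only substantive caveat: your proof of the last assertion shows that every cluster point of a sequence of (approximate) minimizers minimizes the $\Gamma$-limit; it does not --- and cannot --- show that $\lim_{n\uparrow\infty} \arginf{x \in X}\ell_n(x)$ exists, since that claim is false in general (take $\ell_n$ independent of $n$ with two minimizers and alternate selections between them). This defect lies in the paper's transcription of the theorem, not in your argument; the cluster-point formulation is what \citep{braides2002gamma} actually proves, and in the paper's sole application of this theorem (the proof of Theorem~\ref{thrm_Approximation_Theorem}) the uniqueness needed to upgrade subsequential convergence is supplied separately by Theorem~\ref{thrm_existsnce_uniqueness}.
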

\begin{remark}
	From \citep[Remark 1.20]{braides2002gamma}, it follows that $\{\ell_n\}_{n \in \nn}$ is a mildly-equicoercive family of functions, if it is a coercive family of functions.  Many formulations of the Fundamental Theorem of $\Gamma$-convergence appearing in the literature, assume pre-compactness instead of equicoerciveness.  As discussed in \citep[Remark 1.20]{BraideBegin}, equicoerciveness is a strictly stronger property than pre-compactness.  Therefore, the Theorem~\eqref{thrm_FTOG} implies many formulations of the result (for example, that in \cite{BraideBegin}).  
\end{remark}

For completeness, we mention that Definition~\ref{defn_Gamma_conv_epi} is equivalent to the original definition of $\Gamma$-convergence, which directly lifts an alternative characterization of metric lsc-ity of a function to a sequence of functions (see \citep[Section 2]{focardi2012gamma} for background to the classical definition).  
For self-containedness, we summarize the original definition in the following proposition.
\begin{prop}[Original Definition; {\citep[Chapter 4]{dal2012introduction}}]\label{defn_Gamma_conv}
	Let $\{\ell_n\}_{n\in \nn}$ be a sequence of $\rr\cup \{\infty\}$-valued functions on a complete metric space $(X,d)$.  A function $\ell$ is the $\Gamma$-limit of $\{\ell_n\}_{n \in \nn}$ if and only if both
	\begin{enumerate}[(i)]
		\item (Lower Bound Inequality:) $\ell^{lsc}(x)\leq \liminf\limits_{n \in \nn} \ell_n(x_n)$ for \textit{every} net $\{x_n\}_{n \in \nn}$ converging to $x$ in $(X,d)$,
		\item (Upper Bound Inequality:)  $\ell^{lsc}(x)\geq \liminf\limits_{n \in \nn} \ell_n(y_n)$ for \textit{some} net $\{y_n\}_{n \in \nn}$ converging to $x$ in $(X,d)$
	\end{enumerate}
	where $\ell^{lsc}$ is the lsc relaxation of $\ell$.  
\end{prop}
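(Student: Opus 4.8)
The plan is to prove the equivalence by unfolding the epigraphical Definition~\ref{defn_Gamma_conv_epi} of $\Gamma$-convergence into the two Kuratowski set-limit components appearing in~\eqref{eq_Klim}, and then to recognize each component, via the sequential description of Kuratowski limits in the (metrizable) product space $X\times\rrbar$, as exactly one of the two sequential inequalities. Since any Kuratowski limit is closed, Lemma~\ref{lem_epi} together with Example~\ref{ex_lsc_relax} guarantees that whenever $\Klim\epi{\ell_n}$ exists it is the epigraph of an lsc function; this is why a general candidate $\ell$ must be compared with $\epi{\ell^{lsc}}$ rather than $\epi{\ell}$, and it lets me take the target to be lsc throughout.

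First I would record the pair of inclusions that are jointly equivalent to $\Klim\epi{\ell_n}=\epi{\ell^{lsc}}$. Writing the inner (lower) and outer (upper) Kuratowski limits as the two sets in~\eqref{eq_Klim}, and using that the inner limit is always contained in the outer one, the identity $\Klim\epi{\ell_n}=\epi{\ell^{lsc}}$ holds if and only if
\begin{equation*}
\text{(outer)}\quad \big\{\,z:\liminf_{n}d(z,\epi{\ell_n})=0\,\big\}\subseteq \epi{\ell^{lsc}},
\qquad
\text{(inner)}\quad \epi{\ell^{lsc}}\subseteq \big\{\,z:\limsup_{n}d(z,\epi{\ell_n})=0\,\big\}.
\end{equation*}
Because $X\times\rrbar$ is metrizable, each set-limit admits the familiar sequential description: a point $z$ lies in the outer limit iff it is the limit of a subsequence of points $z_{n_k}\in\epi{\ell_{n_k}}$, while $z$ lies in the inner limit iff it is the limit of a full sequence $z_n\in\epi{\ell_n}$.

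Next I would translate epigraph membership $z_n=(x_n,t_n)\in\epi{\ell_n}$ into the scalar inequality $t_n\ge\ell_n(x_n)$ and read off the two conditions. For the outer inclusion, given any sequence $x_n\to x$ I set $t=\liminf_{n}\ell_n(x_n)$, pass to a subsequence realizing it, note that $(x_{n_k},\ell_{n_k}(x_{n_k}))\to(x,t)$ lies in the outer limit and hence in $\epi{\ell^{lsc}}$, which gives $\ell^{lsc}(x)\le\liminf_{n}\ell_n(x_n)$; the converse implication follows by filling a convergent subsequence of epigraph points to a full sequence and applying the inequality. This is condition~(i). For the inner inclusion, taking $(x,\ell^{lsc}(x))\in\epi{\ell^{lsc}}$ produces a full sequence $(y_n,t_n)\in\epi{\ell_n}$ converging to it, whence $\ell_n(y_n)\le t_n\to\ell^{lsc}(x)$ yields a recovery net as in condition~(ii); conversely a recovery net along which $\ell_n(y_n)$ approaches $\ell^{lsc}(x)$ is promoted to a sequence of epigraph points approaching $(x,t)$ for every $t\ge\ell^{lsc}(x)$ by setting $t_n=\max(t,\ell_n(y_n))$. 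The extended values are handled separately: if $\ell^{lsc}(x)=\infty$ condition~(ii) is vacuous (take $y_n\equiv x$), and if a $\liminf$ equals $+\infty$ condition~(i) is automatic.

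The main obstacle is the inner-limit direction, namely producing a genuine recovery sequence converging to $(x,\ell^{lsc}(x))$ rather than merely a subsequence: the outer inclusion only ever delivers cluster points, so upgrading to full-sequence convergence is where the real work lies. I would fix a countable neighbourhood base at $(x,\ell^{lsc}(x))$ in the metrizable space $X\times\rrbar$ and run a diagonal argument across $n$, selecting for each $n$ a point of $\epi{\ell_n}$ inside the $n$-th basic neighbourhood; first countability of $X\times\rrbar$ is exactly what forces this selection to converge. This is also the step at which one checks that the recovery net of condition~(ii) can be taken to attain its value as an honest limit, so that its $\liminf$ coincides with $\ell^{lsc}(x)$; pinned between the lower bound from~(i) and the upper bound from the inner inclusion, the candidate is thereby determined and the equivalence closes.
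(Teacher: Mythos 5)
Your overall skeleton is the right one, and it is the route the cited source takes (the paper itself gives no proof of this proposition; it defers entirely to \citep[Chapter 4]{dal2012introduction}): split the Kuratowski equality $\Klim\epi{\ell_n}=\epi{\ell^{lsc}}$ into the outer inclusion and the inner inclusion of~\eqref{eq_Klim}, describe each set-limit sequentially in the metrizable space $X\times\rrbar$, and match them to the two scalar conditions. Your treatment of the outer half is correct: the outer inclusion is equivalent to condition (i), and the device of filling a convergent subsequence of epigraph points out to a full sequence (using $\liminf_n \ell_n(x_n)\leq \liminf_k \ell_{n_k}(x_{n_k})$) is exactly what is needed. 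Your preliminary observation that the target must be $\epi{\ell^{lsc}}$ rather than $\epi{\ell}$, because Kuratowski limits are closed, is also the correct reading.

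The gap is in the inner half, and it is fatal for the statement as transcribed. To get $\epi{\ell^{lsc}}\subseteq\left\{z:\limsup_n d\left(z,\epi{\ell_n}\right)=0\right\}$ from (ii), you set $t_n=\max\left(t,\ell_n(y_n)\right)$ and need $(y_n,t_n)\to(x,t)$, i.e.\ $\limsup_n\ell_n(y_n)\leq t$; but condition (ii) only supplies $\liminf_n\ell_n(y_n)\leq \ell^{lsc}(x)$, which controls the values along a subsequence. The diagonal{/}first-countability argument you propose cannot repair this: membership in the inner limit requires $\epi{\ell_n}\cap U\neq\emptyset$ for \emph{every} neighbourhood $U$ of $\left(x,\ell^{lsc}(x)\right)$ and \emph{all} sufficiently large $n$, whereas (ii) produces such points only along a subsequence, so at the remaining indices there may be nothing to select. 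Indeed, with $\liminf$ in (ii) the proposition is false: take $\ell_n\equiv 0$ for $n$ even, $\ell_n\equiv 1$ for $n$ odd, and $\ell\equiv 0$. Then (i) holds, (ii) holds with $y_n\equiv x$, yet the inner and outer Kuratowski limits of the epigraphs are $X\times[1,\infty]$ and $X\times[0,\infty]$ respectively, so $\Klim\epi{\ell_n}$ does not exist and no $\Gamma$-limit exists. Conditions (i)--(ii) as written say precisely that $\ell^{lsc}(x)=\inf\left\{\liminf_n\ell_n(x_n):x_n\to x\right\}$, i.e.\ they characterize the $\Gamma$-\emph{lower} limit, not the $\Gamma$-limit. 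The statement is a mistranscription of \citep[Chapter 4]{dal2012introduction}: in (ii) the $\liminf$ must be a $\limsup$ (equivalently, one demands a recovery sequence with $\ell_n(y_n)\to\ell^{lsc}(x)$). With that correction your $t_n=\max\left(t,\ell_n(y_n)\right)$ step works verbatim and your proof closes; without it, no argument can succeed.
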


\subsection{Solution to Objective Problem{~\ref{intro_eq_RF}}}\label{s_RNR}
We will require that the risk-perspective $\rrrr$ is compatible with the tools offered by the theory of $\Gamma$-convergence.  This comparability comes in the form of the Weierstrass property, which is reminiscent of the conditions of Theorem~\ref{thrm_weistrass_Theorem}, on which $\Gamma$-convergence was built.  
\begin{property}[Weierstrass Property]\label{property_weirestrass}
	A proper lower semi-continuous functional $\ell:\Lp{\ffff}\rightarrow (-\infty,\infty]$ has the Weierstrass property if
	\begin{enumerate}[(i)]
		\item $\ell$ is coercive; that is $\ee{\|Z\|}\mapsto \infty$ implies that $\ell(Z)\mapsto \infty$,
		\item $-\infty<\inf_{Z \in \Lp{\ffff}} %
		\ell(Z)
		.
		$
	\end{enumerate}
\end{property}
\begin{defn}[Weierstrass Risk-Perspective]\label{defn_WRP}
	We will say that the risk-perspective $\rrrr$ is Weierstrass, if $\riskutility$ has the Weierstrass property.  
\end{defn}

Under these assumptions,  to show the following key property of $\rrrr$-conditioning, with respect to a Weierstrass risk-perspective.  
\begin{thrm}[Asymptotic Solution of Problem{~\eqref{eq_PSLE}}]\label{thrm_Approximation_Theorem}
	Suppose that $\rrrr$ is a Weierstrass risk-perspective and $\rho$ admits the robust representation~\eqref{eq_robust_representation_Lp_risk_measures}. Then
	\begin{enumerate}[(i)]
		\item \textbf{Existence:} Problem~\eqref{eq_PSLE} admits a solution
		\item \textbf{Asymptotic Solution:} Moreover, the solution to~\eqref{eq_PSLE} can be expressed as
		\begin{equation}
		\lim\limits_{
			\underset{n \geq 1}{n \uparrow \infty}
		}
		\eeRA{X}[\gggg][\frac{n}{2+n}]
		\in 
		\arginf{Z \in \Lp{\ffff[]}[D]} \sup_{Q \in \qqqq}\eemes{\|f(X_T)-Z\|^2}[Q] %
		- \rho^{\star}\left(\frac{dQ}{d\pp}\right)
		\label{eq_thrm_Approximation_Theorem_statement}
		,
		\end{equation}
		\item \textbf{Minimum Variance Solution:} For every $\tilde{Z}\in \arginf{Z \in \Lp{\gggg}\cap \Phi} \riskutility(X-Z)$ the following minimality property holds
		\begin{equation}
		\ee{\left\|
			X-
			\left(\lim\limits_{n \uparrow \infty;n\geq 1}\rrrr^{\frac{n}{2+n}}(X)\right)
			\right\|^2}
		\leq 
		\ee{\left\|
			X-\tilde{Z}
			\right\|^2}
		\label{eq_minimal_var_solution}
		.
		\end{equation}
	\end{enumerate}
\end{thrm}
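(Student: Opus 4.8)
The plan is to read off, from the robust representation~\eqref{eq_robust_representation_Lp_risk_measures}, that the objective of~\eqref{eq_PSLE} coincides with $Z\mapsto\riskutility(X-Z)$ for the quadratic-loss utility operator of Example~\ref{ex_dist_target} and $X=f(X_T)$. Indeed, substituting $W=-\|f(X_T)-Z\|^2$ into~\eqref{eq_robust_representation_Lp_risk_measures} gives
\[
\sup_{Q \in \qqqq}\eemes{\|f(X_T)-Z\|^2}[Q]-\rho^{\star}\!\left(\frac{dQ}{d\pp}\right)=\rho\!\left(-\|f(X_T)-Z\|^2\right)=\riskutility(X-Z).
\]
Writing $C\triangleq\Lp{\gggg}\cap\Phi$ (nonempty, closed and convex by Definition~\ref{defn_perspective}(iv)), Problem~\eqref{eq_PSLE} is therefore the minimization of $Z\mapsto\riskutility(X-Z)+\iota_{C}(Z)$, and part~(i) follows by the direct method. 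The Weierstrass property makes this functional proper, convex, strongly lsc, bounded below and coercive; a coercive sublevel set is bounded, hence weakly sequentially compact in the Hilbert space $\Lp{\ffff}$, on which the functional—being convex and strongly lsc, hence weakly lsc—attains its infimum. This is the weak-topology incarnation of Theorem~\ref{thrm_weistrass_Theorem}.

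For part~(ii) I would run a $\Gamma$-convergence argument. Each $\rrrr^{\lambda_n}(X)$ with $\lambda_n\triangleq\frac{n}{2+n}\in[0,1)$ is well-defined and unique by Theorem~\ref{thrm_existsnce_uniqueness}, and the reparameterization of Remark~\ref{rem_reparemterization_of_lambda} gives $\tilde\lambda_n=n$; dividing the (scale-invariant) objective by $n$ yields
\[
\rrrr^{\lambda_n}(X)=\arginf{Z \in C}\ell_n(Z),\qquad \ell_n(Z)\triangleq\tfrac{1}{2n}\ee{\|X-Z\|^2}+\riskutility(X-Z)+\iota_{C}(Z).
\]
Since $\tfrac{1}{2n}\ee{\|X-Z\|^2}\geq0$ decreases to $0$ pointwise, $\{\ell_n\}$ is pointwise decreasing to $\ell\triangleq\riskutility(X-\cdot)+\iota_{C}$, which is (strongly, hence weakly) lsc. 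For such a decreasing sequence the constant recovery sequence gives $\Gamma\text{-}\limsup_n\ell_n\leq\ell$, while $\ell_n\geq\ell$ together with weak lsc of $\ell$ gives $\Gamma\text{-}\liminf_n\ell_n\geq\ell$; hence $\Glim\ell_n=\ell$. The $\ell_n$ are uniformly bounded below and share the coercive minorant $\ell$, so their sublevel sets are uniformly bounded; this furnishes equicoercivity, and Theorem~\ref{thrm_FTOG} produces a solution of~\eqref{eq_PSLE} as the limit of $\{\rrrr^{\lambda_n}(X)\}_n$.

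For part~(iii), put $Z_n\triangleq\rrrr^{\lambda_n}(X)$ and let $\tilde Z$ be any minimizer of $\riskutility(X-\cdot)$ over $C$. Optimality of $Z_n$ for $\ell_n$ and minimality of $\tilde Z$ for $\riskutility(X-\cdot)$ give
\[
\tfrac{1}{2n}\ee{\|X-Z_n\|^2}+\riskutility(X-Z_n)\leq\tfrac{1}{2n}\ee{\|X-\tilde Z\|^2}+\riskutility(X-\tilde Z)\leq\tfrac{1}{2n}\ee{\|X-\tilde Z\|^2}+\riskutility(X-Z_n),
\]
whence $\ee{\|X-Z_n\|^2}\leq\ee{\|X-\tilde Z\|^2}$ for every $n$. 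Passing to the limit and using weak lower semicontinuity of $Z\mapsto\ee{\|X-Z\|^2}$ yields~\eqref{eq_minimal_var_solution} for the limit $Z^{\star}$. The same sandwich forces $\ee{\|X-Z_n\|^2}\to\ee{\|X-Z^{\star}\|^2}$, so $\|X-Z_n\|\to\|X-Z^{\star}\|$ in $\Lp{\ffff}$; combined with $Z_n\rightharpoonup Z^{\star}$ and the Radon--Riesz property of the Hilbert space this upgrades the convergence to strong, and since $Z^{\star}$ is the unique minimal-variance solution every weakly convergent subsequence has the same limit, so the full sequence converges and the limit in~\eqref{eq_thrm_Approximation_Theorem_statement} is justified.

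The main obstacle is the infinite-dimensionality of $\Lp{\ffff}$: strongly closed bounded sets are not compact, so the literal equicoerciveness of Lemma~\ref{ex_coercive}—which demands a \emph{super}-coercive minorant—cannot be supplied by the merely coercive $\riskutility$. The resolution is to carry the entire $\Gamma$-convergence argument out in the weak topology, which is metrizable on the (uniformly bounded) sublevel sets of the separable space $\Lp{\ffff}$; there the coercivity granted by the Weierstrass property already forces weak sequential compactness. Checking that $\ell$ remains the $\Gamma$-limit after this change of topology—i.e.\ that the lower bound survives for \emph{weakly} convergent recovery sequences—is the technical heart of the argument and hinges on the convexity of $\riskutility$, which is exactly what promotes its strong lower semicontinuity to weak lower semicontinuity.
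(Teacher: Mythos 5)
Your proposal is correct, but it takes a genuinely different route from the paper's proof, and the comparison is instructive. The paper keeps the quadratic term fixed and inflates the risk term, working with the pointwise \emph{increasing} family $\tilde{\ell}_n(Z)=(n+1)\left(\riskutility(X-Z)-M\right)+\iota_{\Lp{\gggg}\cap\Phi}(Z)$, where $M=\inf_{Z}\riskutility(X-Z)$; by Proposition~\ref{prop_mono} and stability under the continuous perturbation $\tfrac12\ee{\|X-Z\|^2}$ (Theorem~\ref{prop_prop_of_Gamma}(ii)), its $\Gamma$-limit is $\tfrac12\ee{\|X-Z\|^2}+\iota_{\mathcal{A}}(Z)$, the variance functional restricted to the set $\mathcal{A}$ of risk minimizers, so the Fundamental Theorem of $\Gamma$-convergence (Theorem~\ref{thrm_FTOG}) delivers existence, the identification (ii), and the minimum-variance property (iii) in one stroke, with full-sequence convergence coming from strict convexity of the limit problem. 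You instead normalize so that the quadratic term \emph{vanishes}, obtaining a decreasing family whose $\Gamma$-limit is the risk functional itself over $\Lp{\gggg}\cap\Phi$; its argmin set is all of $\mathcal{A}$, so $\Gamma$-convergence alone only places cluster points in $\mathcal{A}$ and is blind to the minimum-variance selection. What completes your proof---and is its real merit---is the direct Tikhonov-type sandwich giving $\ee{\|X-Z_n\|^2}\leq\ee{\|X-\tilde Z\|^2}$ for every $\tilde Z\in\mathcal{A}$: this simultaneously yields boundedness of $\{Z_n\}_{n\in\nn}$ without invoking coercivity, property (iii) by weak lower semicontinuity, strong convergence via the Radon--Riesz property, and full-sequence convergence via uniqueness of the metric projection of $X$ onto the closed convex set $\mathcal{A}$; at that point the $\Gamma$-convergence scaffolding in your write-up is essentially redundant. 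Your closing observation is also well taken and identifies a soft spot in the paper's own argument: the equicoercivity step there rests on Lemma~\ref{ex_coercive} in the norm topology, where bounded sublevel sets of $\Lp{\ffff}$ are \emph{not} compact, so the appeal to Theorem~\ref{thrm_FTOG} is only legitimate after passing to the weak topology, where convexity upgrades strong to weak lower semicontinuity and reflexivity supplies the compactness---precisely the repair you describe, and one that your selection argument renders unnecessary.
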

The proof of Theorem~\ref{thrm_Approximation_Theorem} will be established in a series of steps, revolving around the use of Theorem~\ref{thrm_FTOG}.  
\begin{proof}%
    Fix $X \in \Lp{\ffff[]}$.  
\hfill\\\textbf{Existence of a solution}\hfill\\
Let $\utility\triangleq -\left\|\cdot \right\|^2$.  
Since $\Phi$ and $\Lp{\gggg}$ are closed, then \citep[Example 1.25]{ConvexMonoCMS} implies that $\iota_{\Lp{\gggg}\cap \Phi}$ is lsc.  Since the sum of lsc functions is lsc, and the sum of functions which are bounded from below is bounded from below, then $\iota_{\Lp{\gggg}\cap \Phi}(\cdot)+\riskutility(\cdot)$ is lsc and bounded from below by the minimum value of $\riskutility$ on $\Lp{\gggg}\cap \Phi$.  

For notational simplicity denote the set $\arginf{Z \in \Lp{\gggg}\cap \Phi}\riskutility(X-Z)$ by $\mathcal{A}$.  The Weierstrass property points (ii) implies $\riskutility$ is coercive.  Since
$$
\iota_{\Lp{\gggg}\cap \Phi}(\cdot)+\riskutility(X-\cdot)\geq \riskutility(    X-\cdot),
$$
then $\iota_{\Lp{\gggg}\cap \Phi}(\cdot)+\riskutility(X-\cdot)$ must also be coercive.  

Therefore, \citep[Remark 1.20]{braides2002gamma} implies that $\iota_{\Lp{\gggg}\cap \Phi}(\cdot)+\riskutility(X-\cdot)$ is (mildly-)coercive.  By \citep[Theorem 2.2]{focardi2012gamma}, the Weierstrass Property (iii) and the (mild-)coerciveness of $\iota_{\Lp{\gggg}\cap \Phi}(\cdot)+\riskutility(X-\cdot)$, it follows that $\mathcal{A}$ is non-empty.  Hence a solution to the right-hand side of~\eqref{eq_thrm_Approximation_Theorem_statement} exists.  Hence (i) holds.

\hfill\\\textbf{Computation of $\Gamma$-limit}\hfill\\
By the Weierstrass Property of $\riskutility$, Theorem~\ref{thrm_weistrass_Theorem} implies that the real number $M\triangleq \inf_{Z \in \Lp{\ffff}} \riskutility(X-Z)$ is well-defined.  Define the family of functions $\left\{\tilde{\ell}_{n}:\Lp{\ffff}\rightarrow (-\infty,\infty]\right\}_{n \in\nn}$ by
$$
\tilde{\ell}_{n}(Z)\triangleq (n+1)\cdot \left(\riskutility(X-Z)-M\right)
+ \iota_{\Lp{\gggg}\cap \Phi}(Z)
.
$$
Then $\{\tilde{\ell}_{n}\}_{n \in \nn}$ is a family of lsc functions to $\rrbar$, which are uniformly bounded below by $0$, minimized on $\mathcal{A}\neq\emptyset$, and point-wise monotonically increasing in $\lambda$.  Therefore, Proposition~\ref{prop_mono} implies that their $\Gamma$-limit exists and is equal to the lsc relaxation of their pointwise limit.  Its pointwise limit is the indicator function $\iota_{\mathcal{A}}$.

The lower semi-continuity of $\riskutility(X-\cdot)$ implies that its levels sets must be closed; in particular, this is the case for the following level set
$$
\begin{aligned}
\mathcal{A}=&
\left\{
Z \in \Lp{\gggg}\cap \Phi:
\riskutility(X-Z)
- M=0
\right\}\\
=
&\left\{
Z \in \Lp{\gggg}:
\riskutility(X-Z)
+\iota_{\Lp{\gggg}\cap \Phi}(Z)
\leq M
\right\}
,
\end{aligned}
$$
where the first equality was obtained by the definitions of $M$ and of $\mathcal{A}$.  Since $\mathcal{A}$ was seen to be non-empty and closed, \citep[Example 1.25]{ConvexMonoCMS} implies that $\iota_{\mathcal{A}}$ is lsc.  Therefore, it is equal to its own lsc relaxation; whence, 
\begin{equation}
\Glim\tilde{\ell}_n = \lim\limits_{n \uparrow \infty}\tilde{\ell}_n=\iota_{\mathcal{A}}
+
\iota_{\Lp{\gggg}\cap \Phi}
=\iota_{\mathcal{A}\cap\Lp{\gggg}\cap \Phi}\triangleq \ell-M.
\label{eq_Gammalimt_form}
\end{equation}

For every $n \in \nn$, define $\ell_n(\cdot) \triangleq \tilde{\ell}_n(\cdot)+\frac1{2}\ee{\|X-\cdot\|^2}$.  
Theorem~\ref{prop_prop_of_Gamma} (ii) and the continuity of the map $Z\mapsto\frac1{2} \ee{\|X-Z\|^2}$ imply that 
\begin{equation}
\Glim \ell_n(\cdot)=
\Glim \frac1{2}\ee{\|X-\cdot\|^2}+\tilde{\ell}_n(\cdot)= \frac1{2}\ee{\|X-\cdot\|^2}+%
\iota_{\mathcal{A}\cap\Lp{\gggg}\cap \Phi}(\cdot)
\label{eq_proof_thrm_Approximation_Theorem_gamma_limit}
.
\end{equation}

\hfill\\\textbf{Proof of %
	Equicoercivity}\hfill\\
Since the sequence $\{\tilde{\ell}_{n}\}_{n \in \nn}$ is point-wise monotonically increasing, %
then 
\begin{equation}
\ell_n(\cdot)=\frac1{2}\eemes{\|X-\cdot\|^2}+\tilde{\ell}_{n}(\cdot)\geq 
\frac1{2}\eemes{\|X-\cdot\|^2}
+
\ell(\cdot)
\geq
\ell(\cdot)
\geq \riskutility
(X-\cdot)
-M
,
\label{eq_equi_cor_setup}
\end{equation}
for every $n \in \nn$.  The Weierstrass Property (iii), Equation~\eqref{eq_equi_cor_setup}, and Lemma~\ref{ex_coercive} %
imply that 
$
\{\ell_n%
\}_{n \in \nn}
$ is an equicoercive family on $\Lp{\ffff}$.

\hfill\\ \textbf{$\Gamma$-convergence}\hfill\\
The sequence $\{\ell_{n}\}_{n \in \nn}$ meets all the requirements for the use of Theorem~\ref{thrm_FTOG}, therefore
\begin{equation}
\lim\limits_{n \uparrow \infty} \arginf{Z \in \Lp{\ffff}
} \ell_n(Z) \in
\arginf{Z \in \Lp{\ffff}%
}
\frac1{2}\ee{\|X-Z\|^2}+%
\iota_{\mathcal{A}\cap\Lp{\gggg}\cap \Phi}(Z)
\label{eq_proof_thrm_Approximation_Theorem_gamma_limit_initial_limit_pre}
.
\end{equation}
Since both sides of~\eqref{eq_proof_thrm_Approximation_Theorem_gamma_limit_initial_limit_pre} can only assume finite values on $\Phi\cap \Lp{\ffff[]}\cap \dom{\riskutility}$, which was assumed to be non-empty within Definition~\ref{defn_perspective}, then~\eqref{eq_proof_thrm_Approximation_Theorem_gamma_limit_initial_limit_pre} may be rewritten as
\begin{equation}
\lim\limits_{n \uparrow \infty} \arginf{Z \in 
	\Lp{\gggg}\cap \Phi
} \ell_n(Z) \in
\arginf{Z \in \Lp{\gggg}\cap \Phi
}
\frac1{2}\ee{\|X-Z\|^2}+%
\iota_{\mathcal{A}\cap\Lp{\gggg}\cap \Phi}(Z)
\label{eq_proof_thrm_Approximation_Theorem_gamma_limit_initial_limit}
.
\end{equation}

\hfill\\\textbf{Simplification of The Limiting Function}\hfill\\
We first show that the left-hand side of~\eqref{eq_proof_thrm_Approximation_Theorem_gamma_limit_initial_limit} is equal to $\eeRA{X}[\gggg][\frac{n}{2+n}]$, we implicitly appeal to the reparameterization defined in Remark~\ref{rem_reparemterization_of_lambda}.  

Since $\arginf{}$ is invariant under addition of a constant then, the expression of the left-hand side of~\eqref{eq_proof_thrm_Approximation_Theorem_gamma_limit_initial_limit}, simplifies to
\begin{equation}
\lim\limits_{\lambda \uparrow \infty}
\arginf{Z \in \Lp{\gggg}\cap \Phi} \ell_n(Z)
=
\lim\limits_{n \uparrow \infty;n\geq 1}
\arginf{Z \in \Lp{\gggg}\cap \Phi} \frac1{2}\ee{\|X-Z\|^2}+n\riskutility(X-Z) 
\label{eq_proof_thrm_Approximation_Theorem_gamma_limit_simplification_LHS}
.
\end{equation}
Therefore,~\eqref{eq_proof_thrm_Approximation_Theorem_gamma_limit_initial_limit} may be written as
\begin{equation}
\lim\limits_{n \uparrow \infty;n\geq 1}
\arginf{Z \in \Lp{\gggg}\cap \Phi} \frac1{2}\ee{\|X-Z\|^2}+n\riskutility(X-Z) 
\in
\arginf{Z \in \Lp{\gggg}\cap \Phi}
\frac1{2}\ee{\|X-Z\|^2}+\iota_{\mathcal{A}}(Z)
\label{eq_proof_thrm_Approximation_Theorem_gamma_limit_initial_limit_simplif}
.
\end{equation}
Theorem~\ref{thrm_existsnce_uniqueness} implies that $\rrrr^{\frac{n}{2+n}}(X)$ is the unique element in each of the sets $\arginf{Z \in \Lp{\gggg}\cap \Phi} 
\frac1{2}\ee{\|X-Z\|^2}
+n\riskutility(Z)$.  Therefore, the left-hand side of~\eqref{eq_proof_thrm_Approximation_Theorem_gamma_limit_initial_limit_simplif} simplifies to
\begin{equation}
\lim\limits_{n \uparrow \infty;n\geq 1}
\rrrr^{\frac{n}{2+n}}(X)
\in
\arginf{Z \in \Lp{\gggg}\cap \Phi}
\frac1{2}\ee{\|X-Z\|^2}+\iota_{\mathcal{A}}(Z)
\label{eq_proof_thrm_Approximation_Theorem_gamma_limit_initial_limit_simplif_real_nice_boyyy}
.
\end{equation}

We now show that the right-hand side of~\eqref{eq_proof_thrm_Approximation_Theorem_gamma_limit_initial_limit_simplif} is equal to the optimizer of~\eqref{eq_PSLE}.  First, let $Z^{\star}$ be a cluster point of the sequence $\rrrr^{\frac{n}{2+n}}(X)$ in $\Lp{\ffff[]}$.  

Together, the definition of the infimum, the continuity of the map $Z \mapsto X-Z$ on $\Lp{\ffff[]}$, Equation \eqref{eq_proof_thrm_Approximation_Theorem_gamma_limit_initial_limit}, and the lsc-ity of $\riskutility$ 
imply that
\begin{equation}
\begin{aligned}
\inf_{Z \in \Lp{\gggg}\cap \Phi}\riskutility\left(
X-
Z
\right)
\leq &
\riskutility\left(
X-
Z^{\star}
\right) \\
= &
\riskutility\left(
\lim\limits_{n \uparrow \infty} 
\left(
X-
\rrrr^{\frac{n}{2+n}}(X)
\right)
\right)\\
\leq &
\liminf\limits_{n \uparrow \infty}
\riskutility\left(
X-
\rrrr^{\frac{n}{2+n}}(X)
\right)
.
\end{aligned}
\label{eq_proof_thrm_Approximation_Theorem_gamma_limit_RHS_1}
\end{equation}
However, Lemma~\ref{lem_Asym_prop_RACE}(i) implies that
\begin{equation}
\lim\limits_{n \uparrow \infty}
\riskutility\left(
X-
\rrrr^{\frac{n}{2+n}}(X)
\right)
=
\liminf\limits_{n \uparrow \infty}
\riskutility\left(
X-
\rrrr^{\frac{n}{2+n}}(X)
\right)
= 
\inf_{Z \in \Lp{\gggg}\cap \Phi}\riskutility\left(
X-Z
\right)
\label{eq_help_thank_Lemma}
.
\end{equation}
Hence,
\begin{equation}
\lim\limits_{n \uparrow \infty;n\geq 1}
\rrrr^{\frac{n}{2+n}}(X)
\in \mathcal{A} 
= 
\arginf{Z \in \Lp{\gggg}\cap \Phi}\riskutility(X-Z)
.
\label{eq_leconq}
\end{equation}
Combining~\eqref{eq_proof_thrm_Approximation_Theorem_gamma_limit_initial_limit_simplif_real_nice_boyyy} and~\eqref{eq_leconq} yields
\begin{equation}
\begin{aligned}
\lim\limits_{n \uparrow \infty;n\geq 1}
\rrrr^{\frac{n}{2+n}}(X)
\in &
\arginf{Z \in \Lp{\gggg}\cap \Phi}\riskutility(X-Z)
\bigcap 
\arginf{Z \in \Lp{\gggg}\cap \Phi}
\frac1{2}\ee{\|X-Z\|^2}+\iota_{\mathcal{A}}(Z)\\
=
&
\arginf{Z \in \Lp{\gggg}\cap \Phi}\riskutility(X-Z)
\bigcap 
\arginf{Z \in \Lp{\gggg}\cap \Phi\cap \mathscr{A}}
\frac1{2}\ee{\|X-Z\|^2}
\end{aligned}
.
\label{eq_leconq_supreme}
\end{equation}
Hence (iii) holds.    
In particular,\eqref{eq_leconq_supreme} implies that
\begin{equation}
\lim\limits_{n \uparrow \infty;n\geq 1}
\rrrr^{\frac{n}{2+n}}(X)
\in
\arginf{Z \in \Lp{\gggg}\cap \Phi}
\riskutility\left(
X-
Z
\right)
\label{eq_proof_thrm_Approximation_Theorem_gamma_limit_initial_limit_simplif_real_nice_boyyy_nice_buddy_boyy}
.
\end{equation}

Applying the robust representation~\eqref{eq_robust_representation_Lp_risk_measures} to~\eqref{eq_proof_thrm_Approximation_Theorem_gamma_limit_initial_limit_simplif_real_nice_boyyy_nice_buddy_boyy} yields (ii)%
.  
\end{proof}

We now discuss a scheme for computing $\rrrr(X)$%
.%
\section{Computation of ${\riskutility}$-Conditioning}\label{s_compute}
This section shows how the forward-backwards splitting algorithm of \cite{combettes2005signal} can be used to compute the solutions to~\eqref{PG} and~\eqref{RA}.  When the feature set $\Phi$ is the entire space $\Lp{\ffff[]}$; the algorithm can be interpreted as a type of iteration between a gradient descent algorithm and a Monte-Carlo scheme.%

In general, when $\Phi$ need not be the entire space.  This restriction renders the Monte-Carlo step in the proximal splitting algorithm invalid.  Instead, a restricted version of the conditional expectation operator, introduced in the next section, is used in its place.  We show that this restricted conditional expectation can be explicitly computable via a \cite[Dykstra Splitting Algorithm]{boyle1986method}.  
\subsection{Relative Conditional Expectation}
The definition of the $\Phi$-Relative Conditional Expectation, is based on the following Lemma.  
\begin{lem}\label{lem_aaa_ce}
Let $\Phi$ be an feature set.  For every $X \in \Lp{\ffff[]}$, there exists a unique element in each of the sets
\begin{align}
\label{eq_lem_aaa_ce_0}
&\arginf{Z \in \Lp{\ffff[]}} \iota_{\Lp{\gggg}}(Z) + \iota_{\Phi}(Z) + \ee{\|X-Z\|^2}\\
&\arginf{Z \in \Lp{\ffff[]}} \iota_{\Phi}(Z) + \ee{\|X-Z\|^2}
\label{eq_lem_aaa_ce}
.
\end{align}
Moreover, the maps taking an element of $X$ to the element of the set~\eqref{eq_lem_aaa_ce} is a projection operator in $\Lp{\ffff[]}$, with values in $\Lp{\gggg}\cap \Phi$ and in $\Phi$, respectively.  
\end{lem}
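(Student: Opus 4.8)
The plan is to recognize both optimization problems as metric projections onto non-empty, closed, convex subsets of the Hilbert space $\Lp{\ffff[]}$, and then to invoke the well-posedness of such projections already recorded around~\eqref{eq_proj}. First I would observe that, since $\iota_{\Lp{\gggg}}(Z)+\iota_{\Phi}(Z)=\iota_{\Lp{\gggg}\cap\Phi}(Z)$, the objective in~\eqref{eq_lem_aaa_ce_0} is exactly $\ee{\|X-Z\|^2}+\iota_{\Lp{\gggg}\cap\Phi}(Z)$, which by the reformulation~\eqref{eq_proj_alt} is the metric projection $\Pi_{\Lp{\gggg}\cap\Phi}(X)$. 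Likewise, the objective in~\eqref{eq_lem_aaa_ce} is $\ee{\|X-Z\|^2}+\iota_{\Phi}(Z)$, the metric projection $\Pi_{\Phi}(X)$. Identifying the two minimizer-maps with $\Pi_{\Lp{\gggg}\cap\Phi}$ and $\Pi_{\Phi}$ is what lets us call them projection operators in the sense of~\eqref{eq_proj}.

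Next I would verify the hypotheses needed for these projections to be well-posed. The set $\Phi$ is non-empty, convex, and strongly closed by Definition~\ref{defn_perspective}(iv), which handles the second problem directly. Since $\Lp{\gggg}$ is a closed linear subspace of $\Lp{\ffff[]}$, it is in particular closed and convex, so the intersection $\Lp{\gggg}\cap\Phi$ is again closed and convex; its non-emptiness follows from the feature-set compatibility requirement $\Lp{\gggg}\cap\Phi\cap\aaaa_{\rho}\cap\operatorname{dom}(\riskutility)\neq\emptyset$ in Definition~\ref{defn_perspective}(iv), which in particular forces $\Lp{\gggg}\cap\Phi\neq\emptyset$. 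With both $C=\Phi$ and $C=\Lp{\gggg}\cap\Phi$ being non-empty, closed, convex subsets of the Hilbert space $\Lp{\ffff[]}$, the well-posedness of~\eqref{eq_proj} (equivalently, the Hilbert-space projection theorem) yields existence and uniqueness of the minimizer in each case, and by construction the unique minimizers lie in $\Lp{\gggg}\cap\Phi$ and in $\Phi$, respectively.

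The only point requiring genuine care is the non-emptiness of $\Lp{\gggg}\cap\Phi$, since without it the first projection would be ill-posed; this is precisely where the compatibility condition built into the definition of a feature set is used, and I would be careful to cite it explicitly rather than take the intersection for granted. Everything else is a faithful transcription of the standard metric-projection argument already quoted for~\eqref{eq_proj}, so I do not anticipate any analytic obstacle beyond correctly matching the two objectives to the indicator-function reformulation of a projection.
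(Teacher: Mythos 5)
Your proposal is correct and takes essentially the same route as the paper: the paper's proof likewise identifies the minimization problems as metric projections onto closed convex subsets of the Hilbert space $\Lp{\ffff[]}$ and cites standard Hilbert-space projection theory (Chapter 29 of Bauschke--Combettes) for existence and single-valuedness. Your write-up is in fact slightly more careful than the original: you explicitly verify the non-emptiness of $\Lp{\gggg}\cap \Phi$ via Definition~\ref{defn_perspective}(iv) and correctly work with the intersection $\Lp{\gggg}\cap\Phi$, where the paper's one-line proof misprints the projection set as the union $\Phi\cup\Lp{\gggg}$.
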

\begin{proof}
The multifunction of \eqref{eq_lem_aaa_ce} is a metric projection onto the closed convex subset $\Phi\cup \Lp{\gggg}$, in the Hilbert space $\Lp{\ffff[]}$.  From the results of \citep[Chapter 29]{ConvexMonoCMS}, we conclude that it is well-defined and single-valued.  
\end{proof}
\begin{ex}[Variance Thresholding]\label{ex_proj_var}
The feature set $\Phi_{\Sigma}^2$, of Example~\ref{ex_Prespecified_Variance_Set}, is a closed ball about the origin in $\Lp{\ffff[]}$.  Hence, the metric projection $\Pi_{\Phi_{\Sigma}^2}$ onto the Ball of radius $\Sigma$ about $0$ is     \begin{equation}
\Pi_{\Phi_{\Sigma}^2}(Z) 
=
\min\left\{
1,\frac{\Sigma}{\sqrt{\ee{\|Z\|^2}}}
\right\}\cdot Z
\label{eq_met_proj_unit_ball}
.
\end{equation}
See \cite[Section 6.9]{devolder2014first} for a discussion about projections onto unit balls in Hilbert spaces.  
\end{ex}
\begin{ex}[Sparse Variance Thresholding]\label{ex_sparse_proj_ex}
Let $\Sigma>0$ and denote the $\ell^1$-ball in $\rrd$, about $0$ of radius $\Sigma$ by $\operatorname{Ball}_1(0;\Sigma)$.  

Since $\Lp{\ffff[]}[]$ is a decomposable space, in the sense of \citep[14.59 Definition]{rockafellar2009variational}, and $\|\cdot\|^2$ is a Carath\'{e}odory (also called a normal integrand in \cite{rockafellar2009variational}), therefore \citep[14.60 Theorem]{rockafellar2009variational} implies that
\begin{equation}
Z \in \arginf{Z \in \Phi^1_{\Sigma}}[1] \ee{\|X-Z\|^2}  
\Leftrightarrow
Z(\omega) \in \arginf{x \in \rrd}[1] \|X(\omega)-Z\|^2 ;\; (\forall \omega \in \Omega^{0})
\label{eq_interchange_rockafellar}
,
\end{equation}
where $\Omega^{0}\subseteq \Omega$ is a subset of full $\pp$-measure.  
That is, the projection $\Pi_{\Phi^1_{\Sigma}}$ may be computed $\pp$-a.s.~entirely in the image in $\rrd$.  Denoting the projection of $\rrd$ onto $\operatorname{Ball}_1(0;\Sigma)$ by $\Pi_1^{\Sigma}$,~\eqref{eq_interchange_rockafellar} implies that
\begin{equation}
\Pi_{\Phi^1_{\Sigma}}(X(\omega))=\Pi_1^{\Sigma}(X(\omega))
.
\label{ex_sparse_proj_ex_1}
\end{equation}
In \cite{duchi2008efficient}, it is shown that there exists some $\Sigma^{\star}\in (0,\infty)$ such that $\Pi_1^{\Sigma}$ is given explicitly by
\begin{equation*}
\Pi_1^{\Sigma}(x)\triangleq \sum_{n=1}^{d} \operatorname{sgn}(x_n)\left(x_n-\Sigma^{\star}\right)_+ e_n
,
\end{equation*}
where $\{e_n\}_{n=1}^d$ is the standard orthonormal basis of $\rrd$.  Therefore, $\Pi_{\Phi^1_{\Sigma}}$ may be defined $\pp$-a.s. for any $X=(X_1,\dots,X_d)\in \Lp{\ffff[]}[d]$ by
\begin{equation}
\Pi_{\Phi^1_{\Sigma}}(X)(\omega) = \sum_{n=1}^{d} \operatorname{sgn}\left(X_n(\omega)\right)\left(X_n(\omega)-\Sigma^{\star}\right)_+ e_n
.
\label{eq_form_pi_1r}
\end{equation}
The use of the word \textit{sparse} will be justified in the next section.  Intuitively, unlike~\eqref{eq_met_proj_unit_ball} the term $\left(X_n(\omega)-\Sigma^{\star}\right)_+$ can set some entries of $X$ to $0$.  
\end{ex}

\begin{defn}[{$\Phi$}-Relative Conditional Expectation]\label{defn_adm_cond_exp}
Let $\Phi$ be an feature set.  The $\Phi$-relative conditional expectation, is defined to be the map
\begin{equation}
\begin{aligned}
\ee{\cdot}[\gggg;\Phi]: \Lp{\ffff[]}&\rightarrow \Phi\cap \Lp{\gggg}\\
\ee{X}[\gggg;\Phi]&\mapsto \arginf{Z \in \Lp{\ffff[]}} 
\ee{\|X-Z\|^2}
+
\iota_{\Lp{\gggg}}(Z) + \iota_{\Phi}(Z) 
.
\end{aligned}
\label{defn_adm_cond_exp_eq_defining_equation}
\end{equation}
\end{defn}
\begin{ex}\label{rem_redux}
If $\Phi=\Lp{\ffff[]}$ and $\hhhh=\rrd$ then $\ee{\cdot}[\gggg;\Phi]=\ee{\cdot}[\gggg]$.  In the general case where $\hhhh\neq\rrd$, we will denote $\ee{\cdot}[\gggg;\Phi]$ by $\ee{\cdot}[\gggg]$.
\end{ex}
Other examples will be considered in latter sections.  

The problem of computing the intersection between closed-convex sets in a Hilbert space is solved in \cite{boyle1986method}.  The algorithm is known as Dykstra's method, or as Dykstra splitting within the scientific computing and signal processing communities respectively.  The next result is a direct application of Dykstra splitting.  
\begin{prop}[Computation of The {$\Phi$}-Relative Conditional Expectation]\label{prop_computable}
For any $X \in \Lp{\ffff}$, $\ee{X}[\gggg;\Phi]$ is the strong limit of the sequence $\left\{X^{(n)}\right\}_{n \in \nn}$, defined recursively by
\begin{equation}
\begin{aligned}
\bar{Y}^{(n)}\triangleq &\Pi_{\Phi}(\bar{X}^{(n)} + P^{(n)})\\
P^{(n+1)}\triangleq & \bar{X}^{(n)} + P^{(n)} - \bar{Y}^{(n)}\\
\bar{X}^{(n+1)} \triangleq & \ee{\bar{Y}^{(n)} + Q^{(n)}}[\gggg]\\
Q^{(n+1)}\triangleq &  \bar{Y}^{(n)} + Q^{(n)} - \bar{X}^{(n+1)}
,
\end{aligned}
\label{eq_prop_computable}
\end{equation}
where $\bar{X}^{(0)},P^{(0)}, Q^{(0)},$ and $\bar{Y}^{(0)}$ are initialized at
$$
\bar{Y}^{(0)}\triangleq 0;
\; P^{(0)}\triangleq 0 ;
\; \bar{X}^{(0)}\triangleq X;
\; Q^{(0)} \triangleq 0,
$$
where $\Pi_{\Phi}$ is the metric projection on $\Lp{\ffff[]}$ onto the set $\Phi$, and is defined by sending $X$ to the element in~\eqref{eq_lem_aaa_ce}.  
\end{prop}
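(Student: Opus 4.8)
The plan is to recognize the recursion~\eqref{eq_prop_computable} as an instance of Dykstra's alternating-projection algorithm and to invoke the Boyle--Dykstra strong-convergence theorem of~\cite{boyle1986method}. First I would recall that, by Lemma~\ref{lem_aaa_ce}, the map $X\mapsto \ee{X}[\gggg;\Phi]$ is the metric projection of $X$ onto the set $\Phi\cap \Lp{\gggg}$, taken in the Hilbert space $\Lp{\ffff[]}$. Since $\Phi$ is closed and convex by the definition of a feature set, and $\Lp{\gggg}$ is a closed linear subspace (hence closed and convex), their intersection $\Phi\cap \Lp{\gggg}$ is closed and convex. Moreover, the feature-set compatibility condition of Definition~\ref{defn_perspective}, namely $\Lp{\gggg}\cap \Phi\cap \aaaa_{\rho}\cap \operatorname{dom}\left(\riskutility\right)\neq\emptyset$, guarantees in particular that $\Phi\cap \Lp{\gggg}$ is non-empty; thus the projection onto it is well-defined and single-valued, and the target $\ee{X}[\gggg;\Phi]=\Pi_{\Phi\cap\Lp{\gggg}}(X)$ is unambiguous.

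Next I would identify the two constituent projections. Set $C_1\triangleq \Phi$ and $C_2\triangleq \Lp{\gggg}$. The projection onto $C_1$ is precisely $\Pi_{\Phi}$, the metric projection onto the feature set described in the statement. The projection onto $C_2=\Lp{\gggg}$ coincides with the classical conditional expectation $\ee{\cdot}[\gggg]$, since, as recalled at the start of Section~\ref{rem_inf_conv}, any conditional expectation operator in $\Lp{\ffff[]}$ is exactly the metric projection onto the closed subspace $\Lp{\gggg}$. With these two identifications, I would verify that~\eqref{eq_prop_computable} is literally the two-set Dykstra iteration: the variables $\bar{Y}^{(n)}$ and $\bar{X}^{(n+1)}$ are the alternating projections onto $C_1$ and $C_2$, respectively, while $P^{(n)}$ and $Q^{(n)}$ are the Dykstra increments (the normal-cone corrections) carried separately for each set, all initialized at zero together with $\bar{X}^{(0)}=X$.

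Finally I would invoke the Boyle--Dykstra theorem~\cite{boyle1986method}, which asserts that for finitely many closed convex sets in a Hilbert space with non-empty intersection, the Dykstra iterates converge strongly to the projection of the initial point onto the intersection. Applied to $C_1=\Phi$, $C_2=\Lp{\gggg}$, with $\Phi\cap\Lp{\gggg}\neq\emptyset$ and initialization $\bar{X}^{(0)}=X$, this yields that $\{\bar{X}^{(n)}\}_{n\in\nn}$ converges strongly to $\Pi_{\Phi\cap\Lp{\gggg}}(X)=\ee{X}[\gggg;\Phi]$, which is the claim. The main obstacle I anticipate is twofold: first, carefully matching the indexing of~\eqref{eq_prop_computable} to the canonical Dykstra recursion, since one must be precise about which increment is attached to which set and in which order the projections are applied; and second, ensuring that the convergence obtained is \emph{strong} rather than merely weak in the possibly infinite-dimensional space $\Lp{\ffff[]}$. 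It is exactly at this second point that the strength of the Boyle--Dykstra result over a naive alternating-projection argument is required.
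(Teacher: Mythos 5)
Your proposal is correct and follows essentially the same route as the paper: identify the conditional expectation $\ee{X}[\gggg]$ as the metric projection onto the closed subspace $\Lp{\gggg}$, note that $\Phi\cap\Lp{\gggg}\neq\emptyset$ by the feature-set condition, and invoke the two-set Dykstra strong-convergence theorem (the paper cites \citep[Theorem 30.7]{ConvexMonoCMS}, which is the same Boyle--Dykstra result you cite from \cite{boyle1986method}). Your write-up is in fact more detailed than the paper's one-line proof, spelling out the identification of the iterates with the canonical Dykstra recursion, but the underlying argument is identical.
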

\begin{proof}
Since conditional expectation, on $L^2$ is given by the projection onto the closed convex subset $\Lp{\gggg}$, and since $\Phi\cap \Lp{\gggg}\neq\emptyset$, then \citep[Theorem 30.7]{ConvexMonoCMS} implies the result.  
\end{proof}

We may now describe an algorithm for computing~\eqref{PG} and establish its convergence.  
\subsection{Forward-Backwards Splitting}\label{s_FB_Split}
If $\riskutility$ is G\^{a}teaux differentiable, then we may define the sequence $\{Z^n\}_{n \in \nn}$ in $\Lp{\gggg}$ by
\begin{enumerate}[(i)]
\item $Z^0\triangleq \ee{X}[\gggg;\Phi][\qq]$,
\item For all $n>0$ let 
$$Z^{n+1}\triangleq 
Z^n + \alpha_n
\left(
\ee{
	Z^n  +%
	(2\gamma_n \tilde{\lambda})\cdot %
	(\nabla \riskutility
	\left(
	X
	-
	Z^n
	\right)  +B_n)
}[\gggg;\Phi][\qq]
+A_n- Z^n
\right)
.
$$
\end{enumerate}
The next result, shows that under appropriate choices of the meta-parameters $\{\gamma_n,\alpha_n,B_n,A_n\}_{n \in \nn}$, $Z^{n}$ converges to the solution to Problem~\ref{PG}.  The type of convergence depends on the convexity of $\riskutility$. 
\begin{ass}[Compatibility]\label{ass_convergence}
Let $\beta>0$, $\{\gamma_n,\alpha_n,B_n,A_n\}_{n \in \nn}$ be a sequence taking values in $(0,\infty)\times (0,1]\times \Lp{\gggg}^2$, such that
\begin{enumerate}[(i)]
	\item $0<\inf_{n \in \nn} \gamma_n <\sup_{n \in \nn} \gamma_n <2\beta$,%
	\item $\inf_{ n \in \nn} \alpha_n >0$,
	\item $\sup_{n \in \nn}\alpha_n \leq 1$,
	\item $
	\max\left\{
	\sum_{n\in \nn} \ee{\|A_n\|^2}
	,
	\sum_{n\in \nn} \ee{\|B_n\|^2}
	\right\}
	<
	\infty
	.
	$
\end{enumerate}
Moreover, assume that $\riskutility$ is G\^{a}teaux differentiable on $\Lp{\ffff[]}$.  
\end{ass}
This result follows nearly directly from the Proximal Forwards-Backwards Splitting algorithm of \cite{combettes2005signal}.  
\begin{thrm}[Proximal Forwards-Backwards Splitting (PFBS)]\label{thrm_compute}
Under Assumption~\ref{ass_convergence},    $\{Z^n\}_{n \in \nn}$ converges weakly to $\eeRAp{X}$.  If moreover, $\riskutility$ is strictly-convex with Lipschitz constant $\frac1{\beta}>0$, then $\{Z^n\}_{n \in \nn}$ converges strongly to $\eeRAp{X}$.  
\end{thrm}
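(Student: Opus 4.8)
The plan is to recognize the recursion defining $\{Z^n\}_{n\in\nn}$ as an instance of the inexact, relaxed proximal forward--backward splitting scheme of \cite{combettes2005signal}, applied to the minimisation of a sum $f+g$ over $\Lp{\gggg}$. Here $g(Z)\triangleq \iota_{\Lp{\gggg}\cap\Phi}(Z)$ is the nonsmooth summand and $f$ is the differentiable part of the $\rrrr$-conditioning functional in the reparameterised form of Remark~\ref{rem_reparemterization_of_lambda}, so that the unique minimiser of $f+g$ is exactly $\eeRAp{X}$ by Definition~\ref{defn_RA_Expectation}. The two ingredients of the splitting are then already available: by Lemma~\ref{lem_aaa_ce} the proximity operator $\Prox{g}$ is the metric projection onto $\Lp{\gggg}\cap\Phi$, which is precisely the $\Phi$-relative conditional expectation $\ee{\cdot}[\gggg;\Phi][\qq]$ driving the backward step, while differentiating $Z\mapsto\riskutility(X-Z)$ produces the term $-\tilde{\lambda}\nabla\riskutility(X-Z)$ that, up to the normalising constant, drives the forward (gradient) step. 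First I would check, term by term, that the update matches the Combettes--Wajs recursion, with $\alpha_n$ as the relaxation parameter, $\gamma_n$ as the step size, and $A_n,B_n$ as the admissible errors in the backward and forward evaluations; the initialisation $Z^0=\ee{X}[\gggg;\Phi][\qq]$ forces $Z^0\in\Lp{\gggg}\cap\Phi$, so the whole trajectory stays feasible.

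Next I would verify the hypotheses of the forward--backward convergence theorem. The decisive analytic input is $\beta$-cocoercivity of $\nabla f$: since $\riskutility$ is convex and G\^{a}teaux differentiable with $\tfrac1\beta$-Lipschitz gradient under Assumption~\ref{ass_convergence}, the Baillon--Haddad theorem upgrades Lipschitz continuity of $\nabla\riskutility$ to cocoercivity of $\nabla f$, which is exactly what the step-size window $0<\inf_n\gamma_n\leq\sup_n\gamma_n<2\beta$ of Assumption~\ref{ass_convergence}(i) is calibrated against. Conditions (ii)--(iii) place the relaxation parameters in the admissible range, and (iv) makes the error sequences square-summable, so the perturbed iteration inherits the same limiting behaviour as the exact scheme. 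Non-emptiness of the solution set of $f+g$ is supplied by Theorem~\ref{thrm_existsnce_uniqueness}, which also gives uniqueness. Invoking the convergence theorem of \cite{combettes2005signal} then yields weak convergence of $\{Z^n\}_{n\in\nn}$ to some minimiser of $f+g$, and uniqueness identifies that limit as $\eeRAp{X}$.

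For the strong-convergence statement I would exploit the extra regularity: when $\riskutility$ is strictly (uniformly) convex the objective $f+g$ is uniformly convex, so the forward operator $\mathrm{Id}-\gamma\nabla f$ is strongly monotone and, combined with the $\tfrac1\beta$-Lipschitz bound, the composite $\Prox{\gamma g}\circ(\mathrm{Id}-\gamma\nabla f)$ becomes a Banach contraction on $\Lp{\gggg}$ for $\gamma$ in the admissible window; strong convergence of the relaxed, inexact iterates then follows from the corresponding strengthening in \cite{combettes2005signal}, with the summability of $\{A_n\}$ and $\{B_n\}$ used to absorb the perturbations into a still-contractive bound. The main obstacle I anticipate is not the abstract invocation of the splitting theorem but the bookkeeping that precedes it: pinning down the exact correspondence between the constants in the recursion (the factor multiplying $\gamma_n\tilde{\lambda}$ and the sign arising from differentiating $Z\mapsto\riskutility(X-Z)$) and the normalisation of \cite{combettes2005signal}, together with confirming that the $\qq$-based projection $\ee{\cdot}[\gggg;\Phi][\qq]$ is genuinely the proximity operator of $g$ in the Hilbert space with respect to which cocoercivity is measured. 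Deriving cocoercivity of $\nabla f$ through Baillon--Haddad, rather than postulating it, is the one genuinely quantitative step.
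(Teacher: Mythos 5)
Your proposal is correct and follows essentially the same route as the paper: identify the recursion as the inexact, relaxed forward--backward scheme of \cite{combettes2005signal} with the projection onto $\Lp{\gggg}\cap\Phi$ as the backward (proximal) step, use Theorems~\ref{thrm_Lagrangian_Formulation} and~\ref{thrm_existsnce_uniqueness} for non-emptiness and uniqueness of the minimiser, and invoke \citep[Theorem 3.4]{combettes2005signal} for weak and then strong convergence. The only caveat is your Banach-contraction justification of the strong-convergence step, which overreaches (strict convexity of $\riskutility$ does not give strong monotonicity of the gradient, hence no contraction); the paper instead verifies \citep[Condition 3.2]{combettes2005signal} via \citep[Proposition 3.6(vii)]{combettes2005signal} and then applies \citep[Theorem 3.4(iv)]{combettes2005signal}, which is the clean way to conclude and is also what your fallback appeal to the ``corresponding strengthening'' amounts to.
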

\begin{cor}\label{cor_simplified_algo_gen}
Fix $\gamma>2\beta$.  
If $\riskutility$ is a strictly convex and G\^{a}teaux differentiable with Lipschitz constant $\frac1{\beta}$.  Then, the $\Lp{\gggg}$-valued sequence $\{Z^n\}_{n\in\nn}$, defined by 
$$
\begin{aligned}
Z^0\triangleq & \ee{X}[\gggg;\Phi]\\
Z^{n+1}\triangleq & 
\ee{
	Z^n}[\ffff[0];\Phi][\qq]
- \frac{\tilde{\lambda}}{\gamma^n}
\ee{
	\nabla \riskutility
	\left(
	X
	-
	Z^n
	\right) 
}[\gggg;\Phi][\qq]
;\qquad (n\geq 1)
,
\end{aligned}
$$
converges strongly to $\eeRAp{X}$, where $\lambda'\triangleq \tilde{\lambda}2 \beta$.    
\end{cor}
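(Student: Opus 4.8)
The plan is to realize the stated recursion as a particular instance of the Proximal Forwards--Backwards Splitting scheme of Theorem~\ref{thrm_compute} and then to invoke its strong-convergence clause. Concretely, in the master iteration defining $\{Z^n\}_{n \in \nn}$ one would fix the meta-parameters by setting the relaxation $\alpha_n \equiv 1$, the error terms $A_n \equiv B_n \equiv 0$, and choosing the step-size sequence $\{\gamma_n\}_{n \in \nn}$ so that the coefficient $2\gamma_n\tilde{\lambda}$ reduces to the displayed $\tfrac{\tilde{\lambda}}{\gamma^n}$. The hypotheses of the corollary---strict convexity and G\^{a}teaux differentiability of $\riskutility$ with gradient Lipschitz constant $\tfrac1{\beta}$---are exactly those triggering the \emph{strong} (rather than merely weak) convergence conclusion of Theorem~\ref{thrm_compute}, so once the parameter identification is in place the limit $\eeRAp{X}$ follows immediately, with the auxiliary parameter $\lambda'\triangleq \tilde{\lambda}2\beta$ recorded through the reparameterization of Remark~\ref{rem_reparemterization_of_lambda}.

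The first substantive step is to verify that this choice of meta-parameters meets Assumption~\ref{ass_convergence}. Conditions (ii)--(iv) are immediate: $\alpha_n \equiv 1$ gives $\inf_n\alpha_n > 0$ and $\sup_n\alpha_n \le 1$, while $A_n \equiv B_n \equiv 0$ makes both error series in (iv) vanish, hence trivially summable. The content lies entirely in condition (i), $0 < \inf_n\gamma_n \le \sup_n\gamma_n < 2\beta$: here the hypothesis $\gamma > 2\beta$ and the Lipschitz constant $\tfrac1{\beta}$ of $\nabla\riskutility$ enter, since a forward--backward step is admissible precisely when the step-size stays below $2\beta = 2/(1/\beta)$. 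I would pin down the schedule so that the step-sizes remain in a fixed compact subinterval of $(0,2\beta)$, bounded away from both endpoints, which is what condition (i) demands.

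Next I would carry out the algebraic reduction of the master recursion under these choices. After substituting $\alpha_n=1$ and $A_n=B_n=0$, the iteration collapses to a single $\Phi$-relative conditional expectation of $Z^n$ perturbed by the gradient term, and the task is to rewrite this as the two separate relative conditional expectations appearing in the statement. Here one uses that every iterate is $\gggg$-measurable and lies in $\Phi$, together with the characterization of $\ee{\cdot}[\gggg;\Phi]$ as a metric projection from Definition~\ref{defn_adm_cond_exp} and Lemma~\ref{lem_aaa_ce}; in the regime where $\Phi=\Lp{\ffff[]}$ the projection is the genuine linear conditional expectation of Example~\ref{rem_redux}, and the splitting across the sum is exact.

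The main obstacle is precisely this separation, because $\ee{\cdot}[\gggg;\Phi]$ is a \emph{nonlinear} projection whenever $\Phi \ne \Lp{\ffff[]}$, so distributing it over the two summands $Z^n$ and $-\tfrac{\tilde{\lambda}}{\gamma^n}\nabla\riskutility(X-Z^n)$ is not valid in full generality. I expect the honest resolution either to restrict the displayed closed form to the unconstrained case $\Phi=\Lp{\ffff[]}$, where linearity makes the formula literal, or to read the two-term expression as shorthand for the single composite projection, with the convergence itself supplied unchanged by Theorem~\ref{thrm_compute} irrespective of how the operator is written. Reconciling the exponentially-varying coefficient $\tfrac1{\gamma^n}$ with the lower bound $\inf_n\gamma_n>0$ of Assumption~\ref{ass_convergence}(i) is the other delicate point, and is where I would scrutinize the intended step-size schedule most closely.
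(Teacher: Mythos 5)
Your proposal is, in substance, exactly the paper's proof: the paper's entire argument for this corollary reads ``Take $A_n=B_n=0$, $\gamma_n\triangleq \frac1{\gamma^n}$, and $\alpha_n=1$, for every $n \in \nn$ in Theorem~\ref{thrm_compute},'' i.e.\ the same parameter instantiation of the Proximal Forwards--Backwards Splitting theorem that you describe. The two ``delicate points'' you flag are not artifacts of your reading but genuine gaps that the paper's one-line proof silently ignores. First, with $\gamma>2\beta$ the schedule $\gamma_n=\gamma^{-n}$ violates Assumption~\ref{ass_convergence}(i) in every case: if $\gamma>1$ then $\gamma_n\to 0$ so $\inf_n\gamma_n=0$, while if $\gamma\le 1$ then $2\beta<\gamma\le1$ forces $\gamma_n\ge 1>2\beta$; so the hypotheses of Theorem~\ref{thrm_compute} are never met by the stated choice. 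Second, the master recursion applies the single projection $\ee{\cdot}[\gggg;\Phi]$ to the sum $Z^n+2\gamma_n\tilde{\lambda}\,\nabla\riskutility(X-Z^n)$, whereas the corollary's display distributes it over the two summands (with a sign and a factor-of-two discrepancy besides); as you note, this splitting is exact only when the projection is linear, essentially the case $\Phi=\Lp{\ffff[]}$ treated in Corollary~\ref{cor_simplified_algo}. Your instinct to scrutinize the step-size schedule and to read the two-term display as shorthand for the composite projection is precisely where the paper's own argument would need repair.
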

\begin{proof}
Take $A_n=B_n=0$, $\gamma_n\triangleq \frac1{\gamma^n}$, and $\alpha_n=1$, for every $n \in \nn$ in Theorem~\ref{thrm_compute}.  
\end{proof}
If all estimators are $\Phi$-featured, as is the case in the classical setting, the use of Dykstra's method can be avoided.  In this case, the algorithm of Corollary~\ref{cor_simplified_algo_gen} reduces to.  
\begin{cor}\label{cor_simplified_algo}
	Set $\Phi=\Lp{\ffff[]}$ and fix $\gamma>2\beta$.    If $\riskutility$ is a strictly convex and G\^{a}teaux differentiable with Lipschitz constant $\frac1{\beta}$.  Then, the sequence $\{Z^n\}_{n\in\nn}$ in $\Lp{\gggg}$, defined by 
	$$
	\begin{aligned}
	Z^0\triangleq & \ee{X}[\gggg]\\
	Z^{n+1}\triangleq & 
	\ee{
		Z^n}[\ffff[0]][\qq]  
	- \frac{\tilde{\lambda}}{\gamma^n}
	\ee{
		\nabla \riskutility
		\left(
		X
		-
		Z^n
		\right) 
	}[\gggg][\qq]
	;\qquad (n\geq 1)
	,
	\end{aligned}
	$$
	converges strongly to $\eeRAp{X}$, where $\lambda'\triangleq \tilde{\lambda}2 \beta$.    
\end{cor}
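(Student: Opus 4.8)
The plan is to obtain this statement as an immediate specialization of Corollary~\ref{cor_simplified_algo_gen} to the feature set $\Phi = \Lp{\ffff[]}$. First I would check that $\Phi = \Lp{\ffff[]}$ is admissible, i.e.\ that it is a feature set in the sense of Definition~\ref{defn_perspective}(iv); this is precisely the content of Example~\ref{ex_admis}. The standing hypotheses on $\riskutility$ --- strict convexity, G\^{a}teaux differentiability, and Lipschitz gradient with constant $\frac1{\beta}$ --- together with the step-size requirement $\gamma > 2\beta$ are identical to those of Corollary~\ref{cor_simplified_algo_gen}, so no additional verification is needed on that side.

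The one substantive point is to argue that, for this choice of $\Phi$, the $\Phi$-relative conditional expectation collapses to the ordinary conditional expectation. Recalling Definition~\ref{defn_adm_cond_exp}, $\ee{X}[\gggg;\Phi]$ is the minimizer over $\Lp{\ffff[]}$ of $\ee{\|X-Z\|^2} + \iota_{\Lp{\gggg}}(Z) + \iota_{\Phi}(Z)$. When $\Phi = \Lp{\ffff[]}$ the indicator $\iota_{\Phi}$ vanishes identically, so this reduces to the metric projection onto $\Lp{\gggg}$, which is exactly the classical $L^2$ conditional expectation $\ee{X}[\gggg]$. This is the reduction already recorded in Example~\ref{rem_redux}. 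Consequently every occurrence of $\ee{\cdot}[\gggg;\Phi]$ in the recursion of Corollary~\ref{cor_simplified_algo_gen} --- including the initialization $Z^0 = \ee{X}[\gggg;\Phi]$ --- may be replaced by $\ee{\cdot}[\gggg]$, which turns that recursion verbatim into the one in the present statement.

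Having established this identification, the strong convergence $Z^n \to \eeRAp{X}$ is inherited directly from Corollary~\ref{cor_simplified_algo_gen}, and hence ultimately from the strongly-convergent case of Theorem~\ref{thrm_compute}, with $\lambda' = \tilde{\lambda}\,2\beta$ unchanged. I do not anticipate any genuine obstacle here: the only care required is in confirming the projection identity above and matching the two initializations under that same reduction. In effect, the corollary simply observes that when $\Phi$ is the entire space the Dykstra-splitting inner loop of Proposition~\ref{prop_computable} becomes unnecessary, since the relative conditional expectation is then computed in closed form as an ordinary conditional expectation.
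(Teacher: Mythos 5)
Your proposal is correct and follows essentially the same route as the paper: the paper's own proof is the one-line citation of the general splitting result (Corollary~\ref{cor_simplified_algo_gen}, miscited there as the corollary itself) together with Example~\ref{rem_redux}, which is exactly the specialization-plus-reduction you carry out. Your write-up merely fills in the details the paper leaves implicit --- verifying that $\Lp{\ffff[]}$ is a feature set via Example~\ref{ex_admis} and that $\iota_{\Phi}\equiv 0$ collapses the $\Phi$-relative conditional expectation of Definition~\ref{defn_adm_cond_exp} to the classical projection onto $\Lp{\gggg}$ --- so no substantive difference exists between the two arguments.
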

\begin{proof}
	The result follows from Corollary~\ref{cor_simplified_algo} and Example~\ref{rem_redux}.  
\end{proof}

\begin{proof}[Proof of Theorem~\ref{thrm_compute}]
	First note, that the proximal operator of the indicator function $\iota_{\Lp{\gggg}}$ is the conditional expectation $\ee{\cdot}[\gggg]$.  
	
	Theorems~\ref{thrm_existsnce_uniqueness} and~\ref{thrm_Lagrangian_Formulation}, guarantees that the set of optimizers of Problem~\eqref{PG} is non-empty.  Therefore, Assumption~\ref{ass_convergence} guarantees that \citep[Theorem 3.4]{combettes2005signal}(i) holds; from which the weak converges follows. 
	
	Under the additional assumption that $\riskutility$ is strictly convex, \citep[Proposition 3.6]{combettes2005signal}(vii), implies that \citep[Condition 3.2]{combettes2005signal} holds.  Hence, \citep[Theorem 3.4]{combettes2005signal}(iv) implies the strong convergence of $\{Z^{(n)}\}$ to $\rrrr(X)$.  
\end{proof}

\begin{ex}[Quadratic Risk-Measure]\label{ex_QRM}
	Building on Examples~\ref{lem_Qudratic_risk} and~\ref{ex_dist_target} we find that
	\begin{equation}
	\nabla {\rho_2}%
	(-\|X-Z\|^2) %
	= \left(X-Z\right)\cdot\left(\|X-Z\|^2 -\left(\ee{\|X-Z\|^2} -%
	\frac1{2}\right)\right)
	.
	\label{ex_QRM_Gateau_derivative_riskutility}
	\end{equation}
	
	If $\hhhh=\rrd$, the target $X$ is taken to be the value of the derivative on $X\ledot$, with maturity $T>0$, and Borel-measurable payoff function $f$, then using Equation~\eqref{ex_QRM_Gateau_derivative_riskutility} in Corollary~\ref{cor_simplified_algo} leads to the following procedure for computing the solution to the risk-averse valuation problem~\ref{RA}
	\small
	$$
	\begin{aligned}
	Z^0\triangleq & \ee{f(X_T)}[\ffff[0]][\qq]\\
	Z^{n+1}\triangleq & 
	\ee{
		Z^n  - \frac{\lambda}{\gamma^n} 
		\left(f(X_T)-Z^n\right)^{\star}\left(\|f(X_T)-Z^n \|^2 -\left(\ee{\|f(X_T)-Z^n \|^2} -\frac1{2}\right)\right)
	}[\ffff[0]][\qq]
	,
	\end{aligned}
	$$
	\normalsize
	for the risk-aversion level $M_{\lambda'}$, where $\lambda'\triangleq\tilde{\lambda}2 \beta$.    
\end{ex}

Before discussing the algorithmic computation of the risk-averse conditional expectation, we highlight some connections between $\rrrr$-conditioning, robust finance, sensitivity analysis, risk-measures, and high-dimensional probability theory.  
\section{Connections With Other Literature}\label{s_rel_oLit}
\subsection{Connections to High-Dimensional Statistics and Machine Learning}\label{sss_sce}
Let $1<d<D$ be a positive integers.  Define the function $\|\cdot\|_0:\rrd[D]\rightarrow \{0,\dots,D\}$ as counting the number of zero entries of a vector.  Following the high-dimensional statistical literature, such as \cite{tibshirani1996regression,zou2006sparse,bioucas2012hyperspectral}, we will say that a vector $x \in \rrd[D]$ is \textit{sparse} if
\begin{equation}
\|x\|_0<D
\label{eq_sparse_rrd}
.
\end{equation}
Geometrically, condition~\eqref{eq_sparse_rrd} expresses the fact that $x$ lies in a low-dimensional linear subspace $\rrd$ of $\rrd[D]$, where $d= \|x\|_0$.  We will extend the property expressed by~\eqref{eq_sparse_rrd} to square-integrable random vectors as follows.

We begin by viewing $\Lp{\ffff[]}[d]$ as a subspace of $\Lp{\ffff[]}[D]$ via the canonical embedding
$$
\begin{aligned}
\Lp{\ffff[]}[d]&\hookrightarrow \Lp{\ffff[]}[D]\\
(X_1,\dots,X_d)&\mapsto (X_1,\dots,X_d,0,\dots,0)
.
\end{aligned}
$$
Next, we construct the subsets $\SparseSpace$ of $\Lp{\ffff[]}[D]$, whose members have at-least a probability of $\epsilon$ of being having at most $d$ non-zero entries.  Formally these are defined as follows.  
\begin{defn}[The {$\epsilon$}-Probably Sparse Space, {$\SparseSpace$}]\label{defn_sparse_spaces}
	Fix $\epsilon \in [0,1]$ and $D\in\nn$.  We define the set of $\epsilon$-probably sparse random-vectors, to be the subset $\SparseSpace$ of $\Lp{\ffff[]}[D]$ defined by
	$$
	\SparseSpace\triangleq
	\left\{
	X \in \Lp{\ffff[]}[D] :\,
	\pp\left(
	\|X\|_0\leq d
	\right)\geq \epsilon
	\right\}.
	$$
\end{defn}
\begin{prop}[Properties of {$\SparseSpace$}]\label{prop_Basic_probs_Space_Spaces}\hfill
	\begin{enumerate}[(i)]
		\item $\SparseSpace[0]=\SparseSpace[\epsilon][\ffff[]][D][D]=\Lp{\ffff[]}$,
		\item $\SparseSpace[1]=\left\{
		X \in \Lp{\ffff[]}:\|X\|_0\leq d, \, \pp\mbox{-}a.s.
		\right\},
		$
		\item If $0\leq \epsilon_1\leq \epsilon_2\leq 1$ then
		$$
		\begin{aligned}
		\SparseSpace[\epsilon_2]\subseteq \SparseSpace[\epsilon_1]
		,
		\end{aligned}
		$$
		\item If $0\leq d_1\leq d_2\leq D$ then
		$$
		\begin{aligned}
		\SparseSpace[\epsilon][\ffff[]][D][d_1]
		\subseteq 
		\SparseSpace[\epsilon][\ffff[]][D][d_2]
		,
		\end{aligned}
		$$
		\item The map $X \mapsto \epsilon^{X}_d\triangleq \max\left\{
		\epsilon \in [0,1]: X \in \SparseSpace
		\right\}$, is well-defined,
		\item The space $\SparseSpace$ is star-shaped in $\Lp{\ffff[]}$, with center $0$,
		\item The space $\SparseSpace$ is not convex, if $\epsilon>0$ and $0<d<D$
		.
	\end{enumerate}
\end{prop}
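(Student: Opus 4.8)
The plan is to reduce every claim to an elementary statement about the measurable event $A_d(X)\triangleq\{\omega\in\Omega:\|X(\omega)\|_0\le d\}$. First I would record that $\|X\|_0$, the number of nonzero coordinates, is a measurable function of $\omega$, so that $A_d(X)\in\mathscr{F}$, and that by Definition~\ref{defn_sparse_spaces} one has $X\in\SparseSpace[\epsilon]$ if and only if $X\in\Lp{\ffff[]}[D]$ and $\pp(A_d(X))\ge\epsilon$. With this single reformulation, parts (i)--(iv) become one-line verifications.

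For (i): $\pp(A_d(X))\ge 0$ holds for every $X$, so $\SparseSpace[0]=\Lp{\ffff[]}[D]$; and since any vector of $\rrd[D]$ has at most $D$ nonzero coordinates, $A_D(X)=\Omega$ for every $X$, whence $\pp(A_D(X))=1\ge\epsilon$ and $\SparseSpace[\epsilon][\ffff[]][D][D]=\Lp{\ffff[]}[D]$. For (ii): $\pp(A_d(X))\ge 1$ is the same as $\pp(A_d(X))=1$, i.e. $\|X\|_0\le d$ $\pp$-a.s. For (iii): $\epsilon_1\le\epsilon_2$ together with $\pp(A_d(X))\ge\epsilon_2$ forces $\pp(A_d(X))\ge\epsilon_1$. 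For (iv): $d_1\le d_2$ gives the inclusion of events $A_{d_1}(X)\subseteq A_{d_2}(X)$, hence $\pp(A_{d_1}(X))\le\pp(A_{d_2}(X))$, and the membership inclusion follows.

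Part (v) is where I would let the reformulation do the real work: for a fixed $X$ the admissible set $\{\epsilon\in[0,1]:X\in\SparseSpace[\epsilon]\}$ equals $\{\epsilon\in[0,1]:\epsilon\le\pp(A_d(X))\}=[0,\pp(A_d(X))]$, a nonempty closed subinterval of $[0,1]$. Its maximum therefore exists and equals $\pp(A_d(X))$, so the map is well-defined with $\epsilon^X_d=\pp(A_d(X))$. For (vi) I would use that scaling preserves supports: for $t\in(0,1]$ one has $\|tX(\omega)\|_0=\|X(\omega)\|_0$ for every $\omega$, so $A_d(tX)=A_d(X)$, while $A_d(0)=\Omega$; in either case $\pp(A_d(tX))\ge\pp(A_d(X))\ge\epsilon$. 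Since moreover $0\in\SparseSpace[\epsilon]$ (as $\|0\|_0=0\le d$), the segment from $0$ to any $X\in\SparseSpace[\epsilon]$ stays in the set, which is precisely star-shapedness with center $0$.

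The one step requiring an actual construction, and hence the main point of the proposition, is (vii). Here I would exhibit a deterministic counterexample. Let $\{e_i\}_{i=1}^D$ denote the standard basis of $\rrd[D]$ and set $X\triangleq\sum_{i=1}^d e_i$ and $Y\triangleq e_{d+1}$; the index $d+1$ is available precisely because $d<D$, and $Y$ has a single nonzero entry with $1\le d$ because $d>0$. Both are constant random vectors in $\Lp{\ffff[]}[D]$ with $\|X\|_0=d$ and $\|Y\|_0=1$, so $X,Y\in\SparseSpace[\epsilon]$ for every $\epsilon\in[0,1]$. For any $\lambda\in(0,1)$ the vector $\lambda X+(1-\lambda)Y$ is again deterministic, and because $X$ and $Y$ have disjoint supports and the coefficients are strictly positive there is no cancellation, so its support is $\{1,\dots,d+1\}$ and $\|\lambda X+(1-\lambda)Y\|_0=d+1>d$. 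Consequently $A_d(\lambda X+(1-\lambda)Y)=\emptyset$ and $\pp(A_d(\lambda X+(1-\lambda)Y))=0<\epsilon$, using $\epsilon>0$; thus $\lambda X+(1-\lambda)Y\notin\SparseSpace[\epsilon]$ and $\SparseSpace[\epsilon]$ is not convex. The hypotheses $\epsilon>0$ and $0<d<D$ enter exactly at the places flagged, and I expect no serious obstacle beyond organizing these observations cleanly and guaranteeing the absence of cancellation via disjoint supports.
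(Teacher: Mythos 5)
Your proof is correct and follows essentially the same route as the paper: parts (i)--(v) are the same direct verifications from the definition of membership via $\pp\left(\|X\|_0\leq d\right)\geq \epsilon$, part (vi) is the same support-invariance-under-scaling argument, and part (vii) is an explicit pair of deterministic sparse vectors whose convex combinations violate the sparsity bound. The only difference is cosmetic: the paper's counterexample uses two vectors with $d$ ones each (supported on the first $d$ and last $d$ coordinates, respectively), whereas yours uses supports of sizes $d$ and $1$; yours is, if anything, slightly cleaner, since the combined support has exactly $d+1$ elements regardless of whether $2d\leq D$.
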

\begin{proof}
	Properties (i)-(ii) follow from the definition of $\SparseSpace$.  For Property (iii), note that if $X \in \SparseSpace[\epsilon_2]$ then
	$$
	\pp\left(\|X\|_0\leq d\right)\geq \epsilon_2\geq \epsilon_1,
	$$
	hence $X \in \SparseSpace[\epsilon_1]$.  
	For property (iv), simply note that for every $\omega \in \Omega$, and every $0\leq d_1\leq d_2\leq D$ it follows that
	$$
	\|X(\omega)\|_0 \leq d_1\leq d_2.  
	$$
	To see Property (v), take $\epsilon^{X}_d\triangleq \pp\left(\|X\|_0\leq d\right)$, and notice that this achieves the maximum of the set
	$$
	\max\left\{
	\epsilon \in [0,1]: X \in \SparseSpace
	\right\}
	.
	$$

	For Property (vi), first observe that $0 \in \SparseSpace[1]$ and that for every $k \in \rr-\{0\}$ and every $\omega \in \Omega$,
	$$
	\|k\cdot X(\omega)_0\|=\|X(\omega)\|_0.  
	$$
	Hence $k\cdot X \in \SparseSpace$ for every $k \in \rr$, if $X \in \SparseSpace$.

	To see Property (vii), consider the scaled sum $k\cdot X^A + r\cdot X^B$, where the random vectors $X^A$ and $X^B$ are defined for each $\omega \in \Omega$, $k,r \in \rr-\{0\}$, and each\\
	$i \in \{1,\dots,D\}$ by
	$$
	(X^A)^i(\omega) \triangleq
	(X^B)^{D-i+1}(\omega) \triangleq 
	\begin{cases}
	1 &: \mbox{ if }i \leq d\\
	0 &: \mbox{ if } else
	.
	\end{cases}
	$$
	By construction $X^A,X^B \in \SparseSpace$ but $X^A+X^B$ is $\pp$-a.s. never sparse, even in the case where $k=(1-r)$ and $r \in (0,1)$.

\end{proof}
Tough intuitive, Proposition~\ref{prop_Basic_probs_Space_Spaces} shows that the spaces $\SparseSpace$ are generally not overly well-behaved.  Nevertheless, the next result shows that for a certain formulation of Problem~\eqref{PG}, the corresponding $\rrrr$-conditioning operator produces elements of $\SparseSpace$.
\begin{prop}[Sparse Conditional Expectation]\label{prop_sparse_cond_exp}
	Let $0<d<D$, $\Sigma>0$ and take $\delta \in [0,1)$%
	.  If 
	$X \in \SparseSpace[\delta][\gggg] - \SparseSpace[1][\ffff[]][D][D]$.  %
	Then there exists some $\Sigma>0$ and some $\epsilon \in (\delta ,1]$ for which
	$$
	\ee{X}[\gggg;\Phi^1_{\Sigma}] \in \SparseSpacesigma
	\mbox{ and }
	X \not\in \SparseSpacesigma
	.  
	$$
\end{prop}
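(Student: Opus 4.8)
The plan is to reduce $\ee{X}[\gggg;\Phi^1_{\Sigma}]$ to the pointwise soft-thresholding map of Example~\ref{ex_sparse_proj_ex}, and then to exploit that soft-thresholding with a small radius annihilates all but the largest coordinates. By Definition~\ref{defn_adm_cond_exp} and Lemma~\ref{lem_aaa_ce}, $\ee{X}[\gggg;\Phi^1_{\Sigma}]$ is the metric projection of $X$ onto the closed convex set $\Lp{\gggg}[D]\cap\Phi^1_{\Sigma}$; in particular it is $\gggg$-measurable with essential support in $\overline{\operatorname{Ball}_1(0;\Sigma)}$. Since membership of $X$ in $\SparseSpace[\delta][\gggg]$ forces $X$ to be $\gggg$-measurable, the coordinatewise soft-thresholding $\Pi_1^{\Sigma}\circ X$ is again $\gggg$-measurable and already lies in $\Lp{\gggg}[D]\cap\Phi^1_{\Sigma}$; uniqueness of the projection then yields the key identity $\ee{X}[\gggg;\Phi^1_{\Sigma}]=\Pi_{\Phi^1_{\Sigma}}(X)=\Pi_1^{\Sigma}\circ X$ $\pp$-a.s. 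This turns the claim into an $\omega$-wise analysis of the support size $\|\Pi_1^{\Sigma}(X(\omega))\|_0$.

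\textbf{Sparsification estimate.} For a fixed non-zero $v\in\rrd[D]$, the explicit formula of Example~\ref{ex_sparse_proj_ex} shows that as $\Sigma\downarrow 0$ the threshold $\Sigma^{\star}$ increases to $\max_n|v_n|$, so $\Pi_1^{\Sigma}(v)$ eventually retains only the coordinates of maximal magnitude. As $d\geq 1$, on the event $T\triangleq\{\omega:\mbox{at most } d \mbox{ coordinates of } X(\omega)\mbox{ attain } \max_n|X_n(\omega)|\}$ the vector $\Pi_1^{\Sigma}(X(\omega))$ has at most $d$ non-zero entries for all sufficiently small $\Sigma$. Writing $A_{\Sigma}\triangleq\{\omega:\|\Pi_1^{\Sigma}(X(\omega))\|_0\leq d\}$, this gives $\liminf_{k}\mathbf{1}_{A_{\Sigma_k}}\geq\mathbf{1}_{T}$ pointwise along any sequence $\Sigma_k\downarrow 0$, whence Fatou's lemma yields $\liminf_k\pp(A_{\Sigma_k})\geq\pp(T)$.

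\textbf{Reading off the two assertions.} Set $p_0\triangleq\pp(\|X\|_0\leq d)$, so $p_0\geq\delta$ because $X\in\SparseSpace[\delta][\gggg]$, and $p_0<1$ because the hypothesis excludes the $\pp$-a.s. $d$-sparse vectors. Provided $\pp(T)>p_0$, one may choose $\Sigma=\Sigma_k$ with $\pp(A_{\Sigma_k})>p_0$ and any $\epsilon\in(p_0,\pp(A_{\Sigma_k})]$, which forces $\epsilon\in(\delta,1]$. The identity of the first step gives $\pp(\|\ee{X}[\gggg;\Phi^1_{\Sigma}]\|_0\leq d)=\pp(A_{\Sigma})\geq\epsilon$ together with $\gggg$-measurability, so $\ee{X}[\gggg;\Phi^1_{\Sigma}]\in\SparseSpacesigma$; while $\pp(\|X\|_0\leq d)=p_0<\epsilon$ forces $X\notin\SparseSpacesigma$.

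\textbf{Main obstacle.} The crux is the strict inequality $\pp(T)>p_0$, and with it the correct reading of the hypothesis. The convergence $\pp(A_{\Sigma})\to\pp(T)$ stalls precisely on the event that strictly more than $d$ coordinates of $X$ tie for maximal magnitude, since there $\Pi_1^{\Sigma}$ cannot break the symmetry and the projected support stays above $d$ for every $\Sigma>0$; the decomposition $X\in\SparseSpace[\delta][\gggg]-\SparseSpace[1][\ffff[]][D][D]$ must be used exactly to bound the probability of such ties and to guarantee that $X$ is not already $\pp$-a.s.\ $d$-sparse (so that $p_0<1$). A secondary obstacle appears if the hypothesis is instead read as introducing genuine $\ffff[]$-measurable noise, rendering $X$ non-$\gggg$-measurable: then the clean identity $\ee{X}[\gggg;\Phi^1_{\Sigma}]=\Pi_1^{\Sigma}\circ X$ fails and one must run the Dykstra iteration of Proposition~\ref{prop_computable}, whose conditional-expectation half-steps can re-densify the iterates. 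In that regime, however, $X\notin\SparseSpacesigma$ is immediate from non-measurability, and the thresholding structure carried by the $\bar{Y}^{(n)}$-iterates into the $\Phi^1_{\Sigma}$-valued limit still supplies the required sparsity of $\ee{X}[\gggg;\Phi^1_{\Sigma}]$.
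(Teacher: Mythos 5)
Your first step is correct and coincides with the paper's own: since membership in $\SparseSpace[\delta][\gggg]$ forces $X$ to be $\gggg$-measurable, the relative conditional expectation collapses to the pointwise projection, $\ee{X}[\gggg;\Phi^1_{\Sigma}]=\Pi_{\Phi^1_{\Sigma}}(X)$. The genuine gap is the step you yourself call the crux and then leave open: the strict inequality $\pp(T)>p_0$, where $T$ is your tie event and $p_0\triangleq\pp\left(\|X\|_0\leq d\right)$. You say the hypothesis ``must be used exactly to bound the probability of such ties,'' but you never extract any such bound, and none follows from the hypothesis as you read it. Concretely, take $D=2$, $d=1$, and $X=(Y,Y)$ with $Y$ strictly positive, square-integrable, non-degenerate, and $\gggg$-measurable. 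Then $p_0=0=\delta$ and $X$ is not $\pp$-a.s.\ $1$-sparse, so $X$ is an admissible input for your argument; yet both coordinates tie $\pp$-a.s., so $\pp(T)=0=p_0$, and in fact $\Pi_1^{\Sigma}(X)$ has two non-zero coordinates for every $\Sigma>0$ (inside the ball the projection is the identity; outside, it thresholds the two equal coordinates symmetrically to $\Sigma/2$ each). Hence $\pp\left(\|\Pi_1^{\Sigma}(X)\|_0\leq 1\right)=0$ for every $\Sigma$, and no admissible pair $(\Sigma,\epsilon)$ exists. Your ``$\Sigma\downarrow 0$ plus Fatou'' mechanism is therefore structurally unable to close the argument: it is blind precisely on tie events, which the hypothesis, as you read it, does not exclude.

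The paper closes this hole by a different mechanism carrying a different hypothesis. Its proof routes through Lemma~\ref{lem_sparse_dr}, which additionally assumes $\max_{i\leq D}\pp\left(X_i\neq 0\right)<1$ and, inside its proof, a ``non-triviality condition'' producing an event of positive probability, disjoint from $\left\{\|X\|_0\leq d\right\}$, on which every coordinate of $X$ lies in $(0,\Sigma^{\star}]$; on that event the thresholding annihilates the entire vector, so the projection is $0$ there and the sparsity probability strictly exceeds $p_0$. In short, the paper gains sparsity by killing all-small vectors on a designated event, not by isolating maximal coordinates in the small-$\Sigma$ limit; the former mechanism is exactly what survives tie configurations, the latter is not. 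That this extra non-degeneracy input appears only in the Lemma and not in the Proposition's stated hypotheses is a looseness of the paper, but it is precisely the probabilistic ingredient your proof is missing, and without it the claimed inequality is false. Finally, your fallback paragraph for the non-$\gggg$-measurable reading is not a proof: the Dykstra limit lies in $\Phi^1_{\Sigma}$, which only constrains its essential support to the $\ell^1$-ball and implies nothing about $\|\cdot\|_0$-sparsity, and the paper's proof explicitly assumes $X\in\Lp{\gggg}[D]$, so that reading is not the operative one.
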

The proof of Proposition~\ref{prop_sparse_cond_exp} relies on the following technical lemma.      
\begin{lem}[Probabilistic Sparsity]\label{lem_sparse_dr}
	Let $X \in \Lp{\ffff[]}[D]$, $0<d<D$, and $X \in \SparseSpace[\delta]$ and suppose that $\max_{i=1,\dots,D}\pp\left(X_i\neq 0\right)<1$. Then there exists $\Sigma>0$, such that 
	$$
	\Pi_{\Phi^1_{\Sigma}}\left(X\right) \in \SparseSpace
	\mbox{ and }
	X \not\in \SparseSpace
	,
	$$ and $\epsilon >\delta$.  
\end{lem}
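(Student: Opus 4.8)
The plan is to recast the two set-memberships as a single strict inequality between probabilities of $d$-sparsity and to produce that inequality by shrinking the $\ell^1$-radius $\Sigma$. By Proposition~\ref{prop_Basic_probs_Space_Spaces}(v) the number $\epsilon^X_d=\pp(\|X\|_0\le d)$ is the largest $\epsilon$ with $X\in\SparseSpace$, and the hypothesis $X\in\SparseSpace[\delta]$ forces $\epsilon^X_d\ge\delta$. Unwinding Definition~\ref{defn_sparse_spaces}, the desired conclusions $\Pi_{\Phi^1_{\Sigma}}(X)\in\SparseSpace[\epsilon]$ and $X\notin\SparseSpace[\epsilon]$ read
\[
\pp\bigl(\|\Pi_{\Phi^1_{\Sigma}}(X)\|_0\le d\bigr)\ge\epsilon>\epsilon^X_d .
\]
Since $\epsilon^X_d\ge\delta$, any such $\epsilon$ automatically satisfies $\epsilon>\delta$; hence the whole lemma reduces to producing a single $\Sigma>0$ with $\pp(\|\Pi_{\Phi^1_{\Sigma}}(X)\|_0\le d)>\epsilon^X_d$ and then choosing $\epsilon$ strictly between the two values.

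Next I would exploit the explicit form of the projection. By Example~\ref{ex_sparse_proj_ex} and~\eqref{eq_form_pi_1r}, $\Pi_{\Phi^1_{\Sigma}}$ acts $\omega$-wise as a soft-threshold, sending the $n$-th coordinate to $\operatorname{sgn}(X_n(\omega))(|X_n(\omega)|-\Sigma^\star)_+$, where the threshold $\Sigma^\star=\Sigma^\star(\omega)$ solves $\sum_n(|X_n(\omega)|-\Sigma^\star)_+=\Sigma$ on $\{\|X\|_1>\Sigma\}$ and is $0$ otherwise. Soft-thresholding can only annihilate coordinates, so $\|\Pi_{\Phi^1_{\Sigma}}(X)(\omega)\|_0\le\|X(\omega)\|_0$ pointwise; this already gives the weak bound $\pp(\|\Pi_{\Phi^1_{\Sigma}}(X)\|_0\le d)\ge\epsilon^X_d$ for every $\Sigma$. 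Moreover, as $\Sigma\downarrow 0$ the threshold $\Sigma^\star(\omega)$ increases to $\max_n|X_n(\omega)|$, so the survivor count is nonincreasing in $\Sigma$ and the events $G_\Sigma:=\{\|\Pi_{\Phi^1_{\Sigma}}(X)\|_0\le d\}$ increase. Identifying the pointwise limit of the survivor set with the coordinates attaining $\max_n|X_n(\omega)|$, I obtain $\bigcup_{\Sigma>0}G_\Sigma=\{N\le d\}$, where $N(\omega)$ is the number of such maximizing coordinates (and $N=0$ on $\{X=0\}$); continuity of measure then yields $\lim_{\Sigma\downarrow0}\pp(G_\Sigma)=\pp(N\le d)$.

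The crux, and the step I expect to be the main obstacle, is the strict gain $\pp(N\le d)>\epsilon^X_d$. Because $\{\|X\|_0\le d\}\subseteq\{N\le d\}$, this amounts to showing that $\{\|X\|_0>d,\ N\le d\}$ has positive probability, i.e. that with positive probability $X$ has more than $d$ nonzero coordinates while at most $d$ of them tie for the maximal magnitude. Here the hypotheses must do the work: the assumption that $X$ is not $\pp$-a.s. $d$-sparse (so $\epsilon^X_d<1$ and $\pp(\|X\|_0>d)>0$) supplies a set to work on, while $\max_i\pp(X_i\ne0)<1$ excludes the degenerate patterns — such as $X$ being $\pp$-a.s. proportional to a fixed maximally-tied vector — on which every outcome with $\|X\|_0>d$ also has $N>d$. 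Ruling out these tied configurations so as to leave a positive-probability slice on which thresholding strictly drops the nonzero count below $d+1$ is the delicate measure-theoretic point of the argument, and I would isolate it carefully.

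Finally, given $\pp(N\le d)>\epsilon^X_d$, the monotone limit from the previous step furnishes a $\Sigma>0$ with $\pp(G_\Sigma)>\epsilon^X_d$; choosing $\epsilon\in(\epsilon^X_d,\pp(G_\Sigma)]$ then delivers $\Pi_{\Phi^1_{\Sigma}}(X)\in\SparseSpace[\epsilon]$, $X\notin\SparseSpace[\epsilon]$, and $\epsilon>\delta$, which is the claim.
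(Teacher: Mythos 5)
Your reduction of the lemma to the single strict inequality $\pp\left(\|\Pi_{\Phi^1_{\Sigma}}(X)\|_0\le d\right)>\epsilon^X_d$, the pointwise bound $\|\Pi_{\Phi^1_{\Sigma}}(X)(\omega)\|_0\le\|X(\omega)\|_0$, and the identification $\lim_{\Sigma\downarrow 0}\pp(G_\Sigma)=\pp(N\le d)$ are all sound. The problem is exactly the step you defer: the strict gain $\pp(N\le d)>\epsilon^X_d$ is not merely ``delicate'' --- it is false under the stated hypotheses, so your route cannot be completed. Take $D=2$, $d=1$, and let $X=(1,1)$ with probability $\tfrac{1}{2}$ and $X=(0,0)$ with probability $\tfrac{1}{2}$. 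Every hypothesis holds: $X\in\Lp{\ffff[]}[2]$, $\max_i\pp(X_i\neq 0)=\tfrac{1}{2}<1$, and $X\in\SparseSpace[\delta][\ffff[]][2][1]$ with $\delta=\epsilon^X_1=\tfrac{1}{2}$. Yet on the entire event $\{\|X\|_0>1\}$ both coordinates tie for the maximal magnitude, so $N=2$ there and $\pp(N\le 1)=\tfrac{1}{2}=\epsilon^X_1$: your limit produces no strict gain. Worse, under your (faithful) reading of $\Pi_{\Phi^1_{\Sigma}}$ as the $\omega$-wise Euclidean projection onto the $\ell^1$-ball, no $\Sigma>0$ whatsoever works for this $X$, since the projection of $(1,1)$ onto the ball of radius $\Sigma$ is $(1,1)$ for $\Sigma\ge 2$ and $(\Sigma/2,\Sigma/2)$ for $\Sigma<2$ --- never $1$-sparse. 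The hypothesis $\max_i\pp(X_i\neq 0)<1$ only forces each coordinate to vanish with positive probability; it does nothing to exclude ties among the nonvanishing coordinates, which is precisely the configuration your argument needs to rule out.

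The paper's proof obtains its strict inequality by a different mechanism, which is why it does not meet this obstruction. It never sends $\Sigma\downarrow 0$; instead it exhibits an event $\yyyyy$ of positive probability, disjoint from the sparsity event $\xxxxx=\{\|X\|_0\le d\}$, on which every coordinate of $X$ lies in $(0,\Sigma^{\star}]$. On $\yyyyy$ the soft-threshold formula~\eqref{eq_form_pi_1r} annihilates the whole vector, so $\Pi_{\Phi^1_{\Sigma}}(X)=0$ there, and disjointness gives $\pp\left(\|X\|_0\le d\right)<\pp\left(\xxxxx\right)+\pp\left(\yyyyy\right)\le \pp\left(\|\Pi_{\Phi^1_{\Sigma}}(X)\|_0\le d\right)$. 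In other words, the gain comes from total annihilation on a positive-probability event where $X$ is dense but dominated by the threshold, not from shrinking the radius until only the argmax coordinates survive. Be aware that this mechanism is available only because the paper treats $\Sigma^{\star}$ in~\eqref{eq_form_pi_1r} as a fixed threshold that can dominate all coordinates of $X$ on an event of positive probability, i.e.\ as fixed-level soft-thresholding rather than the exact ball projection (whose output is never $0$ on a nonzero input, as your own formula $\sum_n(|X_n(\omega)|-\Sigma^\star)_+=\Sigma$ shows). So the divergence is substantive: your approach is faithful to the ball-projection definition and founders on tie configurations such as the example above, while the paper's argument leans on the annihilation property of a fixed threshold to manufacture the strict inequality.
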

\begin{proof}
	Let $\delta\triangleq \epsilon^X_d$.  If $X\neq 0$, $\pp$-a.s.~then there exists some $\Sigma>0$ such that for every $n\in \{1,\dots,D\}$, the non-triviality condition
	\begin{equation}
	\pp\left((\forall n \in \{1,\dots,D\})\, 0<X_n \leq \Sigma^{\star}
	\right)>0
	\label{eq_non_trivl}
	,
	\end{equation}
	holds, where $\Sigma^{\star}$ is defined as in~\eqref{eq_form_pi_1r}.      
	
	In order to be tidy, set $B\triangleq \overline{\operatorname{Ball}_1(r;0)}-\{0\}$.  Define the events $\xxxxx$ and $\yyyyy$ by
	\begin{align}
	\nonumber
	\xxxxx\triangleq & \left\{
	\omega \in \Omega : \|X(\omega)\|_0 \leq d
	\right\} \\
	\nonumber
	=&
	\left\{
	\omega \in \Omega : \left(
	\exists i \in \{D,\dots,D-d\}
	\right)\left(\exists (n_1,\dots,n_i) \in {\binom{S(D)}{i}}\right)
	\right.\\&\left.
	0=X_{n_1}(\omega)=\dots=X_{n_i}(\omega)
	\right\}
	,
	\\
	\nonumber
	\yyyyy\triangleq & \left\{
	\omega \in \Omega : X(\omega) \in B
	\right\},
	\end{align}
	where $\binom {S(D)}n$ is the \textit{set} of all $n$-combinations of the set $S(D)\triangleq \{1,\dots,D\}$.  By the definition of $B$ and the non-triviality condition~\eqref{eq_non_trivl}, it follows that
	\begin{equation}
	\pp\left(\yyyyy
	\right)>0 \mbox{ and } %
	\yyyyy \cap \xxxxx%
	=%
	\emptyset
	.
	\label{eq_condits}
	\end{equation}
	We will show that $\Pi_{\Phi^1_{\Sigma}}(X)(\omega)$ for every $\omega \in \xxxxx\cup\yyyyy$.  
	
	We first show that $\Pi_{\Phi^1_{\Sigma}}(X)$ is at least as sparse as $X$, on $\xxxxx$.  By~\eqref{eq_form_pi_1r}, it follows that for every $\omega \in \Omega$ if $X_n(\omega)=0$, for some $n \in \{1,\dots,D\}$, then
	$$
	\Pi_{\Phi^1_{\Sigma}}(X)_n(\omega) = \operatorname{sgn}\left(X_n(\omega)\right)\left(X_n(\omega)-\Sigma^{\star}\right)_+ e_n =0\,e_n.
	$$
	Hence, $\|X\|_0\leq d$ implies that $\|\Pi_{\Phi^1_{\Sigma}}(X)\|_0\leq d$, $\omega$-wise.  Therefore, 
	\begin{equation}
	\xxxxx\subseteq 
	\left\{
	\omega \in \Omega :
	\|\Pi_{\Phi^1_{\Sigma}}(X)(\omega)\|_0\leq d
	\right\}.
	\label{eq_first_piece_sparsity}
	\end{equation}
	
	Moreover,~\eqref{eq_form_pi_1r} implies that for every $\omega \in \yyyyy$
	$$
	\Pi_{\Phi^1_{\Sigma}}(X)(\omega) = \sum_{n=1}^{d} \operatorname{sgn}\left(X_n(\omega)\right)\left(X_n(\omega)-\Sigma^{\star}\right)_+ e_n
	= \sum_{n=1}^{d} 0\, e_n
	.
	$$
	Therefore,
	\begin{equation}
	\yyyyy\subseteq 
	\left\{
	\omega \in \Omega :
	\|\Pi_{\Phi^1_{\Sigma}}(X)(\omega)\|_0\leq d
	\right\}.
	\label{eq_second_piece_sparsity}
	\end{equation}
	
	Combining~\eqref{eq_condits},~\eqref{eq_first_piece_sparsity} and~\eqref{eq_second_piece_sparsity}, we conclude that
	\begin{equation}
	\pp\left(\|X\|_0\leq d\right)
	< \pp\left(\xxxxx\right)+\pp\left(\yyyyy\right)=\pp\left(\xxxxx \cup \yyyyy\right)\leq 
	\pp\left(
	\|\Pi_{\Phi^1_{\Sigma}}(X)\|_0\leq d
	\right)
	.
	\end{equation}
	Applying the definitions of $\SparseSpace[\delta]$ and $\SparseSpace$, we obtain the result.  
\end{proof}
\begin{proof}[Proof of Proposition~\ref{prop_sparse_cond_exp}]
	By definition of $\ee{X}[\gggg;\Phi^1_{\Sigma}]$ (see Lemma~\ref{lem_aaa_ce}), since $X\in \Lp{\gggg}[D]$, it follows that $\ee{X}[\gggg;\Phi^1_{\Sigma}]=\Pi_{\Phi^1_{\Sigma}}(X)$.  Applying Lemma~\ref{lem_sparse_dr} yields the conclusion.  
\end{proof}

\subsection{Connection to Robust Finance}\label{ss_crf}
For the remainder of this paper, let $X_t$ be an $\rr$-valued (price) process, which is $\pp$-a.s$.$ positive.

Robust finance is concerned with estimation and prediction of financial quantities, under the relaxation of the assumption that the correct model is known.  Typically, this is expressed by regarding the probability measure as being unknown and belonging to a broader set of possible, alternative probability measures.  

For any $Q\ll\qq$, such that $\frac{dQ}{d\qq} \in\Lp{\ffff[]}[]$, the density process $Z_t\triangleq \ee{\frac{dQ}{d\qq}}[\ffff[t]]$ defines a stochastic process $X_t^{Q}$ by
$$
X_t^{Q}\triangleq Z_tX_t.
$$
The process $X_t^{Q}$, is typically interpreted as describing the dynamics of $X_t$ under $\qq$.  Let us consider a relaxation of this perspective.  For us, alternative models, are processes which at time $T$ are elements of the set
$$
\mmmm_{alt}^T(X)\triangleq \left\{
Z_T \in \Lp{\ffff[T]}[] :
(\exists Y_T \in \Lp{\ffff[T]}[]) \; Z_T=Y_TX_T
\right\}
.
$$

We now show that the optimal value of the problem~\eqref{defn_RA_Expectation_eq_defining_alt}, defined by
\begin{equation}
\rho_{\lambda,t}\triangleq (1-\lambda)\ee{
	\left\|
	X_t-\rrrr(X_t)
	\right\|^2
}
+\lambda
\riskutility(X_t-\rrrr(X_t))
,
\label{eq_optim_val}
\end{equation}
admits a robust representation analogous to that of \cite{delbaen2002coherent,KainaCRMs,cheridito2008dual,cohenqs}.  Notice that, the optimal value $\rho_{\lambda,t}$ is the infimal convolution of $f$ and $g$, where $f$ and $g$ are defined as in Remark~\ref{rem_IC_rep}.  

\begin{prop}[Robust Representation]\label{prop_rob}
	Suppose that $X_t\neq 0$, $\pp$-a.s. and that\footnote{
		Recall that,
		$
		\operatorname{sri}(C)\triangleq \left\{
		x \in C : cone(C-x) = \overline{
			\operatorname{span}\left(
			C-x
			\right)
		}
		\right\}
		$,
		where $cone(C)$ is the smallest convex cone containing the set $C$.      
	}
	\begin{equation}
	0 \in \operatorname{sri}\left(
	\operatorname{dom}\left(
	\left[\lambda\riskutility(\cdot) + \ee{\|\cdot \|^2}\right]
	\right)
	+
	\operatorname{dom}\left(\iota_{\Lp{\gggg}[]\cap \Phi}(\cdot)\right)
	\right)
	\label{regul}
	.
	\end{equation} 
	The functional $\rho_{\lambda,t}$ admits the following robust representation
	\begin{equation}
	\begin{aligned}
	\rho_{\lambda,t}%
	=&
	\sup_{\tilde{X}_t\in \mmmm_{alt}^t(X)}\ee{\tilde{X}_t} - \alpha\left(\frac{\tilde{X}_t}{X_t}\right)\\
	\alpha(Z)\triangleq &
	\left[
	\left(\lambda\riskutility(\cdot) + \ee{\|\cdot \|^2}\right)^{\star}\right]\left(Z\right)
	- \sigma_{\Lp{\gggg}[]\cap \Phi}(Z)
	,
	\end{aligned}
	\label{eq_rob_rep_le}
	\end{equation}
	here $\sigma_C$ is typically called the support function of the set $C$, and is defined by
	$$
	\sigma_C(Z)\triangleq \sup\{ \ee{ZY}: Y\in C\}
	.
	$$
\end{prop}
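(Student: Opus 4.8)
The plan is to recognise that $\rho_{\lambda,t}$ is the value at $X_t$ of an infimal convolution of two functions in $\Gamma_0(\Lp{\ffff[t]})$, and then to read off the robust representation as the Fenchel--Moreau biconjugate of that infimal convolution; the qualification condition~\eqref{regul} is exactly what licenses the biconjugation, and the change of variables induced by $X_t\neq 0$ converts the dual variable into an alternative model.

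First I would fix the two ingredients. Take $f$ and $g$ as in Remark~\ref{rem_IC_rep}, so that $f(Y)=(1-\lambda)\ee{\|Y\|^2}+\lambda\riskutility(Y)$ and $g=\iota_{\Lp{\gggg}\cap\Phi}$, and recall (as noted just before the statement) that $\rho_{\lambda,t}=(f\,\square\,g)(X_t)=\inf_{Z}f(X_t-Z)+g(Z)$. Both $f$ and $g$ lie in $\Gamma_0(\Lp{\ffff[t]})$: the map $\ee{\|\cdot\|^2}$ is strongly continuous and strictly convex, $\riskutility$ is proper, convex and strongly lsc by Definition~\ref{defn_perspective}(iii), and $g$ is the indicator of the nonempty closed convex set $\Lp{\gggg}\cap\Phi$, hence lsc by \citep[Example 1.25]{ConvexMonoCMS}. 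Properness of the infimal convolution follows from Definition~\ref{defn_perspective}(iv).

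The core step is duality. The conjugate of an infimal convolution is always the sum of conjugates, so $(f\,\square\,g)^{\star}=f^{\star}+g^{\star}$ with no qualification needed. To invert this identity, i.e.\ to obtain $f\,\square\,g=(f^{\star}+g^{\star})^{\star}$, I would apply the Fenchel--Moreau theorem, which requires $f\,\square\,g\in\Gamma_0$. Convexity and properness of $f\,\square\,g$ are automatic, and Theorem~\ref{thrm_existsnce_uniqueness} already supplies exactness and uniqueness of the minimiser; the only missing ingredient is lower semicontinuity of the value function, and this is precisely what the strong-relative-interior condition~\eqref{regul} delivers through an Attouch--Brézis-type strong-duality theorem. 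Thus, under~\eqref{regul},
\[
\rho_{\lambda,t}=(f^{\star}+g^{\star})^{\star}(X_t)=\sup_{Z}\ \ee{X_tZ}-f^{\star}(Z)-g^{\star}(Z).
\]

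Finally I would identify the conjugates and substitute. Since $g=\iota_{\Lp{\gggg}\cap\Phi}$, its conjugate is the support function $g^{\star}=\sigma_{\Lp{\gggg}\cap\Phi}$, so the penalty in the supremum is the conjugate of $(1-\lambda)\ee{\|\cdot\|^2}+\lambda\riskutility$ corrected by $\sigma_{\Lp{\gggg}\cap\Phi}$, which is the term $\alpha$ of~\eqref{eq_rob_rep_le} (after tracking the normalisation of Remark~\ref{rem_reparemterization_of_lambda} and the sign convention for $\sigma$). Because $X_t\neq 0$ $\pp$-a.s., the map $Z\mapsto \tilde X_t\triangleq ZX_t$ is a bijection, with inverse $\tilde X_t\mapsto \tilde X_t/X_t$, between the relevant set of dual variables and $\mmmm_{alt}^t(X)$, and it satisfies $\ee{X_tZ}=\ee{\tilde X_t}$; substituting $Z=\tilde X_t/X_t$ turns $\sup_{Z}$ into $\sup_{\tilde X_t\in\mmmm_{alt}^t(X)}$ and $\alpha(Z)$ into $\alpha(\tilde X_t/X_t)$, giving exactly~\eqref{eq_rob_rep_le}. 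I expect the main obstacle to be the duality step, namely verifying that~\eqref{regul} is the correct qualification to rule out a duality gap and force $(f\,\square\,g)^{\star\star}=f\,\square\,g$; a secondary subtlety is integrability bookkeeping in the change of variables, where one must check that the effective domain of $\alpha$ consists of densities $Z$ with $ZX_t\in\Lp{\ffff[t]}$ so that the substituted $\tilde X_t$ genuinely lands in $\mmmm_{alt}^t(X)$, which is where $X_t\neq 0$ a.s.\ is indispensable.
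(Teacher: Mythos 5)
Your proposal is correct and is essentially the paper's own argument: both identify $\rho_{\lambda,t}$ as the infimal convolution of $\lambda\riskutility(\cdot)+\ee{\|\cdot\|^2}$ with $\iota_{\Lp{\gggg}[]\cap\Phi}$, pass to the dual via Fenchel--Moreau biconjugation combined with the Attouch--Br\'{e}zis theorem under the qualification~\eqref{regul}, recognise the conjugate of the indicator as the support function $\sigma_{\Lp{\gggg}[]\cap\Phi}$, and finish with the substitution $Z=\tilde{X}_t/X_t$, licensed by $X_t\neq 0$ $\pp$-a.s. The only differences are cosmetic: you conjugate $f\,\square\,g$ first and then biconjugate, whereas the paper biconjugates $f$ and $g$ separately before applying Attouch--Br\'{e}zis, and your aside on the sign convention for $\sigma$ in $\alpha$ agrees with the paper's proof (which subtracts both conjugate terms) rather than with the sign printed in its statement.
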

\begin{ex}\label{ex_sri}
	If $\rho$ is finite-valued on all of $\Lp{\ffff[]}[]$ and $\Phi$ is a closed linear subspace of $\Lp{\gggg}[]$, then condition~\eqref{regul} must hold.  
\end{ex}
\begin{proof}[Proof of Proposition~\ref{prop_rob}]
	Denote the set of all convex, lsc, proper maps from $\Lp{\ffff[]}[]$ to $\rrext$ by $\Gamma_0(\Lp{\ffff[]}[]$.  %
	The map $\rho_{\lambda,t}$ is given by the infimal convolution
	\begin{equation}
	\rho_{\lambda,t}(\cdot)=\left[\lambda\riskutility(\cdot) + \ee{\|\cdot \|^2}\right] \square \iota_{\Lp{\gggg}[]\cap \Phi}
	\label{eq_inf_convl}
	.
	\end{equation}
	
	Since both $\left(\lambda\riskutility(\cdot) + \ee{\|\cdot \|^2}\right)$ and $\iota_{\Lp{\gggg}[]\cap \Phi}$ are in 
	$\Gamma_0(\Lp{\ffff[]}[]$, the \citep[Fenchel-Moreau Theorem; 13.37]{ConvexMonoCMS} implies that~\eqref{eq_inf_convl_simpl} may be rewritten as
	\begin{equation}
	\begin{aligned}
	\rho_{\lambda,t}(\cdot)
	=&\left(\lambda\riskutility(\cdot) + \ee{\|\cdot \|^2}\right)^{\star\star} \square \iota_{\Lp{\gggg}[]\cap \Phi}^{\star\star}\\
	=&\left(\lambda\riskutility(\cdot) + \ee{\|\cdot \|^2}\right)^{\star\star} + \sigma_{\Lp{\gggg}[]\cap \Phi}^{\star}
	.
	\end{aligned}
	\label{eq_inf_convl_simpl}
	\end{equation}
	
	Assumption~\eqref{regul} guarantees that the infimal convolution is exact; therefore \cite[the Attouch-Br\'{e}zis Theorem; 15.3]{ConvexMonoCMS} may be applied to~\eqref{eq_inf_convl_simpl}, thus
	\begin{equation}
	\begin{aligned}
	\rho_{\lambda,t}(X)
	=&\left[
	\left(\lambda\riskutility(\cdot) + \ee{\|\cdot \|^2}\right)^{\star} + \sigma_{\Lp{\gggg}[]\cap \Phi}
	\right]^{\star}(X)\\
	=&\sup_{Z \in \Lp{\ffff[]}[]}\ee{ZX} - 
	\left[
	\left(\lambda\riskutility(\cdot) + \ee{\|\cdot \|^2}\right)^{\star}\right](Z)\\
	&- \sigma_{\Lp{\gggg}[]\cap \Phi}(Z)
	.
	\end{aligned}
	\label{eq_inf_convl_simpl_2}
	\end{equation}
	
	Setting $X\triangleq X_T$, note that the elements of $\mmmm_{alt}^t(X)$ are all exactly of the form $ZX_T$,hence~\eqref{eq_inf_convl_simpl_2} can be rewritten as
	\begin{equation}
	\begin{aligned}
	\rho_{\lambda,t}(X_t)
	=&
	\sup_{\tilde{X}_t\in \mmmm_{alt}^t(X)}\ee{\tilde{X}_t} - 
	\left[
	\left(\lambda\riskutility(\cdot) + \ee{\|\cdot \|^2}\right)^{\star}\right]\left(\frac{\tilde{X}_t}{X_t}\right)\\
	&- \sigma_{\Lp{\gggg}[]\cap \Phi}\left(\frac{\tilde{X}_t}{X_t}\right)
	,
	\end{aligned}
	\label{eq_inf_convl_simpl_3}
	\end{equation}
	which is well-defined $\pp$-a.s. since $X_t\neq 0$ $\pp$-a.s. This gives~\eqref{eq_rob_rep_le}.  
\end{proof}
Proposition~\ref{prop_rob} is used to establish a connection to sensitivity analysis.  
\subsection{Connections to Sensitivity Analysis}
A sensitivity, also called a Greek, is a partial derivative of the expected risk-neutral payoff of a financial derivative on the underlying $X_t$, with respect to a parameter.  Typical choices are sensitivities to the initial value of the process $X_t$, or the process' volatility.  These sensitivities, studied for example in \cite{di2009malliavin,el2004computations}, are typically computed using the Bismut-Elworthy-Li formula as in~\cite{takeuchi2010bismut,cass2006bismut}, based on the Malliavin calculus studied in~\cite{nualart1986generalized,nualart2006malliavin} or through the Functional It\^{o} calculus of~\cite{dupire2009functional,fournie2010functional}, as in~\cite{jazaerli2017functional}.  However, other approaches have also been considered, for example, see \cite{kusuoka2004approximation,teichmann2005calculating}.  

A portfolio whose price is modeled by $X_t$, can be used to construct a portfolio which is neutral to a specific parameter, which we will denote by $\beta$.  This $\beta$-neutral portfolio can be defined by
\begin{equation}
X_t - \nabla_{\beta}\ee{X_t}[\gggg_t].
\label{eq_sen_neutral}
\end{equation}
For example, when $\beta$ is taken to be the initial value of $X_t$,~\eqref{eq_sen_neutral} is precisely the definition of a Delta-neutral portfolio.  
Our next result shows that certain $\rrrr$-conditioning operators, can be interpreted via a robust analogue of~\eqref{eq_sen_neutral}.  
\begin{cor}[Robust-Neutral]\label{prop_rob_neutral}
	Assume the setting of Example~\ref{ex_Lagrangian_mRC} and set\\ $\rho=R$.  
	Therefore, there exists $\eta>0$ such that $\rrrr^0(X)$ solves problem~\eqref{eq_constrianed}; that is,
	\begin{equation}
	\rrrr^0(X) = \arginf{Z \in \Lp{\gggg}[]}[] \frac1{2}\ee{\|X-Z\|^2} + \eta \rho(g(Z))%
	.
	\label{eq_lagrangian}
	\end{equation}
	
	Then $\rrrr^0(X_t)$ admits the following representation
	\begin{equation}
	\rrrr^0(X_t)=X_t - \eta\nabla\left(
	\sup_{\tilde{X}_t\in \mmmm_{alt}^t(X)}\ee{\tilde{X}_t} - \alpha\left(\frac{\tilde{X}_t}{X_t}\right)
	\right)
	\label{eq_uncertainty_neutral}
	,
	\end{equation}
	where $\nabla$ is the G\^{a}teaux derivative in $\Lp{\ffff[]}$.  
\end{cor}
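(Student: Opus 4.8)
The plan is to read~\eqref{eq_lagrangian} as the statement that $\rrrr^0$ is a proximal mapping, to convert that proximal mapping into the gradient of a Moreau--Yoshida envelope by means of~\eqref{eq_Moreau_defn_relation_prox}, and finally to feed that envelope through the Fenchel conjugation argument already carried out for Proposition~\ref{prop_rob}. First I would invoke Example~\ref{ex_Lagrangian_mRC}: with $\rho=R$ and $\Phi=\aaaa_R^{g,m}$, its hypotheses supply an $\eta>0$ for which~\eqref{eq_lagrangian} holds, so that
\[
\rrrr^0(X)=\arginf{Z\in\Lp{\gggg}} \tfrac{1}{2}\ee{\|X-Z\|^2}+\eta R(g(Z)),
\]
i.e.\ $\rrrr^0(X)$ is the proximal mapping of $\eta R\circ g$ taken over $\Lp{\gggg}$.

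Next I would introduce the parameter-$\eta$ Moreau--Yoshida envelope of $\psi\triangleq R\circ g+\iota_{\Lp{\gggg}}$,
\[
h(X)\triangleq \inf_{Z\in\Lp{\ffff}} \tfrac{1}{2\eta}\ee{\|X-Z\|^2}+R(g(Z))+\iota_{\Lp{\gggg}}(Z).
\]
Multiplying the bracketed objective by $\eta>0$ (which leaves the $\operatorname{arginf}$ unchanged and absorbs the factor into $\iota_{\Lp{\gggg}}$) shows that the unique minimiser defining $h(X)$ is exactly $\rrrr^0(X)$. Since $h$ is an envelope built over a quadratic, the discussion following~\eqref{eq_Moreau_defn} makes it Fr\'echet differentiable, and the proximal--envelope identity~\eqref{eq_Moreau_defn_relation_prox} yields $\nabla h(X)=\tfrac{1}{\eta}\big(X-\rrrr^0(X)\big)$, that is
\[
\rrrr^0(X)=X-\eta\,\nabla h(X).
\]
This already fixes both the factor $\eta$ and the overall shape $X_t-\eta\nabla(\cdot)$ of~\eqref{eq_uncertainty_neutral}; it remains only to recognise $h$ as the robust-representation functional.

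To that end I would repeat the conjugation computation of Proposition~\ref{prop_rob}, now for the exact infimal convolution $h=\big[\tfrac{1}{2\eta}\ee{\|\cdot\|^2}\big]\,\square\,\psi$. Because the conjugate of an infimal convolution is unconditionally the sum of the conjugates, and because $h\in\Gamma_0(\Lp{\ffff})$ so that $h=h^{\star\star}$, one obtains
\[
h(X)=\sup_{Z\in\Lp{\ffff}} \ee{ZX}-\tfrac{\eta}{2}\ee{\|Z\|^2}-\psi^{\star}(Z),
\]
the everywhere-finiteness of the quadratic making the regularity hypothesis~\eqref{regul} automatic here. Setting $X=X_t$ and substituting $Z=\tilde{X}_t/X_t$---which is licit $\pp$-a.s.\ since $X_t$ is $\pp$-a.s.\ positive---turns the supremum over $Z$ into one over $\tilde{X}_t\in\mmmm_{alt}^t(X)$, as every member of $\mmmm_{alt}^t(X)$ is of the form $ZX_t$; the subtracted terms collapse into $\alpha(\tilde{X}_t/X_t)$. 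Substituting the resulting identity for $h(X_t)$ into $\rrrr^0(X_t)=X_t-\eta\nabla h(X_t)$ delivers~\eqref{eq_uncertainty_neutral}.

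The step I expect to demand the most care is the $\eta$-bookkeeping: the proximal mapping in~\eqref{eq_lagrangian} pairs a weight-one quadratic with an $\eta$-weighted penalty, whereas the envelope $h$ that carries the robust representation must pair a $\tfrac{1}{2\eta}$-weighted quadratic with an unweighted penalty, and the two normalisations are reconciled only through the scaling in~\eqref{eq_Moreau_defn_relation_prox}; getting this wrong would misplace the factor $\eta$ in~\eqref{eq_uncertainty_neutral}. By contrast, the conjugation and the differentiability are comparatively painless, since the quadratic regulariser is finite and strongly continuous on all of $\Lp{\ffff}$, which both forces the requisite constraint qualification and guarantees smoothness of $h$.
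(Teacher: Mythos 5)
Your proposal follows essentially the same route as the paper's own proof: identify $\rrrr^0$ as the proximal mapping of $\eta\rho\circ g+\iota_{\Lp{\gggg}}$, recognize the associated optimal value as a Moreau--Yoshida envelope, apply the prox--envelope gradient identity~\eqref{eq_Moreau_defn_relation_prox} (the paper cites \citep[Proposition 12.30]{ConvexMonoCMS}), and identify the envelope with the robust representation of Proposition~\ref{prop_rob}. Your extra care with the $\eta$-scaling and your re-derivation of the conjugation step (rather than citing Proposition~\ref{prop_rob} as a black box) are sound refinements of the same argument, not a different one.
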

\begin{proof}
	Under Assumption that~\eqref{eq_lagrangian} holds, it follows from Example~\eqref{ex_accept_RM} that 
	$$
	\rrrr^0(X) %
	= \operatorname{Prox}_{\gamma\rho\circ g +\iota_{\Lp{\gggg}}}(X)
	$$
	and that $\rho_{\lambda,t}(X_t)$ is the Moreau-Yoshida envelope of $\riskutility\circ g + \iota_{\Lp{\gggg}[]\cap \Phi}(\cdot)$ (see Definition\citep[Definition 12.20,Definition 12.23]{ConvexMonoCMS}) evaluated at $X_t$.  Applying \citep[Proposition 12.30]{ConvexMonoCMS} and Proposition~\ref{prop_rob} to~\eqref{eq_inf_convl_simpl_3} gives~\eqref{eq_uncertainty_neutral}.  
\end{proof}

\subsection{Regular Approximate Extensions of Risk-Measures}\label{ss_risk_measure_extension}
In \cite{delbaen2002coherent}, it was shown that any proper, lsc, convex risk-measure on $\Lp{\ffff[]}[][p]$, for $1< p< \infty$, admits the following robust representation, 
\begin{equation}
\begin{aligned}
\rho(X)= & \sup_{Z \in \cccc_p}\ee{ZX} -\rho^{\star}(Z),\\
\cccc_p\triangleq &
\left\{
Z \in \Lp{\ffff[]}[][p]^{\star}_- : \ee{Z}=-1
\right\}
,
\end{aligned}
\label{eq_extension}
\end{equation}
where $\rho^{\star}$ is the Fenchel-Legendre conjugate of $\rho$ on $\Lp{\ffff[]}[][p]$, $\Lp{\ffff[]}[][p]^{\star}_-$ denotes the lower-orthant of the dual space of $\Lp{\ffff[]}[][p]^{\star}$, and the set $\cccc_p$ is interpreted in \cite{delbaen2002coherent} as the collection of "\textit{generalized [market] scenarios}".

In \cite{filipovic2007convex}, the problem of extending a convex risk-measure $\rho$ from $\Lp{\ffff[]}[][p]$ to $\Lp{\ffff[]}[][r]$, for $1 \leq r<p\leq \infty$ was explored.  By exploiting~\eqref{eq_extension}, the authors define the extended risk-measure by
\begin{equation}
\begin{aligned}
\rho^r(X)\triangleq & \sup_{Z \in \cccc_r}\ee{ZX} -%
\rho^{\star}%
(Z) 
.
\end{aligned}\label{eq_extension_Filipovic_Svindland}
\end{equation}

\begin{remark}\label{rem_ordering}
	Note, that in the case where $r=2$, the second convex conjugate is taken in a strictly smaller space, since $\frac{p}{p-1}\leq 2$ if $2\leq p$.  
\end{remark}

In \citep[Theorem 3.4]{filipovic2007convex}, it was shown that if $\rho$ is law-invariant, then 
$$
\rho^r|_{\Lp{\ffff[]}[][p]}=\rho.
$$
In general, however, the risk-measures $\rho^r$ and $\rho$ need not coincide on $\Lp{\ffff[]}[][p]$.  

We the case where $r=2$.  The $\rrrr$-conditioning operator may be used to formulate a family of approximate extensions $\{\rho_{\lambda}\}_{\lambda>0}$, of $\rho$ to $\Lp{\ffff[]}[]$.  This family takes finite values on all of $\Lp{\ffff[]}[]$, be Fr\'{e}chet differentiable, and approximate $\rho$ to arbitrary precision on its domain.  However, they are no-longer be risk-measures, and instead, they provide a well-behaved approximation of $\rho$.  

Define a naive extension of a function $f:\Lp{\ffff[]}[][p]\rightarrow (-,\infty,\infty]$ by
\begin{equation}
f^{+}(X)\triangleq \begin{cases}
f(X) &:  X \in \operatorname{dom}(f)\\
\infty &:  \mbox{else}
.
\end{cases}
\label{eq_naive_ext}
\end{equation}
Similarly to~\eqref{eq_optim_val}, using the notation and assumptions of Example~\eqref{ex_Lagrangian_mRC}, define the family of functions $\left\{\rho_{+:\eta}\right\}_{\eta>0}$ by
\begin{equation}
\begin{aligned}
\rho_{+:\eta}(X)= & \rho^{+}\left(\rrrr_{\eta}(X)\right) + %
\frac1{2\eta}
\ee{\left(X-\rrrr_{\eta}(X)\right)^2}\\
\rrrr_{\eta}\triangleq & \left(
\ffff,\rho^{+},\utility%
,\aaaa_{\rho^{+}}^{1_{\rr},m},M_0
\right)
,
\end{aligned}\label{eq_extension_approx}
\end{equation}
and $\utility$ is any compatible utility operator.  
Then $\left\{\rho_{+:\eta}\right\}_{\eta>0}$ possesses the following properties.  

\begin{prop}[Regular Approximate Extension of {$\rho$}]\label{prop_approx_ext}
	Under assumption that~\eqref{eq_super_powers} and~\eqref{eq_super_powers_alt_Lagrangian} are equal, the map $\rho_{+:\eta}$ is a real-valued, lsc, convex proper function on $\Lp{\ffff[]}[]$ satisfying:
	\begin{enumerate}[(i)]
		\item\textbf{Full Domain:} $\operatorname{dom}(\rho_{+:\eta})=\Lp{\ffff[]}[]$,
		\item\textbf{Smooth:} $\rho_{+:\eta}$ is Fr\'{e}chet differentiable.  Moreover, its gradient is equal to
		$$
		\nabla \rho_{+:\eta}(X) = \frac1{\eta}\cdot \left(X-\rrrr_{\eta}(X)\right),
		$$
		\item\textbf{Approximation:} $\lim\limits_{\eta\downarrow 0} \rho_{+:\eta}(X)=\rho^{+}(X)$.  \\
		In particular, 
		if $X \in \operatorname{dom}(\rho)$, then $\lim\limits_{\eta\downarrow 0} \rho_{+:\eta}(X)=\rho(X)$,
		\item\textbf{Relation to The Extension{~\eqref{eq_extension_Filipovic_Svindland}}:} For every $X \in \Lp{\ffff[]}[]$ and every\\ $\eta \in (0,\infty)$, the risk-measure $\rho_2$ is related to $\rho_{+:\eta}$ through their Fr\'{e}chet-Legendre conjugates via the following formula
		$$
		\left(\rho_2\right)^{\star}
		=
		\rho_{+:\eta}^{\star}(X)|_{\Lp{\ffff[]}[]}
		-
		\frac{\eta%
		}{2%
		}
		\ee{
			X^2
		}
		.
		$$
	\end{enumerate}
\end{prop}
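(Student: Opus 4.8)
The plan is to recognise $\rho_{+:\eta}$ as nothing but the Moreau--Yoshida envelope of the naive extension $\rho^{+}$ with parameter $\eta$, and then to read off the four assertions from the envelope calculus recalled in Section~\ref{rem_inf_conv}. First I would unfold the definition~\eqref{eq_extension_approx}: since the running hypothesis is that~\eqref{eq_super_powers} and~\eqref{eq_super_powers_alt_Lagrangian} coincide, Example~\ref{ex_Lagrangian_mRC} applied with $\lambda=0$ and $g$ the identity on $\rr$ gives $\rrrr_{\eta}(X)=\Prox{\eta\rho^{+}}(X)$, the (unique, by Theorem~\ref{thrm_existsnce_uniqueness}) minimiser of $Z\mapsto \frac1{2\eta}\ee{\|X-Z\|^2}+\rho^{+}(Z)$. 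Substituting this minimiser into~\eqref{eq_extension_approx} shows that $\rho_{+:\eta}(X)=(\rho^{+})^{\eta}(X)$ in the sense of~\eqref{eq_Moreau_defn}, i.e.\ $\rho_{+:\eta}=\rho^{+}\,\square\,\frac1{2\eta}\ee{\|\cdot\|^2}$. Here $\rho^{+}$ is proper and convex (both inherited from $\rho$ together with the convexity of $\dom{\rho}$) and strongly lsc on $\Lp{\ffff[]}[]$, the last point coming from the closed-acceptance-set requirement built into the risk-perspective $\rrrr_{\eta}$ via \citep[Lemma~2.3]{filipovic2007convex}.

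With this identification the regularity claims are immediate. Moreau's theorem, recalled below~\eqref{eq_Moreau_defn}, states that the envelope of a proper, convex, strongly lsc function on the Hilbert space $\Lp{\ffff[]}[]$ is real-valued, convex, strongly lsc, locally Lipschitz, and Fr\'echet differentiable; finiteness everywhere gives both the leading assertion and the full-domain statement~(i), while the envelope-gradient identity~\eqref{eq_Moreau_defn_relation_prox}, evaluated at $f=\rho^{+}$, yields $\nabla\rho_{+:\eta}(X)=\frac1{\eta}\bigl(X-\rrrr_{\eta}(X)\bigr)$, which is exactly~(ii).

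For~(iii) I would use the monotone behaviour of the envelope in its parameter. For fixed $X$ the map $\eta\mapsto(\rho^{+})^{\eta}(X)$ is nondecreasing as $\eta\downarrow 0$ (the quadratic penalty grows) and is bounded above by $\rho^{+}(X)$ (take $Z=X$ in the infimum); its supremum equals $\rho^{+}(X)$ precisely because $\rho^{+}$ is lsc. This is the same monotone-envelope mechanism, \citep[Proposition~12.33]{ConvexMonoCMS}, already used in Lemma~\ref{lem_Asym_prop_RACE}. Hence $\lim_{\eta\downarrow 0}\rho_{+:\eta}(X)=\rho^{+}(X)$, and this equals $\rho(X)$ whenever $X\in\dom{\rho}$, giving both halves of~(iii).

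Finally,~(iv) is the conjugate-of-infimal-convolution identity. Because the infimal convolution defining $\rho_{+:\eta}$ is exact, the Fenchel--Moreau/Attouch--Br\'ezis machinery invoked in the proof of Proposition~\ref{prop_rob} gives $(\rho_{+:\eta})^{\star}=(\rho^{+})^{\star}+\bigl(\frac1{2\eta}\ee{\|\cdot\|^2}\bigr)^{\star}=(\rho^{+})^{\star}+\frac{\eta}{2}\ee{(\cdot)^2}$; rearranging and matching $(\rho^{+})^{\star}$ with $(\rho_2)^{\star}$ on $\Lp{\ffff[]}[]$ delivers the stated relation. The only genuinely delicate steps, and the main obstacle, are the two cross-space points: confirming that the naive extension $\rho^{+}$ really is proper, convex and strongly lsc on the larger space $\Lp{\ffff[]}[]$ rather than merely on $\Lp{\ffff[]}[][p]$ (the subtlety flagged in Remark~\ref{rem_ordering}), since every step above feeds the Hilbert-space Moreau calculus through $\rho^{+}$ on $\Lp{\ffff[]}[]$, and correspondingly computing the conjugate $(\rho^{+})^{\star}$ on $\Lp{\ffff[]}[]$ so that it lines up with $(\rho_2)^{\star}$ in~(iv).
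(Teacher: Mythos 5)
Your proposal is correct and follows essentially the same route as the paper's proof: the paper likewise identifies $\rrrr_{\eta}$ as a proximal operator via Example~\ref{ex_Lagrangian_mRC}, recognizes $\rho_{+:\eta}$ as the Moreau--Yoshida envelope (\citep[Remark 12.24]{ConvexMonoCMS}), reads off (i)--(iii) from the envelope calculus (\citep[Propositions 12.15, 12.30, 12.33]{ConvexMonoCMS}), and proves (iv) through the conjugate of the exact infimal convolution. The two cross-space points you flag as the main obstacle are handled in the paper exactly as you anticipate: lsc-ity of $\rho^{+}$ on $\Lp{\ffff[]}[]$ is built into the requirement that $\rrrr_{\eta}$ be a risk-perspective, while the matching of $\left(\rho^{+}\right)^{\star}$ with $\left(\rho_2\right)^{\star}$ is supplied by Lemma~\ref{lem_conj} together with the minimal-penalty identity $\left(\rho_2\right)^{\star}=\rho^{\star}|_{\Lp{\ffff[]}[]}$ from \citep[Theorem 3.1]{filipovic2007convex}.
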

The proof of Proposition~\ref{prop_approx_ext}, relies on the following technical Lemma.
\begin{lem}\label{lem_conj}
	Suppose that $f:\Lp{\ffff[]}[][p]$ and $p\geq 2$, then $\left(f^{+}\right)^{\star}=f^{\star}$.  
\end{lem}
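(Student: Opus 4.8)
The plan is to compute both Fenchel--Moreau conjugates straight from the definition and observe that their defining suprema coincide, the only subtlety being a careful matching of the underlying dual spaces. First I would record the two inclusions that drive the argument: on a probability space the hypothesis $p\geq 2$ gives $\Lp{\ffff[]}[][p]\subseteq \Lp{\ffff[]}[]$, while the conjugate exponent $q\triangleq \frac{p}{p-1}$ satisfies $q\leq 2$ and hence yields the dual-side inclusion $\Lp{\ffff[]}[]\subseteq \Lp{\ffff[]}[][q]$. In particular $\operatorname{dom}(f)\subseteq \Lp{\ffff[]}[][p]\subseteq \Lp{\ffff[]}[]$, so from~\eqref{eq_naive_ext} the naive extension satisfies $\operatorname{dom}(f^{+})=\operatorname{dom}(f)$ with $f^{+}=f$ on this common domain.

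Next I would unwind the two conjugates. By definition, for $Z \in \Lp{\ffff[]}[]$,
$$
\left(f^{+}\right)^{\star}(Z)=\sup_{X \in \Lp{\ffff[]}[]}\ee{ZX}-f^{+}(X),
$$
whereas $f^{\star}(Z)=\sup_{X \in \Lp{\ffff[]}[][p]}\ee{ZX}-f(X)$ with $Z$ ranging over the dual $\Lp{\ffff[]}[][q]$. Since both $f^{+}$ and $f$ take the value $\infty$ off $\operatorname{dom}(f)$, in each supremum the objective is $-\infty$ outside $\operatorname{dom}(f)$, so only the points $X\in\operatorname{dom}(f)$ contribute. Hence both conjugates collapse to the single expression $\sup_{X \in \operatorname{dom}(f)}\ee{ZX}-f(X)$.

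It then remains to verify that the admissible conjugate variables and the pairing match. This is exactly where the inclusion $\Lp{\ffff[]}[]\subseteq \Lp{\ffff[]}[][q]$ is used: every $Z\in\Lp{\ffff[]}[]$ is a genuine element of $\Lp{\ffff[]}[][q]=\left(\Lp{\ffff[]}[][p]\right)^{\star}$, and for $X \in \operatorname{dom}(f)\subseteq \Lp{\ffff[]}[][p]\subseteq \Lp{\ffff[]}[]$ the bracket $\ee{ZX}$ is finite by Cauchy--Schwarz (both factors lie in $\Lp{\ffff[]}[]$) and coincides with the $\Lp{\ffff[]}[][p]$--$\Lp{\ffff[]}[][q]$ duality pairing. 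Consequently $\left(f^{+}\right)^{\star}(Z)=f^{\star}(Z)$ for every $Z \in \Lp{\ffff[]}[]$, which is the asserted identity (understood, as in Proposition~\ref{prop_approx_ext}(iv), with $f^{\star}$ restricted to $\Lp{\ffff[]}[]$).

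I do not anticipate a genuine analytic obstacle here; the work is entirely bookkeeping of domains and dual spaces. The one point requiring care is to confirm that passing from the supremum over $\Lp{\ffff[]}[][p]$ to the supremum over the larger $\Lp{\ffff[]}[]$ adds nothing, which holds precisely because the extra points $X \in \Lp{\ffff[]}[]\setminus \Lp{\ffff[]}[][p]$ all lie outside $\operatorname{dom}(f)=\operatorname{dom}(f^{+})$ and therefore contribute $-\infty$. Notably, no convexity or lower semicontinuity of $f$ is needed, since the identity is driven purely by the coincidence of domains.
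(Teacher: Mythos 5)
Your proof is correct and follows essentially the same route as the paper's: both compute the two conjugates directly from the definition, observe that each supremum collapses to $\operatorname{dom}(f)$ where $f^{+}=f$, and conclude the identity on $\Lp{\ffff[]}[]$. Your version is in fact slightly more careful than the paper's, which leaves the dual-space bookkeeping (that $\Lp{\ffff[]}[]\subseteq \Lp{\ffff[]}[][q]$ so the pairing is legitimate) implicit and contains a typo in its opening display, but this is a matter of polish rather than a different argument.
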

\begin{proof}
	For all $X \in \Lp{\ffff[]}[]$,
	\begin{align}
	\left(f^{+}\right)^{\star}(X) = & \sup_{Z \in \Lp{\ffff[]}[][p]}\ee{ZX} - f^{+}(Z).
	\label{eq_optim_simpl}
	\end{align}
	However, if $Z \in \Lp{\ffff[]}[][2]-\Lp{\ffff[]}[][p]$ then $-f^{+}=-\infty$, hence the optimization of~\eqref{eq_optim_simpl} may be restricted to $\Lp{\ffff[]}[][p]$.  Thus~\eqref{eq_optim_simpl} simplifies to
	\begin{align}
	\left(f^{+}\right)^{\star}(X) = & \sup_{Z \in \Lp{\ffff[]}[][p]}\ee{ZX} - f^{+}(Z)\\
	= & \sup_{Z \in \Lp{\ffff[]}[]}\ee{ZX} - f^{+}(Z)\\
	= & \sup_{Z \in \Lp{\ffff[]}[]}\ee{ZX} - f(Z),
	\label{eq_optim_simpl_1}
	\end{align}
	where the fact that $f^{+}|_{\Lp{\ffff[]}[][p]}=f$ was used in the last line.  
\end{proof}
\begin{proof}[Proof of Proposition~\ref{prop_approx_ext}]
	Since $M_0$ then $\lambda=1$.  Therefore, from Example~\eqref{ex_Lagrangian_mRC} it follows that $\rrrr_{\eta}(X)$ must solve
	\begin{equation}
	\begin{aligned}
	\rrrr_{\eta}(X) %
	= & \arginf{Z \in \Lp{\ffff[]}[d] }[1]
	\frac1{2}
	\ee{
		\left\|
		X-Z
		\right\|^2
	}
	+
	\eta
	\rho(X)\\
	=&
	\operatorname{Prox}_{\eta\rho+\iota_{\Lp{\gggg}}[d]}(X)
	.
	\end{aligned}
	\label{eq_lalaland_reparameterizos}
	\end{equation}
	From~\citep[Remark 12.24]{ConvexMonoCMS}, it follows that $\rho_{+:\eta}(X)$ is the Moreau-Yoshida envelope of $\eta\rho+\iota_{\Lp{\gggg}}[d]$ evaluated at $X$.  Therefore (i)-(iii) follow from \cite[Propositions 12.15,%
	12.30,12.33]{ConvexMonoCMS}.

	For Property (iv), applying \cite[Proposition 15.1]{ConvexMonoCMS} to the infimal convolution defined by~\eqref{eq_lalaland_reparameterizos} we obtain
	\begin{equation}
	\rho_{+:\eta}^{\star}(X) = \left(
	\frac1{2\eta}
	\ee{
		X^2    
	}\right)^{\star}
	+
	\left(\rho^+\right)^{\star}(X)
	\label{eq_lalaland_2_inf_conv_simpl_more_0}
	.
	\end{equation}
	Applying Lemma~\ref{lem_conj} to~\eqref{eq_lalaland_2_inf_conv_simpl_more_0} yields
	\begin{equation}
	\rho_{+:\eta}^{\star}(X) = \left(
	\frac1{2\eta}
	\ee{
		X^2    
	}\right)^{\star}
	+
	\rho^{\star}(X)
	\label{eq_lalaland_2_inf_conv_simpl_more}
	.
	\end{equation}
	Applying the characterization of self-conjugacy of \cite[Proposition 13.19]{ConvexMonoCMS} the scaling transformation laws for Legendre-Fenchel conjugacy to~\eqref{eq_lalaland_2_inf_conv_simpl_more} yields
	\begin{equation}
	\begin{aligned}
	\rho_{+:\eta}^{\star}(X) = &
	\left(
	\frac1{2\eta}
	\ee{
		X^2    
	}\right)^{\star}
	+
	\rho^{\star}(X)
	\\
	= &
	\frac{1}{2\eta
	}
	\ee{
		\left(\eta X\right)^2
	}
	+
	\rho^{\star}(X)
	.
	\end{aligned}
	\label{eq_lalaland_2_inf_conv_simpl_even_more}
	\end{equation}
	
	\cite[Theorem 3.1]{filipovic2007convex}, implies that since $\rho$ is convex on $\Lp{\ffff[]}[]$, then the minimal penalty of $\rho_2$ satisfies
	\begin{equation}
	\left(\rho_2\right)^{\star}=\rho^{\star}|_{\Lp{\ffff[]}[]}
	\label{eq_lelink}
	.
	\end{equation}
	Combining~\eqref{eq_lalaland_2_inf_conv_simpl_even_more} and~\eqref{eq_lelink} yields (iv).  
\end{proof}
Therefore, even though $\rho_{+:\eta}$ is not a risk-measure, it provides a regular extension of $\rho$ to $\Lp{\ffff[]}[]$ and approximates $\rho^+$ and the minimal penalty of $\rho_2$, to arbitrary precision.

Next, a numerical implementation comparing the risk-neutral price of an option to the risk-averse price is considered.  
\section{Example: Risk-Averse Vanilla Option Value}\label{s_example_RA_Vanilla}
A natural first example is to examine the behaviour of the risk-averse value, as compared to the risk-neutral value of a plain vanilla option in the classical setting of \cite{black1973pricing}.  In this setting, the price of a stock is assumed to satisfy a geometric Brownian motion,
$$
X_t = X_0\exp\left(
\left(\mu - \frac{\sigma^2}{2} \right)t + \sigma W_t
\right);\qquad X_0=x_0
$$
where the initial price $x_0$ is a positive real number, $W_t$ is an $\rr$-valued Brownian motion, the stock's drift $\mu$ is any real constant, and its volatility $\sigma$ is a positive number.  

The \cite[Black-Scholes-Merton formula]{black1973pricing}, gives an expression for the price of a European Call (resp. Put) option, whose (discounted) payoff at the maturity time $T>0$, is given by
$$
f(x)\triangleq e^{-rT}\max\{x-K,0\}
.
$$
Here $K>0$, is the predetermined strike price, set at the issue of the derivative contract, and $r\geq 0$, is the risk-free rate in effect.  In the setting of \cite{black1973pricing}, it is assumed that $r$ is constant.  
For the remainder of this example, the risk-perspective will be $\rrrr\triangleq \left(
\ffff[0],\rho_2,\utility_{\hhh},\Lp{\ffff[]},M_{\lambda}
\right)$.  

For the first implementation, set $T=10$ years, the interest rate $r=.0175$, $K=.2$, $X_0=1$, and $\sigma=.1$. The parameters used in the algorithm of Corollary~\ref{cor_simplified_algo}, will be $\gamma=.7$, the forward-backwards splitting procedure will stop after $10^4$ iterations, and in each Monte-Carlo estimation of the conditional expectation under $\qq$ will use $10^4$ particles.  

As illustrated by Figure~\ref{fig_Call_evolution_lambda}, as the value of the Vanilla option decreases as the
mispricing risk-aversion level $M_{\lambda}$, level increases.  Moreover, when $\lambda\approx0$, the risk-averse value is approximately equal to the classical risk-neutral price of the Vanilla option.  
\begin{figure}[H]%
	\centering
	\includegraphics[width=0.8\textwidth]{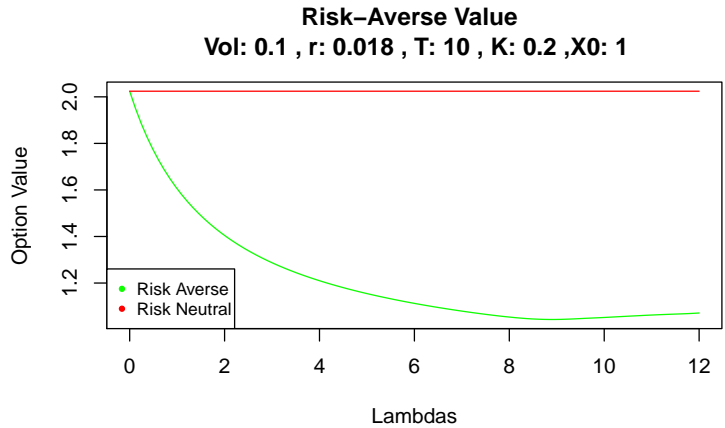}
	\label{fig_Call_evolution_lambda}
	\caption{The Lambda axis is parameterized according to $\tilde{\lambda}=\frac{2\lambda}{1-\lambda}$}
\end{figure}

Tables~\ref{Comparison_of_Vanilla_Call_5},~\ref{Comparison_of_Vanilla_Call_4}~\ref{Comparison_of_Vanilla_Call_3} %
compare
the quantities $V^{\rho_2^{\utility_{\hhh}}}$, $\rho_2\left(\rrrr({f(X_T)})\right)$, and 
$\epsilon^{f}_T(Z)$ %
will respectively.  For legibility, these will be respectively denoted by \textit{Option Value}, \textit{Estimator Risk}, and \textit{Mispricing Risk}.  

\begin{table}[H]
	\centering
	\begin{tabular}{rrrr}
		\hline
		& Risk-Averse & Risk-Neutral & RA/RN \\ 
		\hline
		Option Value & 1.993 & 2.056 & 0.970 \\ 
		Estimator Risk & -1.567 & -1.538 & 1.019 \\ 
		Mispricing Risk & 0.800 & 0.846 & 0.946 \\ 
		\hline
	\end{tabular}
	\caption{Lambda= 0.1} 
	\label{Comparison_of_Vanilla_Call_5}
\end{table}
\begin{table}[H]
	\centering
	\begin{tabular}{rrrr}
		\hline
		& Risk-Averse & Risk-Neutral & RA/RN \\ 
		\hline
		Option Value & 1.767 & 2.028 & 0.871 \\ 
		Estimator Risk & -1.512 & -1.478 & 1.023 \\ 
		Mispricing Risk & 0.681 & 0.870 & 0.783 \\ 
		\hline
	\end{tabular}
	\caption{Lambda= 0.5} 
	\label{Comparison_of_Vanilla_Call_4}
\end{table}
\begin{table}[H]
	\centering
	\begin{tabular}{rrrr}
		\hline
		& Risk-Averse & Risk-Neutral & RA/RN \\ 
		\hline
		Option Value & 1.619 & 2.046 & 0.791 \\ 
		Estimator Risk & -1.604 & -1.586 & 1.012 \\ 
		Mispricing Risk & 0.508 & 0.802 & 0.633 \\ 
		\hline
	\end{tabular}
	\caption{Lambda= 1} 
	\label{Comparison_of_Vanilla_Call_3}
\end{table}
The preceding tables show that as $\tilde{\lambda}\triangleq \frac{\lambda}{1-\lambda}$ increases, the risk-averse value of the option decreases.  This reflect the investors' disinterest in purchasing an option if there is a high amount of mispricing risk.  As the investor's risk-aversion level increases, then so does their perceived value of the option.  Here, the relationship between $\lambda$ and the mispricing-risk aversion level, described by~\eqref{key}, is implicitly exploited.  

We observe that a reduction in the risk of the mispricing-risk does not imply a reduction in the risk of the estimator $\rho_2\left(\rrrr({f(X_T)})\right)$; actually, the opposite trend is seen.  Namely, that a reduction in the mispricing-risk, increases the inherent riskiness of $V^{\rrrr}$ itself.

\section{Conclusion}\label{s_conclusion}
This paper introduced a non-linear extension of the conditional expectation operator on $\Lp{\ffff[]}$, which served as an alternative to sublinear expectation, called the $\rrrr$-conditioning operator.  
Theorems~\ref{thrm_existsnce_uniqueness} and~\ref{thrm_Lagrangian_Formulation}, proved that the $\rrrr$-conditioning operator is well-defined %
and solves~\eqref{PG}.  

Theorem~\ref{thrm_risk_reduction} showed that $\rrrr$-conditioning provides a middle ground between the classical risk-neutral pricing problem~\eqref{RN} and the classical robust finance extension of it.  This middle ground is interpreted as the investor having partial uncertainty towards their model, instead of full certainty or full uncertainty.  Moreover, in Theorem~\ref{thrm_Approximation_Theorem} it was proven that as the parameter $\lambda$, quantifying the degree of model uncertainty, approaches its maximum $\rrrr(X)$ converges to a (penalized) conditional sublinear expectation of $X$, in the $L^2$-sense.  In this way, by appealing to the computational tools available for computing $\rrrr(X)$, we may approximate conditional sublinear expectations to arbitrary precision by turning to $\rrrr(X)$.  

The $\rrrr$-conditioning operators, introduced in this paper, provide a natural way to incorporate various levels of risk-aversion into a derivative security's pricing scheme, under the assumption of market completeness.  Theorem~\ref{thrm_compute} provided a strongly convergent, forward-backward splitting algorithm for computing this price, as well as the $\rrrr$-conditioning in general.  The algorithm was implemented and used to price a plain Vanilla Call in the Black-Scholes-Merton framework.  We found that the risk-averse price of the option decreases as the risk-aversion level increased; which is consistent with our economic intuitions.  

We explored examples of $\rrrr$-conditioning operators which were well-suited to high-dimensional data.  In Proposition~\ref{prop_sparse_cond_exp}, it was proven that these operators have a high probability of providing sparser estimates than their classical conditional expectation counterparts.  We called this \textit{sparse conditional expectations}, and they are computed using the strongly convergent algorithm provided by Proposition~\ref{prop_computable}.  

In further applications, we showed that $\rrrr$-conditioning could be used to provide smooth approximations to convex risk-measures, which was interpreted through the lens of robust finance in two ways.  The first of these two interpretations was a robust representation for the optimal value of $\riskutility\left(\eeRA{X}\right)$, given in Proposition~\ref{prop_rob}, and the second, is shown in Proposition~\ref{prop_rob_neutral}, was the neutrality of $\eeRA{X}$ for a robust sensitivity.  

We believe that the regularity, connections to convex analysis, approximation capabilities, and the scope of $\rrrr$-conditioning make it a refreshing new object of study in mathematical finance.  Furthermore, we expect that it applies to any problem where partial uncertainty arises, as well as other applications not considered in this paper.  

\bibliographystyle{abbrvnat}
\bibliography{References}

\begin{thebibliography}{71}
\providecommand{\natexlab}[1]{#1}
\providecommand{\url}[1]{\texttt{#1}}
\expandafter\ifx\csname urlstyle\endcsname\relax
  \providecommand{\doi}[1]{doi: #1}\else
  \providecommand{\doi}{doi: \begingroup \urlstyle{rm}\Url}\fi

\bibitem[Acciaio and Penner(2011)]{acciaio2011dynamic}
B.~Acciaio and I.~Penner.
\newblock Dynamic risk measures.
\newblock In \emph{Advanced mathematical methods for finance}, pages 1--34.
  Springer, 2011.

\bibitem[Acharya et~al.(2017)Acharya, Pedersen, Philippon, and
  Richardson]{acharya2017measuring}
V.~V. Acharya, L.~H. Pedersen, T.~Philippon, and M.~Richardson.
\newblock Measuring systemic risk.
\newblock \emph{The Review of Financial Studies}, 30\penalty0 (1):\penalty0
  2--47, 2017.

\bibitem[Aminiy et~al.(2013)Aminiy, Filipovic, and Minca]{aminiy2013systemic}
H.~Aminiy, D.~Filipovic, and A.~Minca.
\newblock \emph{Systemic risk with central counterparty clearing}.
\newblock University of Geneva Geneva, 2013.

\bibitem[Balin(2008)]{balin2008basel}
B.~J. Balin.
\newblock Basel i, basel ii, and emerging markets: A nontechnical analysis.
\newblock \emph{Available at SSRN 1477712}, 2008.

\bibitem[Basel(2004)]{basel2004international}
I.~Basel.
\newblock International convergence of capital measurement and capital
  standards: a revised framework.
\newblock \emph{Bank for international settlements}, 2004.

\bibitem[Bauschke and Combettes(2019)]{ConvexMonoCMS}
H.~H. Bauschke and P.~L. Combettes.
\newblock \emph{Convex analysis and monotone operator theory in {H}ilbert
  spaces}.
\newblock CMS Books in Mathematics/Ouvrages de Math\'{e}matiques de la SMC.
  Springer, New York, 2019.
\newblock With a foreword by H\'{e}dy Attouch.

\bibitem[Biagini and Frittelli(2006)]{biagini2006continuity}
S.~Biagini and M.~Frittelli.
\newblock On continuity properties and dual representation of convex and
  monotone functionals on fr{\'e}chet lattices.
\newblock \emph{preprint}, 2006.

\bibitem[Biagini et~al.(2008)Biagini, Frittelli, et~al.]{biagini2008unified}
S.~Biagini, M.~Frittelli, et~al.
\newblock A unified framework for utility maximization problems: an orlicz
  space approach.
\newblock \emph{The Annals of Applied Probability}, 18\penalty0 (3):\penalty0
  929--966, 2008.

\bibitem[Billio et~al.(2012)Billio, Getmansky, Lo, and
  Pelizzon]{billio2012econometric}
M.~Billio, M.~Getmansky, A.~W. Lo, and L.~Pelizzon.
\newblock Econometric measures of connectedness and systemic risk in the
  finance and insurance sectors.
\newblock \emph{Journal of financial economics}, 104\penalty0 (3):\penalty0
  535--559, 2012.

\bibitem[Bioucas-Dias et~al.(2012)Bioucas-Dias, Plaza, Dobigeon, Parente, Du,
  Gader, and Chanussot]{bioucas2012hyperspectral}
J.~M. Bioucas-Dias, A.~Plaza, N.~Dobigeon, M.~Parente, Q.~Du, P.~Gader, and
  J.~Chanussot.
\newblock Hyperspectral unmixing overview: Geometrical, statistical, and sparse
  regression-based approaches.
\newblock \emph{IEEE journal of selected topics in applied earth observations
  and remote sensing}, 5\penalty0 (2):\penalty0 354--379, 2012.

\bibitem[Black and Scholes(1973)]{black1973pricing}
F.~Black and M.~Scholes.
\newblock The pricing of options and corporate liabilities.
\newblock \emph{Journal of political economy}, 81\penalty0 (3):\penalty0
  637--654, 1973.

\bibitem[Boyle and Dykstra(1986)]{boyle1986method}
J.~P. Boyle and R.~L. Dykstra.
\newblock A method for finding projections onto the intersection of convex sets
  in hilbert spaces.
\newblock In \emph{Advances in order restricted statistical inference}, pages
  28--47. Springer, 1986.

\bibitem[Braides(2002{\natexlab{a}})]{BraideBegin}
A.~Braides.
\newblock \emph{{$\Gamma$}-convergence for beginners}, volume~22 of
  \emph{Oxford Lecture Series in Mathematics and its Applications}.
\newblock Oxford University Press, Oxford, 2002{\natexlab{a}}.
\newblock ISBN 0-19-850784-4.
\newblock \doi{10.1093/acprof:oso/9780198507840.001.0001}.
\newblock URL \url{https://doi.org/10.1093/acprof:oso/9780198507840.001.0001}.

\bibitem[Braides(2002{\natexlab{b}})]{braides2002gamma}
A.~Braides.
\newblock \emph{Gamma-convergence for Beginners}, volume~22.
\newblock Clarendon Press, 2002{\natexlab{b}}.

\bibitem[Breymann et~al.(2003)Breymann, Dias, and
  Embrechts]{breymann2003dependence}
W.~Breymann, A.~Dias, and P.~Embrechts.
\newblock Dependence structures for multivariate high-frequency data in
  finance.
\newblock 2003.

\bibitem[Cass and Friz(2006)]{cass2006bismut}
T.~R. Cass and P.~K. Friz.
\newblock The bismut-elworthy-li formula for jump-diffusions and applications
  to monte carlo pricing in finance.
\newblock \emph{arXiv preprint math/0604311}, 2006.

\bibitem[Cheridito and Li(2008)]{cheridito2008dual}
P.~Cheridito and T.~Li.
\newblock Dual characterization of properties of risk measures on orlicz
  hearts.
\newblock \emph{Mathematics and Financial Economics}, 2\penalty0 (1):\penalty0
  29, 2008.

\bibitem[Cheridito and Li(2009)]{cheridito2009risk}
P.~Cheridito and T.~Li.
\newblock Risk measures on orlicz hearts.
\newblock \emph{Mathematical Finance}, 19\penalty0 (2):\penalty0 189--214,
  2009.

\bibitem[Ciliberti et~al.(2007)Ciliberti, Kondor, and
  M{\'e}zard]{ciliberti2007feasibility}
S.~Ciliberti, I.~Kondor, and M.~M{\'e}zard.
\newblock On the feasibility of portfolio optimization under expected
  shortfall.
\newblock \emph{Quantitative Finance}, 7\penalty0 (4):\penalty0 389--396, 2007.

\bibitem[Cohen et~al.(2011)Cohen, Ji, and Peng]{cohen2011sublinear}
S.~Cohen, S.~Ji, and S.~Peng.
\newblock Sublinear expectations and martingales in discrete time.
\newblock \emph{arXiv preprint arXiv:1104.5390}, 2011.

\bibitem[Cohen()]{cohenqs}
S.~N. Cohen.
\newblock Quasi-sure analysis, aggregation and dual representations of
  sublinear expectations in general spaces.
\newblock \emph{Electron. J. Probab.}, 17:\penalty0 no. 62, 15.
\newblock ISSN 1083-6489.
\newblock \doi{10.1214/EJP.v17-2224}.

\bibitem[Combettes and Wajs(2005)]{combettes2005signal}
P.~L. Combettes and V.~R. Wajs.
\newblock Signal recovery by proximal forward-backward splitting.
\newblock \emph{Multiscale Modeling \& Simulation}, 4\penalty0 (4):\penalty0
  1168--1200, 2005.

\bibitem[Committee et~al.(2013)]{basel2013fundamental}
B.~Committee et~al.
\newblock Fundamental review of the trading book: A revised market risk
  framework.
\newblock \emph{Consultative Document, October}, 2013.

\bibitem[Coquet et~al.(2002)Coquet, Hu, M{\'e}min, and
  Peng]{coquet2002filtration}
F.~Coquet, Y.~Hu, J.~M{\'e}min, and S.~Peng.
\newblock Filtration-consistent nonlinear expectations and related
  g-expectations.
\newblock \emph{Probability Theory and Related Fields}, 123\penalty0
  (1):\penalty0 1--27, 2002.

\bibitem[Dal~Maso(2012)]{dal2012introduction}
G.~Dal~Maso.
\newblock \emph{An introduction to $\Gamma$-convergence}, volume~8.
\newblock Springer Science \& Business Media, 2012.

\bibitem[De~Giorgi(1975)]{ADeGiorgiOriginal}
E.~De~Giorgi.
\newblock Sulla convergenza di alcune successioni d'integrali del tipo
  dell'area.
\newblock \emph{Rend. Mat. (6)}, 8:\penalty0 277--294, 1975.
\newblock ISSN 0034-4427.
\newblock Collection of articles dedicated to Mauro Picone on the occasion of
  his ninetieth birthday.

\bibitem[Delbaen(2002)]{delbaen2002coherent}
F.~Delbaen.
\newblock Coherent risk measures on general probability spaces.
\newblock In \emph{Advances in finance and stochastics}, pages 1--37. Springer,
  2002.

\bibitem[Delbaen and Schachermayer(1994)]{delbaen1994general}
F.~Delbaen and W.~Schachermayer.
\newblock A general version of the fundamental theorem of asset pricing.
\newblock \emph{Mathematische annalen}, 300\penalty0 (1):\penalty0 463--520,
  1994.

\bibitem[Delbaen et~al.(2002)Delbaen, Grandits, Rheinl{\"a}nder, Samperi,
  Schweizer, and Stricker]{delbaen2002exponential}
F.~Delbaen, P.~Grandits, T.~Rheinl{\"a}nder, D.~Samperi, M.~Schweizer, and
  C.~Stricker.
\newblock Exponential hedging and entropic penalties.
\newblock \emph{Mathematical finance}, 12\penalty0 (2):\penalty0 99--123, 2002.

\bibitem[Devolder et~al.(2014)Devolder, Glineur, and
  Nesterov]{devolder2014first}
O.~Devolder, F.~Glineur, and Y.~Nesterov.
\newblock First-order methods of smooth convex optimization with inexact
  oracle.
\newblock \emph{Mathematical Programming}, 146\penalty0 (1-2):\penalty0 37--75,
  2014.

\bibitem[Di~Nunno et~al.(2009)Di~Nunno, {\O}ksendal, and
  Proske]{di2009malliavin}
G.~Di~Nunno, B.~K. {\O}ksendal, and F.~Proske.
\newblock \emph{Malliavin calculus for L{\'e}vy processes with applications to
  finance}, volume~2.
\newblock Springer, 2009.

\bibitem[Dias et~al.(2004)Dias, Embrechts, et~al.]{dias2004dynamic}
A.~Dias, P.~Embrechts, et~al.
\newblock Dynamic copula models for multivariate high-frequency data in
  finance.
\newblock \emph{Manuscript, ETH Zurich}, 81, 2004.

\bibitem[Donoho et~al.(2000)]{donoho2000high}
D.~L. Donoho et~al.
\newblock High-dimensional data analysis: The curses and blessings of
  dimensionality.
\newblock \emph{AMS math challenges lecture}, 1\penalty0 (32):\penalty0 375,
  2000.

\bibitem[Duchi et~al.(2008)Duchi, Shalev-Shwartz, Singer, and
  Chandra]{duchi2008efficient}
J.~Duchi, S.~Shalev-Shwartz, Y.~Singer, and T.~Chandra.
\newblock Efficient projections onto the l 1-ball for learning in high
  dimensions.
\newblock In \emph{Proceedings of the 25th international conference on Machine
  learning}, pages 272--279. ACM, 2008.

\bibitem[Dupire(2009)]{dupire2009functional}
B.~Dupire.
\newblock Functional it{\^o} calculus.
\newblock \emph{Bloomberg portfolio research paper}, \penalty0 (2009-04), 2009.

\bibitem[El-Khatib and Privault(2004)]{el2004computations}
Y.~El-Khatib and N.~Privault.
\newblock Computations of greeks in a market with jumps via the malliavin
  calculus.
\newblock \emph{Finance and Stochastics}, 8\penalty0 (2):\penalty0 161--179,
  2004.

\bibitem[Evans(1990)]{evans1990weak}
L.~C. Evans.
\newblock \emph{Weak convergence methods for nonlinear partial differential
  equations}.
\newblock Number~74. American Mathematical Soc., 1990.

\bibitem[Filipovic and Svindland(2007)]{filipovic2007convex}
D.~Filipovic and G.~Svindland.
\newblock Convex risk measures on {Lp}.
\newblock \emph{Preprint}, 2007.
\newblock URL \url{www.math.lmu.de/filipo/PAPERS/crmlp}.

\bibitem[Focardi(2012{\natexlab{a}})]{GammaMEgapackSoNiceandConvenient}
M.~Focardi.
\newblock {$\Gamma$}-convergence: a tool to investigate physical phenomena
  across scales.
\newblock \emph{Math. Methods Appl. Sci.}, 35\penalty0 (14):\penalty0
  1613--1658, 2012{\natexlab{a}}.
\newblock ISSN 0170-4214.
\newblock \doi{10.1002/mma.2551}.
\newblock URL \url{https://doi.org/10.1002/mma.2551}.

\bibitem[Focardi(2012{\natexlab{b}})]{focardi2012gamma}
M.~Focardi.
\newblock $\gamma$-convergence: a tool to investigate physical phenomena across
  scales.
\newblock \emph{Mathematical Methods in the Applied Sciences}, 35\penalty0
  (14):\penalty0 1613--1658, 2012{\natexlab{b}}.

\bibitem[Fournie(2010)]{fournie2010functional}
D.-A. Fournie.
\newblock \emph{Functional Ito calculus and applications}.
\newblock Columbia University, 2010.

\bibitem[Gao and Xanthos(2018)]{gao2018c}
N.~Gao and F.~Xanthos.
\newblock On the c-property and-representations of risk measures.
\newblock \emph{Mathematical Finance}, 28\penalty0 (2):\penalty0 748--754,
  2018.

\bibitem[Gilli and K{\"e}llezi(2002)]{gilli2002global}
M.~Gilli and E.~K{\"e}llezi.
\newblock A global optimization heuristic for portfolio choice with var and
  expected shortfall.
\newblock In \emph{Computational methods in decision-making, economics and
  finance}, pages 167--183. Springer, 2002.

\bibitem[Gonon and Teichmann(2019)]{gonon2019linearized}
L.~Gonon and J.~Teichmann.
\newblock Linearized filtering of affine processes using stochastic riccati
  equations.
\newblock \emph{Stochastic Processes and their Applications}, 2019.

\bibitem[Jazaerli and Saporito(2017)]{jazaerli2017functional}
S.~Jazaerli and Y.~F. Saporito.
\newblock Functional it{\^o} calculus, path-dependence and the computation of
  greeks.
\newblock \emph{Stochastic Processes and their Applications}, 127\penalty0
  (12):\penalty0 3997--4028, 2017.

\bibitem[Jeanblanc et~al.(2012)Jeanblanc, Mania, Santacroce, Schweizer,
  et~al.]{jeanblanc2012mean}
M.~Jeanblanc, M.~Mania, M.~Santacroce, M.~Schweizer, et~al.
\newblock Mean-variance hedging via stochastic control and bsdes for general
  semimartingales.
\newblock \emph{The Annals of Applied Probability}, 22\penalty0 (6):\penalty0
  2388--2428, 2012.

\bibitem[Jeyakumar and Wolkowicz(1992)]{jeyakumar1992generalizations}
V.~Jeyakumar and H.~Wolkowicz.
\newblock Generalizations of slater's constraint qualification for infinite
  convex programs.
\newblock \emph{Mathematical Programming}, 57\penalty0 (1-3):\penalty0 85--101,
  1992.

\bibitem[Jiao et~al.(2017)Jiao, Klopfenstein, and Tankov]{jiao2017hedging}
Y.~Jiao, O.~Klopfenstein, and P.~Tankov.
\newblock Hedging under multiple risk constraints.
\newblock \emph{Finance and Stochastics}, 21\penalty0 (2):\penalty0 361--396,
  2017.

\bibitem[Kaina and R\"{u}schendorf(2009)]{KainaCRMs}
M.~Kaina and L.~R\"{u}schendorf.
\newblock On convex risk measures on {$L^p$}-spaces.
\newblock \emph{Math. Methods Oper. Res.}, 69\penalty0 (3):\penalty0 475--495,
  2009.
\newblock ISSN 1432-2994.
\newblock \doi{10.1007/s00186-008-0248-3}.
\newblock URL \url{https://doi.org/10.1007/s00186-008-0248-3}.

\bibitem[Kull(2014)]{kull2014portfolio}
M.~Kull.
\newblock \emph{Portfolio optimization for constrained shortfall risk:
  Implementation and IT Architecture considerations}.
\newblock PhD thesis, M. Sc. Thesis, ETH Z{\"u}rich, 2014.

\bibitem[Kuratowski(2014)]{kuratowski2014topology}
K.~Kuratowski.
\newblock \emph{Topology}, volume~1.
\newblock Elsevier, 2014.

\bibitem[Kusuoka(2004)]{kusuoka2004approximation}
S.~Kusuoka.
\newblock Approximation of expectation of diffusion processes based on lie
  algebra and malliavin calculus.
\newblock In \emph{Advances in mathematical economics}, pages 69--83. Springer,
  2004.

\bibitem[Markowitz(1952)]{markowitz1952portfolio}
H.~Markowitz.
\newblock Portfolio selection.
\newblock \emph{The journal of finance}, 7\penalty0 (1):\penalty0 77--91, 1952.

\bibitem[McNeil et~al.(2005)McNeil, Frey, Embrechts,
  et~al.]{mcneil2005quantitative}
A.~J. McNeil, R.~Frey, P.~Embrechts, et~al.
\newblock \emph{Quantitative risk management: Concepts, techniques and tools},
  volume~3.
\newblock Princeton university press Princeton, 2005.

\bibitem[Moreau(1965)]{moreau1965proximite}
J.-J. Moreau.
\newblock Proximit{\'e} et dualit{\'e} dans un {\'e}space {H}ilbertien.
\newblock \emph{Bull. Soc. Math. France}, 93\penalty0 (2):\penalty0 273--299,
  1965.

\bibitem[Moreau(1966)]{moreau1966fonctionnelles}
J.-J. Moreau.
\newblock Fonctionnelles convexes.
\newblock \emph{S{\'e}minaire Jean Leray}, \penalty0 (2):\penalty0 1--108,
  1966.

\bibitem[Nualart(2006)]{nualart2006malliavin}
D.~Nualart.
\newblock \emph{The Malliavin calculus and related topics}, volume 1995.
\newblock Springer, 2006.

\bibitem[Nualart and Zakai(1986)]{nualart1986generalized}
D.~Nualart and M.~Zakai.
\newblock Generalized stochastic integrals and the malliavin calculus.
\newblock \emph{Probability theory and related fields}, 73\penalty0
  (2):\penalty0 255--280, 1986.

\bibitem[Peng(2010)]{peng2010nonlinear}
S.~Peng.
\newblock Nonlinear expectations and stochastic calculus under uncertainty.
\newblock \emph{arXiv preprint arXiv:1002.4546}, 2010.

\bibitem[Rockafellar and Wets(2009)]{rockafellar2009variational}
R.~T. Rockafellar and R.~J.-B. Wets.
\newblock \emph{Variational analysis}, volume 317.
\newblock Springer Science \& Business Media, 2009.

\bibitem[Schweizer(1988)]{schweizer1988hedging}
M.~Schweizer.
\newblock \emph{Hedging of options in a general semimartingale model}.
\newblock PhD thesis, ETH Zurich, 1988.

\bibitem[Schweizer(1991)]{schweizer1991option}
M.~Schweizer.
\newblock Option hedging for semimartingales.
\newblock \emph{Stochastic processes and their Applications}, 37\penalty0
  (2):\penalty0 339--363, 1991.

\bibitem[Schweizer(1995)]{schweizer1995variance}
M.~Schweizer.
\newblock Variance-optimal hedging in discrete time.
\newblock \emph{Mathematics of Operations Research}, 20\penalty0 (1):\penalty0
  1--32, 1995.

\bibitem[Schweizer(2010)]{schweizer2010mean}
M.~Schweizer.
\newblock Mean--variance hedging.
\newblock \emph{Encyclopedia of quantitative finance}, 2010.

\bibitem[Soner et~al.(2011)Soner, Touzi, and Zhang]{soner2011martingale}
H.~M. Soner, N.~Touzi, and J.~Zhang.
\newblock Martingale representation theorem for the g-expectation.
\newblock \emph{Stochastic Processes and their Applications}, 121\penalty0
  (2):\penalty0 265--287, 2011.

\bibitem[Str{\"o}mberg(1994)]{stromberg1994study}
T.~Str{\"o}mberg.
\newblock \emph{A study of the operation of infimal convolution}.
\newblock PhD thesis, Lule{\aa} tekniska universitet, 1994.

\bibitem[Sullivan(2011)]{timsully}
T.~Sullivan.
\newblock A hitchhicker's guide to $\gamma$-convergence.
\newblock 2011.

\bibitem[Takeuchi(2010)]{takeuchi2010bismut}
A.~Takeuchi.
\newblock Bismut--elworthy--li-type formulae for stochastic differential
  equations with jumps.
\newblock \emph{Journal of Theoretical Probability}, 23\penalty0 (2):\penalty0
  576--604, 2010.

\bibitem[Teichmann(2005)]{teichmann2005calculating}
J.~Teichmann.
\newblock Calculating the greeks by cubature formulae.
\newblock \emph{Proceedings of the Royal Society A: Mathematical, Physical and
  Engineering Sciences}, 462\penalty0 (2066):\penalty0 647--670, 2005.

\bibitem[Tibshirani(1996)]{tibshirani1996regression}
R.~Tibshirani.
\newblock Regression shrinkage and selection via the lasso.
\newblock \emph{Journal of the Royal Statistical Society: Series B
  (Methodological)}, 58\penalty0 (1):\penalty0 267--288, 1996.

\bibitem[Zou et~al.(2006)Zou, Hastie, and Tibshirani]{zou2006sparse}
H.~Zou, T.~Hastie, and R.~Tibshirani.
\newblock Sparse principal component analysis.
\newblock \emph{Journal of computational and graphical statistics}, 15\penalty0
  (2):\penalty0 265--286, 2006.

\end{thebibliography}

\end{document}